\theoremstyle{definition}
\newtheorem{definition}{Definition}[section]
\newtheorem{theorem}{Theorem}[section]
\newtheorem{proposition}{Proposition}[section]
\newtheorem{remark}{Remark}[section]
\numberwithin{equation}{section}
\newenvironment{hproof}{%
  \proof}{\endproof}
\newcommand{\be}{\begin{equation}}
\newcommand{\ee}{\end{equation}}
\newcommand{\D}{\ensuremath{q}} 
\newcommand{\CC}[1]{\textcolor{black}{(CC) #1}}
\DeclareMathOperator{\Tr}{Tr}
\begin{document}

\title{Quantum Lego Expansion Pack: Enumerators from Tensor Networks}

\author{ChunJun Cao}
\affiliation{Joint Center for Quantum Information and Computer Science, NIST/University of Maryland, College Park, MD  20742, USA}
\affiliation{Institute for Quantum Information and Matter, Caltech, Pasadena, CA 91125, USA}
\affiliation{Department of Physics, Virginia Tech, Blacksburg, VA 24060, USA}

\author{Michael J. Gullans}
\affiliation{Joint Center for Quantum Information and Computer Science, NIST/University of Maryland, College Park, MD 20742, USA}
\author{Brad Lackey}
\affiliation{Microsoft Quantum, Redmond, WA 98052, USA}
\author{Zitao Wang}
\affiliation{Meta Platforms Inc., Menlo Park, CA 94025, USA}

\begin{abstract}
We provide the first tensor network method for computing quantum weight enumerator polynomials in the most general form. If a quantum code has a known tensor network construction of its encoding map, our method is far more efficient, and in some cases exponentially faster than the existing approach. As a corollary, it produces decoders and an algorithm that computes the code distance. For non-(Pauli)-stabilizer codes, this constitutes the current best algorithm for computing the code distance. For degenerate stabilizer codes, it can be substantially faster compared to the current methods. We also introduce novel weight enumerators and  their applications. {\color{black} 
 In particular, we show that these enumerators can be used to compute logical error rates exactly and thus construct (optimal) decoders for any i.i.d. single qubit or qudit error channels. The enumerators also provide a more efficient method for computing non-stabilizerness in quantum many-body states. As the power for these speedups rely on a Quantum Lego decomposition of quantum codes, we further provide systematic methods for decomposing quantum codes and graph states into a modular construction for which our technique applies.} As a proof of principle, we perform exact analyses of the deformed surface codes, the holographic pentagon code, and the 2d Bacon-Shor code under (biased) Pauli noise and limited instances of coherent error at sizes that are inaccessible by brute force.

\end{abstract}

\maketitle
\tableofcontents

\section{Introduction}

Topological and geometrical insights  have led to a number of recent breakthroughs in quantum error correction, e.g. \cite{Kitaev03,fiberbundlecode,PK}. On the other hand, quantum weight enumerator polynomials \cite{ShorLaflamme} provide a complementary, algebraic perspective on quantum error correcting codes (QECCs). Quantum weight enumerators contain crucial information of the code property. A number of variants and generalizations have also been applied to derive linear programming bounds \cite{rains1,rains2,rains3}, to understand error detection under symmetric \cite{Ashikhmin:1999ef} and asymmetric\cite{hu2020weight} Pauli errors, and for generating magic state distillation protocols \cite{Rall}. {\color{black}In quantum many-body physics, enumerators, also known as sector lengths, have been used to study entanglement structure \cite{SLD} of quantum states. The weight distributions of operators have also played an important role in quantum chaos \cite{schuster2022operator}.} However, wider applications of the quantum weight enumerators have been relatively limited beyond codes or states of small sizes compared to the other approaches partly to due their prohibitive computational costs.

Building upon the previous framework of quantum lego (QL) \cite{CL2021} and the recently developed tensor weight enumerator formalism \cite{CL2022}, we revisit the weight enumerator perspective of quantum error correction and provide a more efficient method to compute them. {\color{black}We present new results in both formalism and in algorithm that enable a number of novel applications for quantum error correction, measurement-based quantum computation, and quantum many-body physics.} On the formalism level, we review abstract weight enumerators and their corresponding MacWilliams identities \cite{CL2022}. We then introduce mixed enumerators, higher genus enumerators, coset enumerators and generalized enumerators, which are useful for the study of subsystem codes, decoders, and logical error probability under general independent and identically distributed (i.i.d.) single qubit error channels.

On the algorithmic level, we provide a tensor network method for computing these quantum weight enumerators in their most abstract forms. Because one can read off the code distance from weight enumerators, the problem of finding them is at least as hard as the minimal distance problem for classical linear codes, which is NP-hard \cite{mindist,distance1,distance2,distance3}. We show that quantum weight enumerators also produce optimal decoders, hence the general problem is at least $\#$P-complete, which is the hardness of evaluating weight enumerators for classical linear codes \cite{Vyalyi}. However, more efficient algorithms are possible if additional structures are known. To the best of our knowledge, our work constitutes the best current algorithm for generating quantum weight enumerator polynomials as long as a good QL construction for the quantum code is known. Compared to the brute force method, our algorithm provides up to a substantial speed up. 

The enumerators immediately induce a protocol to compute quantum code distances. To the best of our knowledge, it provides the first such protocol for general quantum codes beyond (Pauli) stabilizer codes, which can be exponentially faster than brute force search in many instances. For non-degenerate Pauli stabilizer codes, the complexity scaling is roughly comparable with existing algorithms for classical linear codes under reasonable assumptions, which implies that it scales exponentially with the code distance. For degenerate codes, our method can be exponentially faster  in certain instances compared to known methods based on classical linear codes. 

{\color{black}Finally, we introduce novel applications and new abstract enumerators that have not been discussed in literature. We generalize \cite{Ashikhmin:1999ef} and connect enumerators to logical error probabilities when the code is subjected to any i.i.d. single qudit error channel. We then provide the optimal decoder for any code that admits a known QL construction and propose a more accurate method to compute effective distances and error thresholds. Our arguments hints at a general connection between the hardness of distance calculation, optimal decoding, and the amount of entanglement present in the system.  Because the speedup relies on a tensor network construction of the quantum code or quantum state in question, we also provide a systematic method for building all stabilizer codes and graph states using Quantum Lego. This also includes the cases where the stabilizer group is non-Abelian, such as the Quantum Double models. These breakthroughs lift a longstanding computational barrier for the exact analyses of quantum codes and resource states in measurement-based quantum computation. Additionally, we show that the higher genus weight enumerator of pure states computes the non-stabilizerness of a state, thus providing an another efficient method for the challenging task of computing quantum many-body magic. }

As a proof of principle, we derive weight enumerators, compute (biased) distances, and obtain exact analytical expressions for logical error probabilities under depolarizing and coherent noise for a few well-known stabilizer and subsystem codes that are of order a hundred qubits or so. The novel contributions in this paper are summarized in Fig.~\ref{fig:intro}.
Overall, the tensor-network-based approach to quantum weight enumerators provides both a unified framework and the practical means for studying code properties, decoding, entanglement, and magic in quantum codes and quantum manybody systems at large.

\begin{figure}
    \centering
    \includegraphics[width=\linewidth]{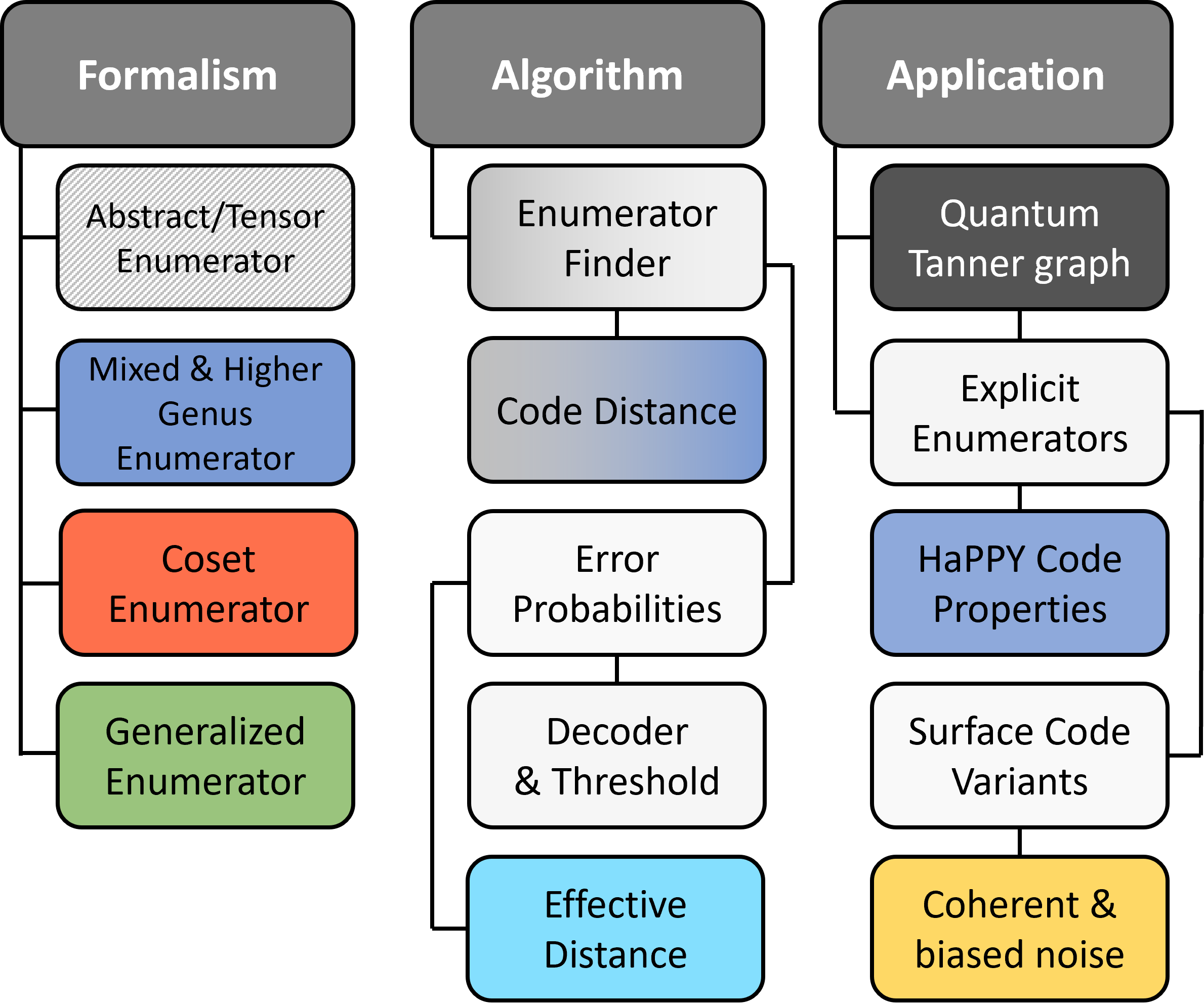}
    \caption{Summary of contributions. Topic dependencies are red-green-blue color coded.  If all  three colored topics in the formalism section are used, then the color is white.  Cyan indicates green and blue topics.  Yellow indicates red and green topics. Black indicates that it does not use any of the new formalism, but is a new tensor network construction. Half shaded grey/blue indicates it uses grey and blue topics.  Half shaded grey/white indicates that it uses all 4 formalism topics.  }
    \label{fig:intro}
\end{figure}

In Sec~\ref{sec:formalism}, we review the basics of weight enumerator polynomials in the most abstract form and introduce their generalizations. In Sec.~\ref{sec:application}, we discuss their existing applications for computing code distance and extend their applications for error detection under general error channels. We present novel constructions such as mixed enumerators, higher genus enumerators and coset enumerators. {\color{black} We introduce new applications of these enumerators in building optimal decoders, in analyzing subsystem codes, and in computing non-stabilizerness in quantum states. We show that the error detection threshold for a large class of codes have a universal value of $1/6$ and} suggest improvements for threshold computations based on existing sampling-based methods when used in conjunction with enumerators. Then we discuss the computational cost of this method and provide some entanglement-based intuition in Sec.~\ref{sec:complexity}. As a proof of principle, and to provide novel analysis of existing codes, we study some common examples and explain their significance in Sec.~\ref{sec:examples}.
In Sec~\ref{subsec:surfacecode} we construct various weight enumerators of the (rotated) surface code and its deformations. We compare their performances under biased noise and coherent error channels. In Sec~\ref{subsec:colorcode} we provide a new tensor network construction of the 2d color code using Steane codes as basic building blocks and compute its enumerators. In Sec~\ref{subsec:holographic} we study different bulk qubits with mixed enumerators in the holographic HaPPY code. We obtain their (biased) distances and performance under (biased) Pauli noise. In Sec~\ref{subsec:bsc} we apply the mixed enumerator technology to the Bacon-Shor code and showcase its computation for subsystem codes.
Finally, we make some summarizing comments in Sec.~\ref{subsec:discussion} and provide insights on the connection with stat mech model and graph states.

We prove the relevant theorems,  discuss technical implementations and clarify practical simplifications in the Appendices. Although not stated explicitly, the distance finding protocol introduced in \cite{LTNC} effectively computes the Shor-Laflamme enumerators for a subset of stabilizer codes known as local tensor network codes. Their approach also shares a number of similarities with our own, which we explain in App.~\ref{subapp:LTNC}. For such stabilizer codes, our protocol generally offers a quadratic speed-up in the form of reduced bond dimensions. In the regime where the stabilizer code has high rate and code words are highly entangled, our method can lead to an exponential advantage using the quantum MacWilliams identities.

\section{General Formalism}\label{sec:formalism}
Throughout the article, we represent multi-indexed objects like vectors and tensors in bold face letters $\mathbf{A},\mathbf{B}$ to avoid clutter of indices. Scalar objects are written in regular fonts like $A,B$.

\subsection{Abstract scalar weight enumerator}
Abstract scalar weight enumerators introduced in \cite{CL2022} include common enumerators discussed in literature \cite{ShorLaflamme,rains2,hu2020weight}. 
Let $\mathcal{E}$ be an error basis on Hilbert space $\mathfrak{H}$ with local dimension $q$. A \emph{weight function} is any function $\mathrm{wt}:\mathcal{E} \to \mathbb{Z}_{\geq 0}^k$. We extend this (without introducing new notation) to $\mathrm{wt}:\mathcal{E}^n \to \mathbb{Z}_{\geq 0}^k$ by
\begin{equation}\mathrm{wt}(E_1 \otimes \cdots \otimes E_n) = \mathrm{wt}(E_1) + \cdots + \mathrm{wt}(E_n).\end{equation}
For a $k$-tuple of indeterminates $\mathbf{u} = (u_1, \dots, u_k)$ we write
\begin{equation}\mathbf{u}^{\mathrm{wt}(E)} = u_1^{\mathrm{wt}(E)_1} \cdots u_k^{\mathrm{wt}(E)_k}.\end{equation}

We can then define abstract enumerators of Hermitian operators $M_1,M_2$ for a weight function $\mathrm{wt}$ as
\begin{align}
    A(\mathbf{u}; M_1, M_2) &= \sum_{E \in \mathcal{E}^n} \Tr(E M_1) \Tr(E^\dagger M_2) \mathbf{u}^{\mathrm{wt}(E)}\\\nonumber
    B(\mathbf{u}; M_1, M_2) &= \sum_{E \in \mathcal{E}^n} \Tr(E M_1 E^\dagger M_2) \mathbf{u}^{\mathrm{wt}(E)}.
\end{align}

These polynomials satisfy a quantum MacWilliams identity. Let us restrict to the case where our error basis satisfies $EFE^\dagger F^\dagger = \omega(E,F) I$ for a phase $\omega(E,F)$. This includes the Pauli basis (of local dimension $q$) as well as general Heisenberg representations. Consider the (polynomial-valued) function $f(E) = \mathbf{u}^{\mathrm{wt}(E)}$ for a weight function $\mathrm{wt}:\mathcal{E} \to \mathbb{Z}_{\geq 0}^k$. Then the discrete Wigner transform of this function is
\begin{equation}\hat{f}(D) = \frac{1}{q}\sum_{E} \omega(E,D) f(E) = \frac{1}{q}\sum_{E} \omega(E,D) \mathbf{u}^{\mathrm{wt}(E)}.\end{equation}

\begin{theorem}
    Suppose there exists an algebraic mapping $\Phi(\mathbf{u}) = (\Phi_1(\mathbf{u}), \dots, \Phi_k(\mathbf{u}))$ such that
    \begin{equation}\Phi(\mathbf{u})^{\mathrm{wt}(D)} = \hat{f}(D) = \frac{1}{q}\sum_{E} \omega(E,D) \mathbf{u}^{\mathrm{wt}(E)}.\end{equation}
    Then for any $M_1,M_2$ we have
   \begin{equation}\label{eqn:abstractMacWilliams}
       B(\mathbf{u};M_1,M_2) = A(\Phi(\mathbf{u});M_1, M_2).
   \end{equation}
\end{theorem}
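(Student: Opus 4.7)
The plan is to turn $B$ into $A$ by expanding one of the operators inside the trace in the error basis $\mathcal{E}^n$, then using the commutation relation $EDE^\dagger = \omega(E,D)\,D$ to move the conjugation by $E$ off of $M_1$, thereby decoupling the two traces. The sum over $E$ then isolates the Wigner transform $\hat f$ evaluated at each basis element $D$, and the hypothesis converts $\hat f(D)$ into $\Phi(\mathbf{u})^{\mathrm{wt}(D)}$, which reassembles into $A(\Phi(\mathbf{u});M_1,M_2)$.

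Concretely, I would assume (as is standard for nice error bases such as the Heisenberg representation) that $\mathcal{E}^n$ is a Hilbert--Schmidt orthogonal basis with $\Tr(D^\dagger D')=q^n\delta_{DD'}$, so that
$$M_1 = \frac{1}{q^n}\sum_{D\in\mathcal{E}^n}\Tr(D^\dagger M_1)\,D.$$
Substituting into $\Tr(EM_1E^\dagger M_2)$ and applying $EDE^\dagger=\omega(E,D)D$ gives
$$\Tr(EM_1E^\dagger M_2) = \frac{1}{q^n}\sum_{D}\omega(E,D)\,\Tr(D^\dagger M_1)\Tr(DM_2).$$
Renaming $D\mapsto D^\dagger$ in the sum (and using that $\mathrm{wt}$ and the error basis are closed under dagger) reindexes this as
$$\frac{1}{q^n}\sum_{D}\omega(E,D^\dagger)\,\Tr(DM_1)\Tr(D^\dagger M_2).$$
Inserting into the definition of $B(\mathbf{u};M_1,M_2)$ and swapping the order of summation then yields
$$B(\mathbf{u};M_1,M_2) = \sum_{D}\Tr(DM_1)\Tr(D^\dagger M_2)\cdot\frac{1}{q^n}\sum_{E}\omega(E,D^\dagger)\,\mathbf{u}^{\mathrm{wt}(E)}.$$
The inner sum is precisely the $n$-site Wigner transform $\hat f(D^\dagger)$, which factorizes across tensor factors because $\omega$ is multiplicative on tensor products in a Heisenberg-type basis and $\mathrm{wt}$ is additive; thus $\hat f(D^\dagger)=\prod_i \hat f(D_i^\dagger)$. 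Invoking the hypothesis site by site gives $\hat f(D^\dagger)=\Phi(\mathbf{u})^{\mathrm{wt}(D^\dagger)}=\Phi(\mathbf{u})^{\mathrm{wt}(D)}$, and the outer sum is then exactly $A(\Phi(\mathbf{u});M_1,M_2)$, establishing~\eqref{eqn:abstractMacWilliams}.

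The main subtlety, and the step I would check most carefully, is the bookkeeping of daggers and phases: one must verify that the reindexing $D\mapsto D^\dagger$ is a bijection on $\mathcal{E}^n$, that $\omega(E,D^\dagger)$ combines correctly with the tensor-product structure (so that the single-site hypothesis on $\Phi$ extends to $n$ sites), and that $\mathrm{wt}$ is invariant under dagger so that $\Phi(\mathbf{u})^{\mathrm{wt}(D^\dagger)}$ can be identified with $\Phi(\mathbf{u})^{\mathrm{wt}(D)}$. All of these hold for the Pauli/Heisenberg bases the theorem targets, but they are the properties that make the abstract statement work and should be stated as implicit assumptions on the error basis.
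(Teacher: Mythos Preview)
Your proof is correct and follows the standard route for establishing MacWilliams-type identities: expand $M_1$ in the error basis, use $EDE^\dagger=\omega(E,D)D$ to factor the trace, then identify the resulting inner sum as the (tensor-factorized) Wigner transform and invoke the hypothesis on $\Phi$. The dagger/reindexing caveats you flag are exactly the right ones, and they hold for the Pauli/Heisenberg bases the paper restricts to.

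The paper itself does not give a proof in the text; it simply cites \cite{CL2022}. Your argument is essentially the one that reference contains, so there is nothing substantively different to compare.
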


\begin{proof}
    See \cite{CL2022}.
\end{proof}

The map $\Phi$ is a generalization of the discrete Wigner transform. For the remainder of the work, we take $\mathcal{E}$ to be the Pauli group. By considering different forms of the variable $\mathbf{u}$, abstract weight function $\rm wt$, and transformation $\Phi$, one can recover existing scalar enumerator polynomials and their MacWilliams identities. For completeness, we review a few common enumerators in Appendix~\ref{app:scalarenum} that are used in this work. 

\subsection{Generalized Abstract Weight Enumerators}\label{subsec:genabs}

Slightly extending the form in the previous section, we define a novel generalized weight enumerator. 

\begin{align}
    \bar{A}(\mathbf{u};M_1,M_2) &= \sum_{E,F\in\mathcal{E}^{n}}\Tr[EM_1]\Tr[F^{\dagger}M_2] \mathbf{u}^{wt(E,F)}\\\nonumber
    \bar{B}(\mathbf{u};M_1,M_2) &= \sum_{E,F\in\mathcal{E}^{n}}\Tr[EM_1 F^{\dagger}M_2] \mathbf{u}^{wt(E,F)},
\end{align}
where $\mathrm{wt}(E,F)$ is an abstract function of the operators $E,F$ and $\mathbf{u}$ is a set of variables. It has no obvious classical analogues as far as we know. This type of enumerators are useful in analyzing qudit-wise general error channels. We further elaborate this connection in Sec~\ref{subsec:errdet} for coherent noise and other single qubit errors such as amplitude damping channels. We are not able to identify MacWilliams identities for these types of enumerator polynomials in general.

\subsection{Tensor Weight Enumerators}\label{subsec:twep}

One can generalize the above scalar enumerator formalism to vectors and tensors. The reasons for this extension is two-fold: 1) the novel vector or tensor enumerators can probe code properties unavailable to their scalar counterparts and 2) the cost for computing scalar enumerators is generally expensive and scales exponentially with ${n-k}$. However, by contracting suitable tensor weight enumerators, one can break down the computation of scalar enumerators into manageable pieces and render the process far more efficient. In this section, we briefly review the basic definitions of these vectorial and tensorial enumerators and introduce their graphical representations.

From \cite{CL2022}, we define tensor enumerators

\begin{widetext}
\begin{align}\label{eqn:tensor_enum}
    \mathbf{A}^{(J)}(\mathbf{u}; M_1, M_2) &= \sum_{E,\bar{E}\in\mathcal{E}^m} \sum_{F \in \mathcal{E}^{n-m}} \Tr((E \otimes_J F)M_1) \Tr((\bar{E}^\dagger \otimes_J F^\dagger) M_2) \mathbf{u}^{\mathrm{wt}(F)} e_{E,\bar{E}},\\
    \mathbf{B}^{(J)}(\mathbf{u}; M_1, M_2) &= \sum_{E,\bar{E}\in\mathcal{E}^m}  \sum_{F \in \mathcal{E}^{n-m}}  \Tr((E \otimes_J F)M_1 (\bar{E}^\dagger \otimes_J F^\dagger) M_2) \mathbf{u}^{\mathrm{wt}(F)} e_{E,\bar{E}}
\end{align}
\end{widetext}
where $\{e_{E,\bar{E}}\}$ are orthonormal basis vectors of a $q^4$-dimensional vector space. Here, $\mathrm{wt}(F)$ is an abstract weight function we discussed in the previous section and $\mathbf{u}$ can be an $n$-tuple of variables, and 
$J \subseteq \{1,\dots,n\}$ is a set of $m$ qudits/locations. We write $\otimes_J$ denotes the tensor product of length $m$ Pauli string $E$ interlaced with Pauli string $F$ of length $n-m$ at the positions marked in the set $J$. Later we will also use $\mathcal{E}^{n-m}[d]$ is the set of Pauli operators $F$ on $n-m$ sites that have weight $d$. 

\begin{figure}[t]
    \centering
    \includegraphics[width=0.7\linewidth]{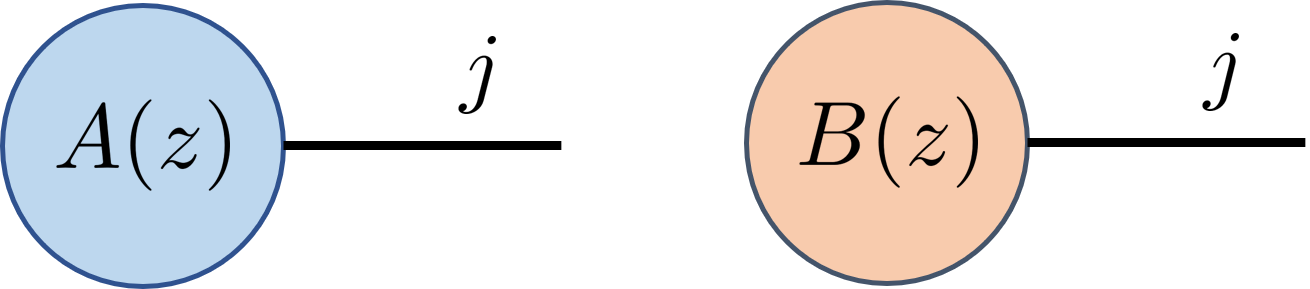}
    \caption{Vector Enumerators}
    \label{fig:vwep}
\end{figure}

To give a more concrete illustration of these objects consider the case of a rank-1 tensors ($m=1$), which we refer to as \emph{vector} enumerators. For simplicity consider the usual (quantum) Hamming weight where $\mathbf{u}=z$ and $\mathrm{wt}(E)$ returns the number of nonidentity tensor factors in the Pauli operator $E$. For $J=\{j\}$ the vector enumerators along leg $j$ read 
\begin{align}
    &\mathbf{A}^{(j)}(z; M_1, M_2)\\ \nonumber
    &= \sum_{E,\bar{E}\in\mathcal{E}} \sum_{d=0}^n A^{(j)}_d(E,\bar{E};M_1,M_2) z^d e_{E,\bar{E}},\\
    &\mathbf{B}^{(j)}(z; M_1, M_2) \\\nonumber
    &= \sum_{E,\bar{E}\in\mathcal{E}} \sum_{d=0}^n B^{(j)}_d(E,\bar{E};M_1,M_2) z^d e_{E,\bar{E}}.
\end{align}
with coefficients (weights) here are defined as 
\begin{align}
    &A^{(j)}_{d}(E,\bar{E};M_1,M_2)\\\nonumber
    &= \sum_{F \in \mathcal{E}^{n-1}[d]} \Tr((E \otimes_j F)M_1) \Tr((\bar{E}^\dagger \otimes_j F^\dagger) M_2),\\
    &B^{(j)}_{d}(E,\bar{E};M_1,M_2)\\\nonumber
    &= \sum_{F \in \mathcal{E}^{n-1}[d]} \Tr((E \otimes_j F)M_1 (\bar{E}^\dagger \otimes_j F^\dagger) M_2).
\end{align}
The $\mathcal{E}^{n-1}[d]$ here is the set of operators that have weight $d$ on the $n-1$ qubits except the $j$th one, and  $E\otimes_j F$ is a Pauli string that has $E$ inserted on the $j$-th position of the Pauli string:
$$E\otimes_j F= F_1\otimes F_2\otimes \dots F_{j-1} \otimes E_j \otimes F_{j+1}\otimes\dots \otimes F_{n}.$$

Formally, it is also convenient to express these coefficients in coordinates, once we have chosen a standard basis $\{\mathbf{\hat{e}}_j\}$. For example, one can denote
\begin{align}
    &A^{(j)}_d(E, \bar{E};M_1,M_2) \rightarrow A^{j}_d,\\ 
    &B^{(j)}_d(E,\bar{E};M_1,M_2) \rightarrow B^{j}_d
\end{align}
by identifying $j=0,\dots, q^4$ where $E,\bar{E}$ each has $q^2$ distinct values. For simplicity, we abuse notation and use $j$ as an open index that labels the dangling leg that comes from the $j$-th qudit. The corresponding vector enumerator polynomials are $ A^{j}(z;M_1,M_2),  B^{j}(z;M_1,M_2)$, which we represent graphically as a rank-1 tensors in Fig.~\ref{fig:vwep}.

In the same vein, the coefficients for a tensor enumerator of rank $m$ may be written as 
\begin{align}
    &\sum_d A^{(J)}_d(E,E;M_1,M_2)z^d\\\nonumber
    &\quad \rightarrow \sum_d A_d^{j_1\dots j_m} z^d=A^{j_1,j_2,\dots, j_m} (z), \\
    &\sum_d B^{(J)}_d(E,E;M_1,M_2)z^d\\\nonumber
    &\quad \rightarrow \sum_d B_d^{j_1\dots j_m}z^d=B^{j_1,j_2,\dots, j_m} (z),
\end{align}
where each tensor coefficient $A^{j_1,j_2,\dots, j_m} (z)$, $B^{j_1,j_2,\dots, j_m} (z)$ is a scalar enumerator. A graphical representation of $A^{j_1,j_2,\dots, j_m} (z)$ is given below in Figure \ref{fig:twep} (top left). 

In practice, it is often sufficient to consider reduced versions of these enumerators that only keep the diagonal terms with $E=\bar{E}$, which we represent using the same graphical form, but now with reduced bond dimension $j_\ell=1,\dots, q^2$. Such enumerators are known as the \emph{reduced enumerators} and they are sufficient  for studying Pauli errors in stabilizer codes. See \cite{CL2022} and App.~\ref{subapp:reducedenum}. In this work, we use the color blue to denote $A$-type enumerators and orange to denote $B$-type enumerators. We often drop the variable $z$ or $\mathbf{u}$ to avoid clutter, but it should be understood that the tensor components of these objects are polynomials.

One can also easily define other tensor enumerators such as the double and complete enumerators by choosing different expressions for the abstract forms $\mathbf{u}$ and weight functions $\mathrm{wt}(E)$. An extension to the generalized abstract tensor  enumerator is also possible. Details are found in App.~\ref{app:tensorenum}.

\subsection{Tracing tensor enumerators}\label{subsec:tweptn}
Let us define a trace operation $\wedge_{j,k}$ over the tensor enumerators which connects any two legs $j,k$ in the tensor network. Graphically, it is represented by a connected edge in the dual enumerator tensor network. Acting on the basis element $e_{E,\bar{E}}$ we define 
\begin{equation}
    \wedge_{j,k} e_{E,\bar{E}} = e_{E\setminus\{E_j,E_k\},\bar{E}\setminus\{\bar{E}_j,\bar{E}_k\}}
\end{equation}
when $E_j = E_k^*$ and $\bar{E}_j = \bar{E}_k^*$ and zero otherwise.

Each contraction can be understood as tracing together two tensors. However we can also view the two tensors as a single tensor enumerator (using the tensor product) then performing a self-trace, which is necessary and sufficient to build up any tensor network. Informally, the trace of the tensor enumerator is the tensor enumerator of the traced network, which is formally stated as the following.

\begin{theorem}\label{thm:tensortrace1}
    Suppose $j,k \in J \subseteq \{1,\dots,m\}$. Then
    \begin{align*}
        &\wedge_{jk}\mathbf{A}^{(J)}(\mathbf{u};M_1,M_2)\\
        &\quad = \mathbf{A}^{(J\setminus\{j,k\})}(\mathbf{u};\wedge_{j,k}M_1,\wedge_{j,k}M_2),
    \end{align*}
    and similarly for $\mathbf{B}^{(J)}$.
\end{theorem}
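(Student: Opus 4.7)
The plan is to expand $\wedge_{j,k}\mathbf{A}^{(J)}(\mathbf{u};M_1,M_2)$ term-by-term from the defining sum in Eq.~(\ref{eqn:tensor_enum}) and reorganize the result into the defining sum of $\mathbf{A}^{(J\setminus\{j,k\})}(\mathbf{u};\wedge_{j,k}M_1,\wedge_{j,k}M_2)$. The bridge between the two sides is the observation that the contraction rule on basis vectors ($E_j=E_k^*$ and $\bar{E}_j=\bar{E}_k^*$) is exactly what a Pauli completeness / Choi-state identity needs in order to convert the two-site Pauli sum inside each trace into a partial trace on $M_1$ and on $M_2$ separately.

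First I would push the linear operator $\wedge_{j,k}$ inside the sum. Because $\wedge_{j,k}e_{E,\bar{E}}$ vanishes unless $E_j=E_k^*$ and $\bar{E}_j=\bar{E}_k^*$, and otherwise returns $e_{E',\bar{E}'}$ with the $j$- and $k$-components deleted, the surviving sum can be re-indexed by $(E',\bar{E}')\in\mathcal{E}^{m-2}$ on $J\setminus\{j,k\}$, together with an inner sum over a single free pair $(E_j,\bar{E}_j)\in\mathcal{E}\times\mathcal{E}$. The two traces $\Tr((E\otimes_J F)M_1)$ and $\Tr((\bar{E}^\dagger\otimes_J F^\dagger)M_2)$ depend on $E_j$ and $\bar{E}_j$ independently, so the inner sums decouple. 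I would then invoke the standard Pauli identity $\sum_{P\in\mathcal{E}}P\otimes P^*\propto |\Phi^+\rangle\langle\Phi^+|$, which (up to the conventional $q$-factor built into the definition of the contraction on operators) lets me rewrite $\sum_{E_j}\Tr\bigl((E_j\otimes_j E_j^*\otimes_k(\cdot))\,M_1\bigr)$ as $\Tr((\cdot)(\wedge_{j,k}M_1))$, and similarly for $M_2$.

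After this substitution the resulting sum is exactly $\mathbf{A}^{(J\setminus\{j,k\})}(\mathbf{u};\wedge_{j,k}M_1,\wedge_{j,k}M_2)$, proving the $\mathbf{A}$-case; the $\mathbf{B}$-case follows by the same argument carried out inside a single trace, so that the two separate Pauli completeness reductions merge into one application on the combined operator $M_1(\cdot)M_2$. The main obstacle is purely bookkeeping: keeping straight the positions moved by $\otimes_J$, the conjugations that survive the $E_k=E_j^*$ identification, and matching the normalization of the operator-level $\wedge_{j,k}$ to the prefactor of the Pauli completeness relation so that no stray $q$-dependent constants appear. Provided these conventions are set up consistently, no additional ideas beyond this index-juggling are required, and an induction on the number of contractions then extends the statement to arbitrary subsets of paired legs.
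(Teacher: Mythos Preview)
Your proposal is correct and follows essentially the same approach as the paper. The paper's own proof of this theorem is deferred to \cite{CL2022}, but the identical technique is spelled out in Appendix~\ref{app:tensorenum} for the generalized tensor enumerators: push $\wedge_{j,k}$ through the sum, reduce to an inner sum over $G,\bar{G}\in\mathcal{E}$ on the two contracted legs, and then invoke $|\beta\rangle\langle\beta|=\tfrac{1}{q}\sum_{P\in\mathcal{E}}P\otimes P^{*}$ to convert that inner sum into the operator-level contraction $\wedge_{j,k}M_i$---exactly the Pauli-completeness step you describe.
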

\begin{proof}
    See Theorem 7.1 of \cite{CL2022}.
\end{proof}

Theorem \ref{thm:tensortrace1} allows us to compute the weight enumerator of a contracted tensor network by contracting the tensor enumerators of each QL block. For example, to construct a scalar enumerator given the QL representation of an encoding map in Fig.~\ref{fig:wep_422}, we first lay down its ``shadow'' that is the tensor enumerator for each $[[4,2,2]]$ atomic code. Then we trace together these blocks following the same network connectivity.

\begin{figure}[b]
    \centering
    \includegraphics[width=\linewidth]{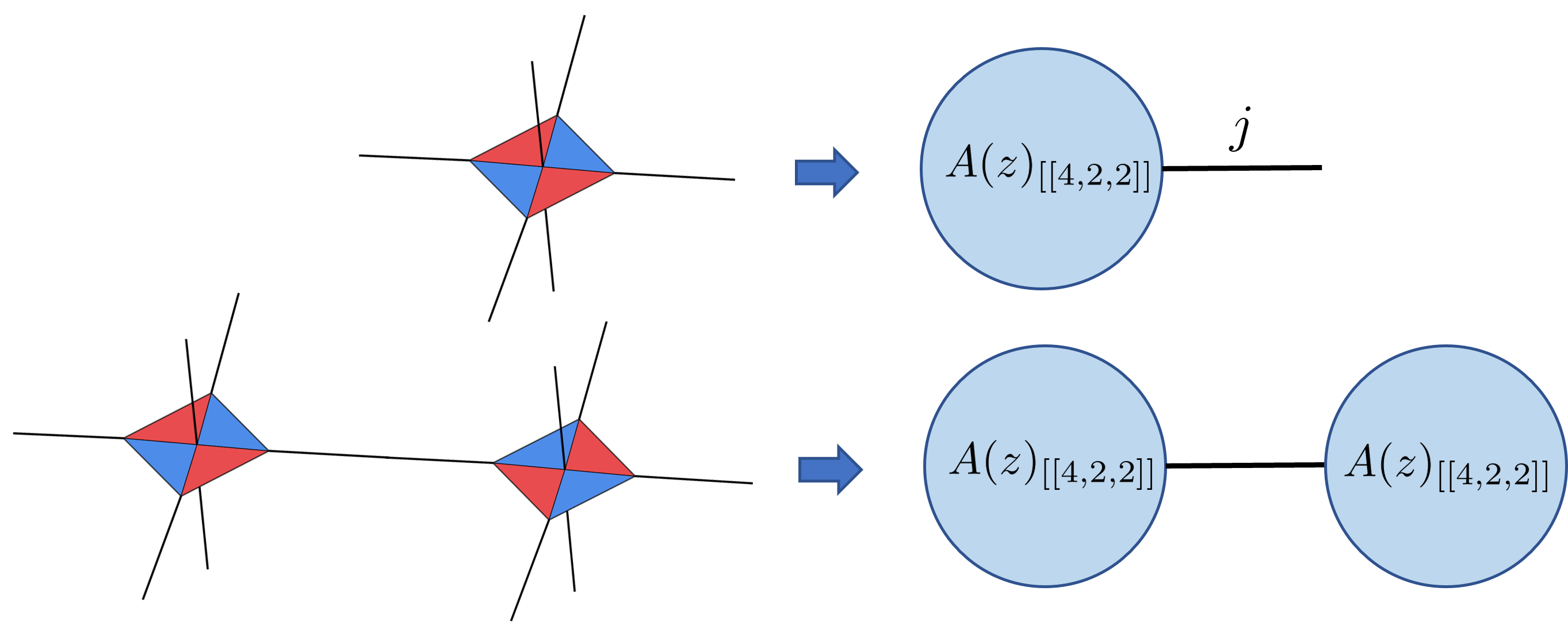}
    \caption{WEP from tracing $[[4,2,2]]$ codes} 
    \label{fig:wep_422}
\end{figure}

The component form of contracting tensor enumerators can be expressed as the conventional sum over indices for a tensor trace. For reduced enumerators at $q=2$ this reads,
\begin{align}
    \label{eqn:weptrace}
    &A^{j_{l+1},\dots,j_m,r_{l+1},\dots,r_k}(\mathbf{u})\\
    =&\sum_{j_1,j_2,\dots, j_l}A^{j_1,j_2,\dots, j_l,\dots j_m}(\mathbf{u})A^{j_1,j_2,\dots j_l, r_{l+1}\dots r_k}(\mathbf{u})\nonumber
\end{align}
and similarly for $\mathbf{B}(\mathbf{u};M_1,M_2)$, where the only difference from a traditional tensor network is the variables $\mathbf{u}$ associated with the polynomial. One can connect these tensors sequentially; at each step an atomic code is glued to the (generically) bigger connected component, Fig.~\ref{fig:twep}. For the full tensor enumerator, or when $q>2$, we need to take more care in raising and lowering the indices to recast them into the proper covariant and contravariant forms before summing over repeated indices.   

While it is natural to use symbolic packages to implement this formalism, we will also elaborate in Appendix~\ref{app:tensoronly} how to implement these objects as the usual multi-linear function without symbolic packages using conventional tensor network methods. 

\begin{figure}[t]
    \centering
    \includegraphics[width=0.7\linewidth]{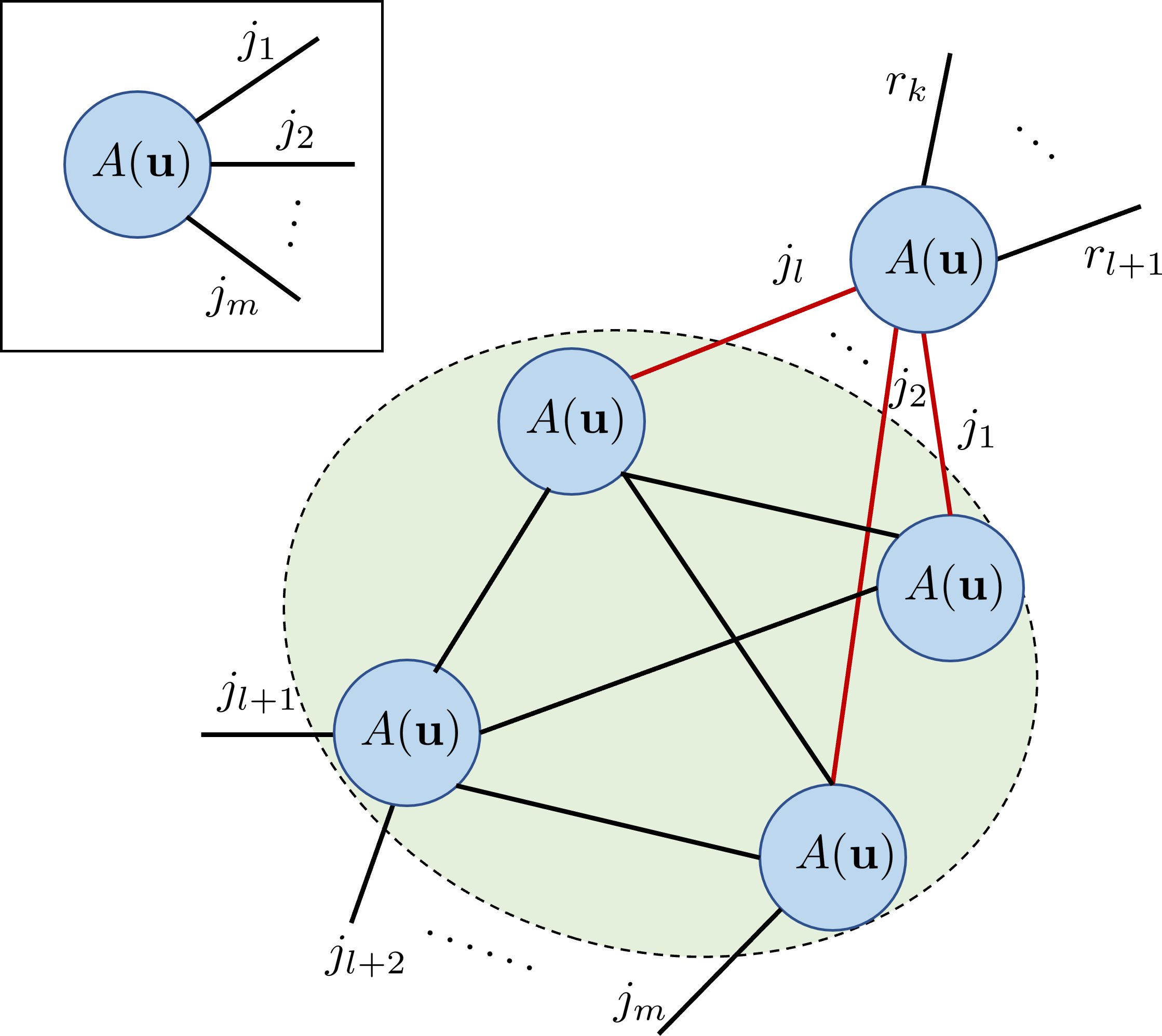}
    \caption{Graphical representation of a type-$A$ tensor enumerator (box). Tracing the type $A$ tensors as in eqn (\ref{eqn:weptrace}). Green region can be seen as $A^{j_1,j_2,\dots,j_l,\dots,j_m}(\mathbf{u})$. Traced legs are red. }
    \label{fig:twep}
\end{figure}

\section{Applications of weight enumerators}\label{sec:application}
\subsection{Code Distance from Enumerators}
The genesis of quantum weight enumerators came from the case $M_1=M_2=\Pi$, the projection onto a stabilizer code, and $\mathbf{u}^{wt(E)}=z^{wt(E)}$. After an appropriate normalization, the enumerators $A^{norm}(z)=A(z)/K^2,\ B^{norm}(z)=B(z)/K$ encode the weight distributions of stabilizers (logical identities) and normalizers (all logical operators) of the code respectively \cite{ShorLaflamme}. The normalized polynomials $A^{norm}(z),\ B^{norm}(z)$ have $B_0=A_0=1$. It follows that $B^{norm}(z)-A^{norm}(z)$ yields the weight distributions of non-trivial logical Pauli operators. Therefore, the smallest $d$ for which $B_d\ne A_d$ is thus the (adversarial) code distance. This observation also generalizes to any quantum code \cite{Ashikhmin:1999ef}. Formally we capture this in the following result for later reference.

\begin{theorem}\label{thm:3}
    Let $\mathcal{C}$ be a quantum code,  $\Pi_{\mathcal{C}}$ be the projection onto its code subspace and 
    \begin{align}
        A(z;\Pi_{\mathcal{C}},\Pi_{\mathcal{C}}) &= \sum_d A_dz^d\\
        B(z;\Pi_{\mathcal{C}},\Pi_{\mathcal{C}}) &= \sum_d B_d z^d
    \end{align}
    be its weight enumerator polynomials properly normalized. Then
    \begin{itemize}
        \item $A_0=B_0=1$,
        \item $B_d\geq A_d\geq 0$ for all $d$, and
        \item the code distance is $t+1$ where $t$ is the largest integer for which $B_i=A_i$ for all $0\leq i\leq t$.
    \end{itemize}
\end{theorem}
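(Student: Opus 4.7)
The plan is to prove the three bullets in order, with almost all of the real work being the identification of the equality case in a single Cauchy--Schwarz step as the Knill--Laflamme detection condition.

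For the first bullet I would simply specialize to $E=I$, the unique Pauli of weight zero. Its contribution to the unnormalized enumerators is $\Tr(\Pi_{\mathcal{C}})^2 = K^2$ for $A$ and $\Tr(\Pi_{\mathcal{C}} I \Pi_{\mathcal{C}}) = \Tr(\Pi_{\mathcal{C}}) = K$ for $B$. Dividing by $K^2$ and $K$ respectively, per the normalization convention $A^{\text{norm}} = A/K^2$, $B^{\text{norm}} = B/K$ stated in the text, both yield $1$.

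For the second bullet, the nonnegativity $A_d\ge 0$ follows from $\Pi_{\mathcal{C}}^{\dagger}=\Pi_{\mathcal{C}}$, which gives $\Tr(E^{\dagger}\Pi_{\mathcal{C}})=\overline{\Tr(E\Pi_{\mathcal{C}})}$, so each summand is $|\Tr(E\Pi_{\mathcal{C}})|^2/K^2 \ge 0$. For $B_d \ge A_d$ I would set $X_E := \Pi_{\mathcal{C}} E \Pi_{\mathcal{C}}$, view it as an operator on the $K$-dimensional code subspace, and apply Cauchy--Schwarz in the Hilbert--Schmidt inner product against the identity on the code: $|\Tr(X_E)|^2 \le K\cdot\Tr(X_E X_E^{\dagger})$. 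Using $\Pi_{\mathcal{C}}^2=\Pi_{\mathcal{C}}$ and cyclicity of the trace, $\Tr(X_E)=\Tr(E\Pi_{\mathcal{C}})$ and $\Tr(X_E X_E^{\dagger})=\Tr(E\Pi_{\mathcal{C}} E^{\dagger}\Pi_{\mathcal{C}})$. Dividing by $K^2$ and summing over $E$ with $\mathrm{wt}(E)=d$ gives $A_d\le B_d$ term-by-term.

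For the third bullet, the key observation is that the Cauchy--Schwarz above is saturated iff $X_E$ is proportional to the identity on the codespace, equivalently $\Pi_{\mathcal{C}} E \Pi_{\mathcal{C}}=c_E\,\Pi_{\mathcal{C}}$ for some $c_E\in\mathbb{C}$ (where $c_E=0$ is allowed). This is precisely the Knill--Laflamme condition for $E$ to be detectable by $\mathcal{C}$. Since each individual term of $B_d-A_d$ is nonnegative, $A_d=B_d$ iff every Pauli $E$ with $\mathrm{wt}(E)=d$ is detectable. Hence the largest $t$ for which $A_i=B_i$ for all $i\le t$ is one less than the smallest weight of an undetectable Pauli, which is the code distance by definition, yielding $d = t+1$.

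The main obstacle, though modest, is being careful with two things: first, the normalization factors $K^{-2}$ and $K^{-1}$ must line up so that the Cauchy--Schwarz yields a true per-error inequality between normalized summands, and second, the equality case of Cauchy--Schwarz must be identified with the detectability condition in its broadest sense (allowing $c_E=0$), so that both errors mapping codewords out of the code and errors acting as a scalar on the code are counted as detectable, matching the standard definition of quantum code distance.
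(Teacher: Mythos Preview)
Your argument is correct and is essentially the standard proof of this result as it appears in the literature (Shor--Laflamme, Rains, Ashikhmin--Litsyn). Note, however, that the paper itself does not supply a proof: it states the theorem ``for later reference'' and attributes the general (non-stabilizer) case to \cite{Ashikhmin:1999ef}. So there is no paper-proof to compare against; your Cauchy--Schwarz argument with equality characterized by $\Pi_{\mathcal{C}} E \Pi_{\mathcal{C}} = c_E \Pi_{\mathcal{C}}$ is exactly the approach taken in those references, and your handling of the normalization and the $c_E=0$ case is fine.
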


A similar version holds for the refined enumerator, as shown by \cite{hu2020weight}, from which one can determine the biased distances for the code (Thm.~\ref{thm:refined_dist}).

As one can read off the distances from the enumerators, our tensor network method provides a straightforward way to compute and verify adversarial distances for all quantum codes whose QL description is known. This provides the first viable method to compute distances for a quantum code that need not be a stabilizer code.

\subsection{Error Detection}\label{subsec:errdet}
With weight enumerators in hand, we can easily obtain the probability for uncorrectable errors \cite{Ashikhmin:1999ef}. For any quantum code $\mathcal{C}$, let $\Pi_{\mathcal{C}}$ be the projector onto the code subspace, and write the orthogonal projector onto $\mathcal{C}^{\perp}$ as $\Pi_{\mathcal{C}}^{\perp}$. We say an error $E$ \emph{uncorrectable} if it cannot be detected, that is
$\Pi_{\mathcal{C}}E\Pi_{\mathcal{C}} \propto \Pi_{\mathcal{C}},$ and is not proportional to the logical identity.
Operationally, one performs a measurement with respect to $(\Pi_{\mathcal{C}},\Pi_{\mathcal{C}}^{\perp})$. An error is detected if the result is contained in $\mathcal{C}^{\perp}$. For stabilizer codes, this corresponds to errors with trivial error syndrome that  perform a non-identity logical operation.

Consider depolarizing channel with unbiased noise which acts identically on any single qubit with
\begin{equation}\label{eqn:depolarizing_channel}
\rho_j\rightarrow (1-3p)\rho_j+p X\rho_j X+p Y\rho_j Y+p Z\rho_j Z,
\end{equation}
where $\rho_j$ is the reduced density matrix on site $j$.
For stabilizer codes, it is easy to check that the probability of the random Pauli errors coinciding with a non-trivial logical operator is nothing but $p_{ne}=(B^{norm}-A^{norm})(z=p,w=1-3p)$ because a Pauli error with weight $d$ occurs with probability $p^d(1-3p)^{n-d}$. As in Theorem~\ref{thm:3}, we have taken the enumerators to be normalized such that $A_0=B_0=1$. In general, \cite{Ashikhmin:1999ef} shows that the error probability for any code with $\dim\mathcal{C}=K$ is 
\begin{equation}p_{ne}=\frac{K}{(K+1)} \big(B^{norm}(p,1-3p) -A^{norm}(p,1-3p)\big).\end{equation}
Note the overall multiplicative factor compared to our initial estimation using stabilizer code because some logical errors takes the initial codeword to a non-orthogonal state, but only the orthogonal component is counted as non-trivial logical error in this construction.

We can extend the argument of \cite{Ashikhmin:1999ef} to more general error models. Suppose the error channel is given by
\begin{equation}\rho_j\rightarrow \sum_{i=1}^{q^2} K_i \rho_j K_i^{\dagger},\end{equation}
which acts identically across all physical qudits, then on the whole system, the errors act as 
\begin{equation}\label{eqn:general_channel}
   \mathcal{E}(\rho)= \sum_{\mathbf{i}} \mathcal{K}_{\mathbf{i}} \rho \mathcal{K}_{\mathbf{i}}^{\dagger}
\end{equation}
where 
\begin{equation}\mathcal{K}_{\mathbf{i}}=K_{i_1}\otimes K_{i_2}\otimes \dots\otimes K_{i_n},\end{equation} and $\mathbf{i}$ is summed over all $q^2$-nary strings of length $n$. It is important to note that for each $\mathbf{i}$, the Kraus operator and its conjugate are the same, there are no cross terms.

\begin{theorem}\label{thm:gen_err}
    The non-detectable error probabilities of any error channel with the form of (\ref{eqn:general_channel}) is given by 
    \begin{align}\label{eqn:ne_err_prob}
    {p}_{nd} &= \frac{K}{K+1}\Big(\frac 1 K \sum_{\mathbf{i}} \Tr[\mathcal{K}_{\mathbf{i}}^{\dagger}\Pi \mathcal{K}_{\mathbf{i}} \Pi]\\\nonumber
    &-\frac{1}{K^2}\sum_{\mathbf{i}} \Tr[\mathcal{K}_{\mathbf{i}}^{\dagger}\Pi]\Tr[\mathcal{K}_{\mathbf{i}}\Pi]\Big)
\end{align}
for a quantum code with dimension $K$ with projector $\Pi$.
\end{theorem}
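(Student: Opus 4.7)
The plan is to extend the Ashikhmin--Litsyn argument \cite{Ashikhmin:1999ef} from the depolarizing case to arbitrary i.i.d.\ Kraus channels by working directly with operators rather than Pauli expansions. First I would fix an arbitrary pure code state $|\psi\rangle$ and declare a run of the protocol an \emph{undetected error} when (i) the syndrome measurement $(\Pi,\Pi^\perp)$ yields the codespace outcome and (ii) the post-measurement state is not $|\psi\rangle$. For a fixed input, the probability of passing the syndrome is $P_{\mathrm{pass}}(\psi)=\sum_{\mathbf{i}}\langle\psi|\mathcal{K}_{\mathbf{i}}^{\dagger}\Pi\mathcal{K}_{\mathbf{i}}|\psi\rangle$, and the probability of returning to $|\psi\rangle$---which automatically passes the syndrome---is $P_{\mathrm{id}}(\psi)=\sum_{\mathbf{i}}|\langle\psi|\mathcal{K}_{\mathbf{i}}|\psi\rangle|^{2}$, so the conditional undetectable probability is just the difference $p_{nd}(\psi)=P_{\mathrm{pass}}(\psi)-P_{\mathrm{id}}(\psi)$. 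The absence of cross terms $\mathcal{K}_{\mathbf{i}}\rho\mathcal{K}_{\mathbf{j}}^\dagger$ with $\mathbf{i}\neq\mathbf{j}$ in $\mathcal{E}$ is precisely what permits this error-by-error bookkeeping.

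Next I would Haar-average over $|\psi\rangle$ in the code subspace to produce a state-independent figure of merit. The required moments on a $K$-dimensional subspace are
\begin{align*}
\mathbb{E}[\langle\psi|A|\psi\rangle] &= \tfrac{1}{K}\Tr[A\Pi],\\
\mathbb{E}[\langle\psi|A|\psi\rangle\langle\psi|B|\psi\rangle] &= \tfrac{\Tr[A\Pi]\Tr[B\Pi]+\Tr[\Pi A\Pi B]}{K(K+1)}.
\end{align*}
Plugging $A=\mathcal{K}_{\mathbf{i}}^{\dagger}\Pi\mathcal{K}_{\mathbf{i}}$ into the first identity, and $A=\mathcal{K}_{\mathbf{i}}$, $B=\mathcal{K}_{\mathbf{i}}^{\dagger}$ into the second, yields
\begin{align*}
\mathbb{E}[P_{\mathrm{pass}}] &= \tfrac{1}{K}\sum_{\mathbf{i}}\Tr[\mathcal{K}_{\mathbf{i}}^{\dagger}\Pi\mathcal{K}_{\mathbf{i}}\Pi],\\
\mathbb{E}[P_{\mathrm{id}}] &= \tfrac{1}{K(K+1)}\sum_{\mathbf{i}}\bigl(\Tr[\mathcal{K}_{\mathbf{i}}\Pi]\Tr[\mathcal{K}_{\mathbf{i}}^{\dagger}\Pi]+\Tr[\Pi\mathcal{K}_{\mathbf{i}}\Pi\mathcal{K}_{\mathbf{i}}^{\dagger}]\bigr).
\end{align*}

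To finish, I would apply the cyclic identity $\Tr[\Pi\mathcal{K}_{\mathbf{i}}\Pi\mathcal{K}_{\mathbf{i}}^{\dagger}]=\Tr[\mathcal{K}_{\mathbf{i}}^{\dagger}\Pi\mathcal{K}_{\mathbf{i}}\Pi]$ to merge that ``pass'' piece of $\mathbb{E}[P_{\mathrm{id}}]$ with $\mathbb{E}[P_{\mathrm{pass}}]$, collecting the coefficient $\tfrac{1}{K}-\tfrac{1}{K(K+1)}=\tfrac{K}{K+1}\cdot\tfrac{1}{K}$ on the first trace and $\tfrac{1}{K(K+1)}=\tfrac{K}{K+1}\cdot\tfrac{1}{K^{2}}$ on the second, which is exactly \eqref{eqn:ne_err_prob}. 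As a consistency check, specializing to Pauli Kraus operators $\mathcal{K}_{\mathbf{i}}=\sqrt{p_{\mathbf{i}}}\,P_{\mathbf{i}}$ collapses the sums to $B^{\mathrm{norm}}$ and $A^{\mathrm{norm}}$ and reproduces the depolarizing formula recalled just before the theorem.

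The main obstacle is the second Haar moment: the $K(K+1)$ denominator is what generates the overall $K/(K+1)$ prefactor, and naively substituting the maximally mixed codespace input instead of Haar-averaging would miss this factor and return only an entanglement-fidelity-like difference. A smaller subtlety is justifying that $|\langle\psi|\mathcal{K}_{\mathbf{i}}|\psi\rangle|^{2}$ really counts only outcomes whose post-syndrome state equals $|\psi\rangle$ and not some other code state orthogonal to it, which follows from writing $|\langle\psi|\mathcal{K}_{\mathbf{i}}|\psi\rangle|^{2}=\Tr[|\psi\rangle\langle\psi|\,\mathcal{K}_{\mathbf{i}}|\psi\rangle\langle\psi|\mathcal{K}_{\mathbf{i}}^{\dagger}]$ as the joint probability of Kraus label $\mathbf{i}$ occurring and a measurement onto $|\psi\rangle\langle\psi|$ succeeding.
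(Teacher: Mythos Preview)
Your proposal is correct and follows essentially the same argument as the paper: fix a pure codeword, write the undetectable-error probability as $P_{\mathrm{pass}}(\psi)-P_{\mathrm{id}}(\psi)$, then Haar-average over the code subspace using the first and second moments (the paper cites these as Lemmas 7 and 8 of \cite{Ashikhmin:1999ef} rather than writing them out). The only cosmetic difference is that the paper keeps the $\Pi$ inside $P_{\mathrm{id}}$ as $|\langle\psi|\Pi\mathcal{K}_{\mathbf{i}}|\psi\rangle|^{2}$, which is of course equal to your $|\langle\psi|\mathcal{K}_{\mathbf{i}}|\psi\rangle|^{2}$ since $\Pi|\psi\rangle=|\psi\rangle$.
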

\begin{proof}
    See Appendix~\ref{app:errorchannel}.
\end{proof}

For instance, in the depolarizing channel~(\ref{eqn:depolarizing_channel}) each $\mathcal{K}_{\mathbf{i}}$ is simply a Pauli string $E\in\mathcal{E}^n$ weighted by $p^{wt(E)}(1-3p)^{n-wt(E)}$. Substituting we find that the two terms in (\ref{eqn:ne_err_prob}) are simply the enumerator polynomials $A$ and $B$ evaluated at $z=p$ and $w=1-3p$ as expected.

\subsubsection{General error channels in the Pauli basis}
For each $K_a$, its Pauli decomposition $K_a=\sum_E c_E^a E$ allows us to re-express the error probability in terms of the generalized weight enumerators in Sec~\ref{subsec:genabs}.  In such cases, we can re-organize the sum over $i$ by Pauli types. Again, let the noise model be single qubit errors that are identical across all physical qubits such that 
\begin{equation}
\label{eqn:krausdecomp}
    \rho\rightarrow \sum_i^{q^2} K_i \rho K_i^{\dagger} = \sum_{P,\bar{P}} k_{P\bar{P}} P\rho \bar{P}^{\dagger}.
\end{equation}
Let us label each $P\bar{P}$ pair as $G$ so that $|\{G\}|=q^4$ and so write $k_G=k_{P\bar{P}}$. For example,  $\{{G}\}=\{II,IX,XI,IZ,ZI,XX,ZZ\dots\}$ (all 16 arrangements) for $q=2$.

Then let $wt_G^n$ be a weight function 
\begin{equation}
    wt_G^n (E\otimes F) = \sum_{i=1}^n wt_{G}(E_i\otimes F_{i}^{\dagger}) 
\end{equation}
where 
\begin{equation}
    wt_{G}(E_i\otimes F_i)=\begin{cases}
    1~\text{if $E_i\otimes F_i=G$}\\
    0~\text{otherwise,}
    \end{cases}
\end{equation}
and $E\otimes F = \bigotimes_i E_i\otimes F_i$. Thus $wt_G^n$ counts the number of times $G=P\otimes \bar{P}$ appears in a string $E\otimes F$ where $E,F$ each has length $n$. The relevant terms can then be expanded in this basis as
\begin{widetext}
\begin{align}\label{eqn:kraus_enum}
    B(\{k_G\};\Pi,\Pi)=\sum_{\mathbf{i}}\Tr[\mathcal{K}_i\Pi \mathcal{K}_i^{\dagger}\Pi ] &= \sum_{E,F\in \mathcal{E}^n}\Tr[E\Pi F^{\dagger}\Pi] \prod_G k_{{G}}^{wt_G^n(E\otimes F)}\\
    A(\{k_G\};\Pi,\Pi)=\sum_{\mathbf{i}}\Tr[\mathcal{K}_i\Pi]\Tr[ \mathcal{K}_i^{\dagger}\Pi ] &= \sum_{E,F\in \mathcal{E}^n}\Tr[E\Pi]\Tr[F^{\dagger}\Pi] \prod_G k_{{G}}^{wt_G^n(E\otimes F)}
\end{align}
\end{widetext}
We can then distill a set of enumerators sufficient in describing the effect of all error channels 

\begin{align}\label{eqn:general_enum}
    &\bar{A}(\mathbf{u}_G; M_1, M_2) \\\nonumber
    &\quad = \sum_{E,F\in \mathcal{E}^n}\Tr[EM_1]\Tr[F^{\dagger}M_2] \mathbf{u}^{wt^n_G(E\otimes F)}\\
    &\bar{B}(\mathbf{u}_G; M_1, M_2)\\\nonumber
    &\quad = \sum_{E,F\in \mathcal{E}^n}\Tr[EM_1 F^{\dagger}M_2] \mathbf{u}_G^{wt_G^n(E\otimes F)},
\end{align}
where 
\begin{align*}
    &\mathbf{u}_G^{wt_G^n(E\otimes F)} \\
    &\quad =\underbrace{ u_{II}^{wt_{II}^n(E\otimes F)} u_{IP_1}^{wt_{IP_1}^n(E\otimes F)} \dots  u_{P_qP_q}^{wt_{P_qP_q}^n(E\otimes F)}}_{\textrm all~q^4~terms}.
\end{align*}
We see this is nothing but a specific form of the generalized enumerator we introduced in Sec~\ref{subsec:genabs}. Note that we only need to compute the relevant enumerators once. The effects of different error models are now completely captured by the polynomials and can be evaluated by inserting the relevant values of $c_G$.

By substituting the proper expressions for Kraus operators, we are now in a position to rephrase all identical single qubit error channels in the form of weight enumerators. 
In practice, computing the generalized enumerator that accommodates arbitrary error channels can be rather expensive. Even for qubits, we would in general require 16 different variables in a polynomial. Fortunately for common channels, the computation simplifies and it is possible to express them with a much smaller set. As the Kraus representations are not unique it may be possible that some representations yield more succinct expressions than others. For pedagogical reasons, let us apply this to a few common error channels on qubits. 

\subsubsection{Biased Pauli Errors}
For a noise model where bit flip ($X$) error and phase ($Z$) error can occur independently on physical qubits with probability $p_x,p_z$ respectively. The error channel is 
\begin{align*}
    \rho\rightarrow &(1-p_x-p_z+p_xp_z)\rho+(p_x-p_xp_z) X\rho X\\
    &+p_xp_z Y\rho Y+(p_z-p_xp_z) Z\rho Z.
\end{align*}
For stabilizer codes, the probability that the Pauli error coincides with a non-trivial logical operation is given by the normalized double weight enumerator of \cite{hu2020weight}:
\begin{equation}\label{eqn:doubleenum}
(D-D^{\perp})(x,y,z,w),\end{equation}
evaluated at $x = 1-p_x$, $y = p_x$, $z = 1-p_z$ and $w=p_z$. Applying Theorem \ref{thm:gen_err}, we see that the actual non-correctable logical error probability is given by (\ref{eqn:doubleenum}) but again modified by multiplicative factor $K/(K+1)$ when taken into account the effect of non-orthogonal states.

Similarly, a channel where all Pauli errors have different independent error probabilities 
\begin{equation*}\rho\rightarrow (1-p_x-p_y-p_z)\rho + p_x X\rho X+p_y Y\rho Y+p_z Z\rho Z\end{equation*}
have non-correctable error probability given by the complete enumerators, 
\begin{align}
    p_{ne}=&\frac{K}{K+1}\Big(F(p_x,p_y,p_z,1-p_x-p_y-p_z)\\\nonumber
    -&E(p_x,p_y,p_z,1-p_x-p_y-p_z)\Big).
\end{align}
For definitions of $D,D^{\perp}, E,F$, see \cite{hu2020weight,CL2022} or App.~\ref{app:scalarenum}.

\subsubsection{Coherent error} 
Pauli errors are in some sense classical; for a coherent quantum device, unitary errors are also relevant. Compared to Pauli errors, studies of the impact of coherent errors are less common \cite{TNrealnoise,2017coherentrepcode,FFcoherent} partly hampered by the computational costs. Nevertheless various methods exist. Here we examine a special case of single qubit coherent error and express it in terms of weight enumerator polynomials. 
Suppose we have single qubit/qudit coherent error applied identically to all physical qubits
\begin{equation}
    \rho_i\rightarrow U_i\rho_i U_i^{\dagger}
\end{equation}
acting on each qubit $i$, where each unitary can be decomposed as 
\begin{equation}
    U_i=aI_i+bX_i+cY_i+dZ_i.
\end{equation}

The logical error probability is

\begin{align}
    \bar{p}_{nd} =& \frac{K}{K+1}\Big(\frac 1 K \Tr[U^{\dagger}\Pi U\Pi]\\\nonumber
    &-\frac{1}{K^2}\Tr[U^{\dagger}\Pi]\Tr[U\Pi])\Big).
\end{align}
Expanding $U=\bigotimes_i U_i$ in the Pauli basis, we have $U^{\dagger}=\sum_E k_E^* E$ and $U=\sum_F k_F F$ where we sum $E,F$ over all length $n$ Pauli strings. As coefficients $k_E,k_F$ only depends on the number of Paulis that appear in $F$

\begin{equation}
    k_F = a^{n-w(F)} b^{w_x(F)}c^{w_y(F)}d^{w_z(F)},
\end{equation}
each term in the overall probability is
\begin{widetext}
\begin{align}
    \Tr[U^{\dagger}\Pi U\Pi] &= \sum_{E, F\in \mathcal{E}^n} \Tr[E\Pi F\Pi] k^*_E k_F \\\nonumber
    &=  \sum_{E, F\in \mathcal{E}^n} \Tr[E\Pi F\Pi] a^{n-w(F)}{b}^{w_x(F)}c^{w_y(F)}d^{w_z(F)} \bar{a}^{n-w(E)} \bar{b}^{w_x(E)}\bar{c}^{w_y(E)}\bar{d}^{w_z(E)}\\
        \Tr[U^{\dagger}\Pi]\Tr[U\Pi] &= \sum_{E, F\in \mathcal{E}^n} \Tr[E\Pi]\Tr[F\Pi] k^*_E k_F \\\nonumber
    &=  \sum_{E, F\in \mathcal{E}^n} \Tr[E\Pi]\Tr[F\Pi] a^{n-w(F)}{b}^{w_x(F)}c^{w_y(F)}d^{w_z(F)} \bar{a}^{n-w(E)} \bar{b}^{w_x(E)}\bar{c}^{w_y(E)}\bar{d}^{w_z(E)}.
\end{align}
\end{widetext}

These are nothing but the generalized versions of the complete weight enumerators 

\begin{align}
    &A(\mathbf{u}_1,\mathbf{u}_2; M_1, M_2)\\\nonumber
    &= \sum_{E,F\in \mathcal{E}^n} \Tr[E^{\dagger}M_1]\Tr[F M_2]\mathbf{u}_1^{wt(F)}\mathbf{u}_2^{wt(E)}\\
    &B(\mathbf{u}_1,\mathbf{u}_2; M_1, M_2) \\\nonumber
    &= \sum_{E,F\in \mathcal{E}^n} \Tr[E^{\dagger}M_1 F M_2]\mathbf{u}_1^{wt(F)}\mathbf{u}_2^{wt(E)}
\end{align}
evaluated at $M_1=M_2=\Pi$, $w_1=a, x_1=b, y_1 =c, z_1=d$ and $w_2, x_2,y_2,z_2$ at their complex conjugates. To simplify the notation, we absorbed each 4-tuple of variables into abstract variables $\mathbf{u}_i$ and weight functions $wt(\cdot)$ such that 

\begin{equation}\mathbf{u}_i^{wt(E)} = x^{wt_{x}(E)}y^{wt_{y}(E)}z^{wt_{z}(E)}w^{n-wt(E)}.\end{equation}

\subsubsection{Amplitude damping and dephasing channels}
Amplitude damping channel is relevant for superconducting qubits. Its has a Kraus representation with operators

\begin{align}
K_0=&
    \begin{pmatrix}
    1 & 0\\
    0 & \sqrt{1-\gamma}
    \end{pmatrix}
    \\\nonumber
    =&\frac 1 2( 1 +\sqrt{1-\gamma})I +\frac 1 2 (1-\sqrt{1-\gamma})Z\\\nonumber
    K_1=&
    \begin{pmatrix}
    0 & \sqrt{\gamma}\\
    0 & 0
    \end{pmatrix}
    =\frac{\sqrt{\gamma}}{2} X + \frac{i\sqrt{\gamma}}{2} Y
\end{align}
In this case, we only need to keep 8 distinct variables $\{u_{II},u_{IZ},u_{ZI},u_{ZZ},u_{XX},u_{XY},u_{YX},u_{YY}\}$ as the remaining coefficients $k_{G}$ are 0. In fact, the nonzero coefficients further satisfy $k_{IZ}=k_{ZI}=\lambda_I \lambda_Z, k_{PP}=|\lambda_P|^2, k_{XY}=\bar{k}_{YX}=\lambda_X\bar{\lambda}_Y$ where $\lambda_P$ are the coefficients in the Pauli expansion of the Kraus operators. Therefore, the end polynomial would only require 4 independent variables $\{\lambda_{I},\lambda_X,\lambda_Y,\lambda_Z\}$. In other words, when summing the polynomial in practice, we only sum over the qudit strings of local dimension 4 where the coefficients for $G$ are non-vanishing. Furthermore, one can rewrite the nonzero coefficients as
\begin{equation}
    \prod_G k_G^{wt_G(E\otimes F)} = \prod_P \lambda_P^{wt_P(E)}\prod_{\bar{P}}\bar{\lambda}_{\bar{P}}^{wt_{\bar{P}}(F)}
\end{equation}
which depends on 4 parameters and is no more complicated than the complete enumerator.

For a dephasing channel, $K_0$ remains the same while
\begin{equation}
    K_1=\sqrt{\gamma}\begin{pmatrix}
    0 & 0\\
    0 & 1
    \end{pmatrix}
    =\frac{\sqrt{\gamma}}{2} (I-Z).
\end{equation}

Expanding and simplifying, we find that it only depends on two nonzero coefficients $c_{II}=(1+\sqrt{1-\gamma})/2$ and $c_{ZZ}=(1-\sqrt{1-\gamma})/2$.
Thus this is even easier than computing the original weight enumerator! 
Furthermore, instead of summing over the full Pauli group, we only need to sum over $E\in\mathcal{Z}^n$ where $\mathcal{Z}$ is the set of Pauli strings that only contains $I$ or $Z$.

\subsection{Effective Distance}
While adversarial distance is a useful measure of the goodness of a code, it is also informative to devise more refined measures like effective distances \cite{VQAQEC,Cliffdef} that serve as useful benchmarks of code performance with respect to different error profiles. 
For example, recall that \cite{Cliffdef} defines an effective distance 
\begin{equation}
    d'= \mathcal{N}^{-1}\log(p_{0}(1-p)^{-n})
\end{equation} 
for codes under depolarizing channel, where $p=p_X+p_Y+p_Z$ and $\mathcal{N}$ is some normalization factor that depends on the physical error probabilities. In the original definition, $p_{0}$ is the probability where the Pauli noise implements the most likely non-trivial logical operator. Using enumerators, we can also produce more precise effective distances under depolarizing noise, where $p_{0}$ is replaced by the probability $p_{ne}$ where Pauli noise implements \emph{any} non-trivial logical operator. Similar measures have been used to quantify effective code performance \cite{VQAQEC,QLRL}. For example, one can define another effective distance for some $c_1, c_2$ 
\begin{equation}
    d_{\rm eff} ={c}_1 \log (p_L)+c_2
\end{equation}
such that $d_{\rm eff}$ is higher for lower error rate $p_L$.

Similar to \cite{QLRL}, we also use the normalized logical error probability \begin{equation}p_L^{\rm norm}= p_L/p_{s=0}\end{equation} as a measure of code performance throughout this work.
Here $p_{s=0}$ is the probability of error non-detection and better protection corresponds to a smaller normalized error rate. This is not a distance measure and it corresponds to the probability of uncorrectable error where the ``error correction'' protocol simply discards the quantum state upon detecting an error. 

\subsection{Subsystem codes and Mixed Enumerator}\label{subsec:mixedenum}
The above applications are general and can be used for any quantum code. Let us now focus on a few more applications that are most closely tied to stabilizer codes and subsystem codes.

Mixed enumerators are made by tracing together tensor enumerator of both $\mathbf{A}(\mathbf{u})$ and $\mathbf{B}(\mathbf{u})$ types. 

\begin{proposition}
Let $M(\mathbf{u})$ be a mixed enumerator polynomial obtained from tracing tensor enumerators of $A$ and $B$ types. MacWilliams transform on $M(\mathbf{u})$ produces a dual polynomial $M^{\perp}(\mathbf{u})$ which, up to normalization, can be built from the same tensor network where we exchange the $A$ and $B$ type tensors. 
\end{proposition}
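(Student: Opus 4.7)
The plan is to combine the tensor-level MacWilliams identity with the observation that the substitution $\mathbf{u} \mapsto \Phi(\mathbf{u})$ commutes with tensor contraction. First, I would establish (or recall from \cite{CL2022}) the tensor analogue of Eq.~\eqref{eqn:abstractMacWilliams}: at each individual lego block, the substitution $\mathbf{u} \mapsto \Phi(\mathbf{u})$ exchanges the $\mathbf{A}^{(J)}$ and $\mathbf{B}^{(J)}$ tensors, possibly up to a fixed power of $q$ and a discrete Wigner transform acting on the external leg indices $(E,\bar E)$. The derivation mirrors the scalar case: the replacement
\begin{equation*}
    \mathbf{u}^{\mathrm{wt}(F)} \;\mapsto\; \Phi(\mathbf{u})^{\mathrm{wt}(F)} \;=\; \frac{1}{q^{n-m}}\sum_G \omega(G,F)\,\mathbf{u}^{\mathrm{wt}(G)}
\end{equation*}
collapses the product of traces defining $\mathbf{A}^{(J)}$ into the single trace defining $\mathbf{B}^{(J)}$, via the commutator-phase property of the Pauli basis that underlies Theorem~\ref{eqn:abstractMacWilliams}.

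Next, I would invoke Theorem~\ref{thm:tensortrace1} and the contraction formula~\eqref{eqn:weptrace} to observe that tracing $\wedge_{j,k}$ acts only on the basis elements $e_{E,\bar E}$ and not on the polynomial variable $\mathbf{u}$. Hence the substitution $\mathbf{u}\mapsto\Phi(\mathbf{u})$ commutes with every contraction in the network, so $M(\Phi(\mathbf{u}))$ equals the scalar obtained by contracting the same network in which each $\mathbf{A}^{(J)}(\mathbf{u})$ has been replaced by $\mathbf{A}^{(J)}(\Phi(\mathbf{u}))$ and each $\mathbf{B}^{(J)}(\mathbf{u})$ by $\mathbf{B}^{(J)}(\Phi(\mathbf{u}))$.

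Combining the two steps, applying $\Phi$ to $M(\mathbf{u})$ is equivalent to first swapping each $\mathbf{A}$-tensor into a $\mathbf{B}$-tensor and vice versa at the block level, then retracing the same network. The resulting polynomial is, by construction, the mixed enumerator of the $A$-$B$-swapped tensor network, which is exactly $M^{\perp}(\mathbf{u})$. Any residual Wigner transforms left on external legs of individual blocks cancel pairwise along each contracted edge by the Pauli orthogonality $\tfrac{1}{q^2}\sum_D \omega(E,D)\overline{\omega(E',D)}=\delta_{E,E'}$, leaving only an overall $q$-power.

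The main obstacle will be carefully tracking these factors of $q$: each tensor-level MacWilliams identity contributes $q^{-1}$ per internal bulk leg, and each contracted edge between blocks contributes additional $q$-factors through the pairwise cancellation of Wigner transforms on external legs. Verifying that they combine into a single multiplicative constant --- the ``up to normalization'' qualifier in the statement --- is the most delicate bookkeeping step. Once this is handled, the structural claim follows immediately from the commutation of $\Phi$ with contraction together with the tensor MacWilliams identity on each block.
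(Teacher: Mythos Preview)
Your proposal is correct and follows essentially the same logic as the paper: both arguments hinge on the fact that the MacWilliams substitution $\mathbf{u}\mapsto\Phi(\mathbf{u})$ commutes with the tensor trace, so that applying $\Phi$ to the fully contracted mixed enumerator is the same as applying it block-by-block (where it swaps $\mathbf{A}\leftrightarrow\mathbf{B}$) and then contracting. The paper's proof is much terser---it simply asserts that commutation holds because the generalized Wigner transform is self-inverse up to a constant---whereas you spell out the mechanism explicitly, showing how the Wigner transforms appearing on the open legs of each block cancel pairwise along contracted edges via the orthogonality relation $\tfrac{1}{q^2}\sum_D\omega(E,D)\overline{\omega(E',D)}=\delta_{E,E'}$; this is exactly the ``self-inverse'' property the paper invokes, so the two arguments are the same in substance.
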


\begin{proof}
The MacWilliams transform commutes with trace as long as the generalized Wigner transform is its own self-inverse up to a constant multiple. This is clearly the case when the tensor enumerators are diagonal, when the generalized Wigner transform reduces to regular Wigner transform. The same must also hold true for the generalized transform, as the MacWilliams transform commutes with trace when the tensor enumerators are not mixed.
\end{proof}

A key application is finding the distance of subsystem codes where we need to enumerate all gauge-equivalent representations of the logical operators.
It is convenient to think of the subsystem code as a stabilizer code encoding multiple logical qubits where some of them are demoted to gauge qubits. To obtain its distance, we first enumerate all logical operators, which is given by ${B}(\mathbf{u})$ of the stabilizer code. This can again be obtained by ${A}(\mathbf{u})$ and applying the MacWilliams identity. We then need to enumerate all gauge equivalent logical identities of this subsystem code ${I}(\mathbf{u})$. Technical details in obtaining $I(\mathbf{u})$ can depend on the specific tensor network in question. However, it is rather straightforward if all logical legs in the network are independent, i.e., encoding map defined by the QL tensor network has trivial kernel. For example, this is the case for holographic code, but not for the Bacon-Shor code tensor network. 

For encoding tensor networks that have trivial kernel, we can divide the input legs, which we call logical legs in \cite{CL2021}, into two categories: (i) the ones where operator pushing produce logical operators, which we now call logical legs, and (ii) the ones that alter the state of gauge qubits, which we now call gauge legs. Let us first assume that each tensor has only one such an input leg that is either logical or gauge, which is the case for the holographic code. To enumerate the logical identity, we construct a mixed enumerator --- for each tensor in the QL network whose input leg is logical, we contract the tensor enumerator $\mathbf{A}(\mathbf{u})$ of the local atomic code (e.g. the $[[5,1,3]]$ code in HaPPY) on the corresponding vertex in the enumerator tensor network. If the tensor in the QL network has a gauge leg, then we contract the $\mathbf{B}(\mathbf{u})$ tensor enumerator of the local atomic code (Figure~\ref{fig:holographic}). The resulting tensor network enumerates the weights of all $g\bar{I}$ for $g\in \mathcal{G}$. Then the difference $\tilde{C}(\mathbf{u})=B(\mathbf{u})-{I}(\mathbf{u})$ between these enumerators only contains the weights of non-identity logical operators, which informs us about the distance. This is also known as the \emph{word distance} \cite{Harris2020,uberholography}.

Similarly, if we want to compute the distance of a logical qubit in the stabilizer code (i.e. all logical, no gauge qubits), then we only enumerate the stabilizer equivalent logical operations that act non-trivially on that qubit. For this we insert $\mathbf{B}(\mathbf{u})$ on the vertex containing the logical qubit of interest and $\mathbf{A}(\mathbf{u})$ everywhere else. This enumerator now only counts the stabilizer equivalent of that particular logical qubit, instead of all logical qubits, like $B(\mathbf{u})$. This can be quite relevant in the holographic code, where the central bulk qubit can have a distance that scales with system size, whereas the ones on the peripheral have constant distance \cite{Harris2020}.

For instance, in the holographic HaPPY code, Figure~\ref{fig:holographic}, one can treat the system as a stabilizer code. Then the \emph{stabilizer distance} can be determined by counting all non-identity logical operators associated with a particular bulk qubit. In the figure we choose the logical qubit living on the central tile. The stabilizer distance is then the minimum power of $z$ in $C_0(z)$ for which the coefficient is non-zero. If we treat it as a subsystem code, then the distance should instead be counted by including the logical operators of other bulk qubits as gauge qubits using the enumerator $\tilde{C}_0(z)$. 
\begin{figure}
    \centering
    \includegraphics[width=\linewidth]{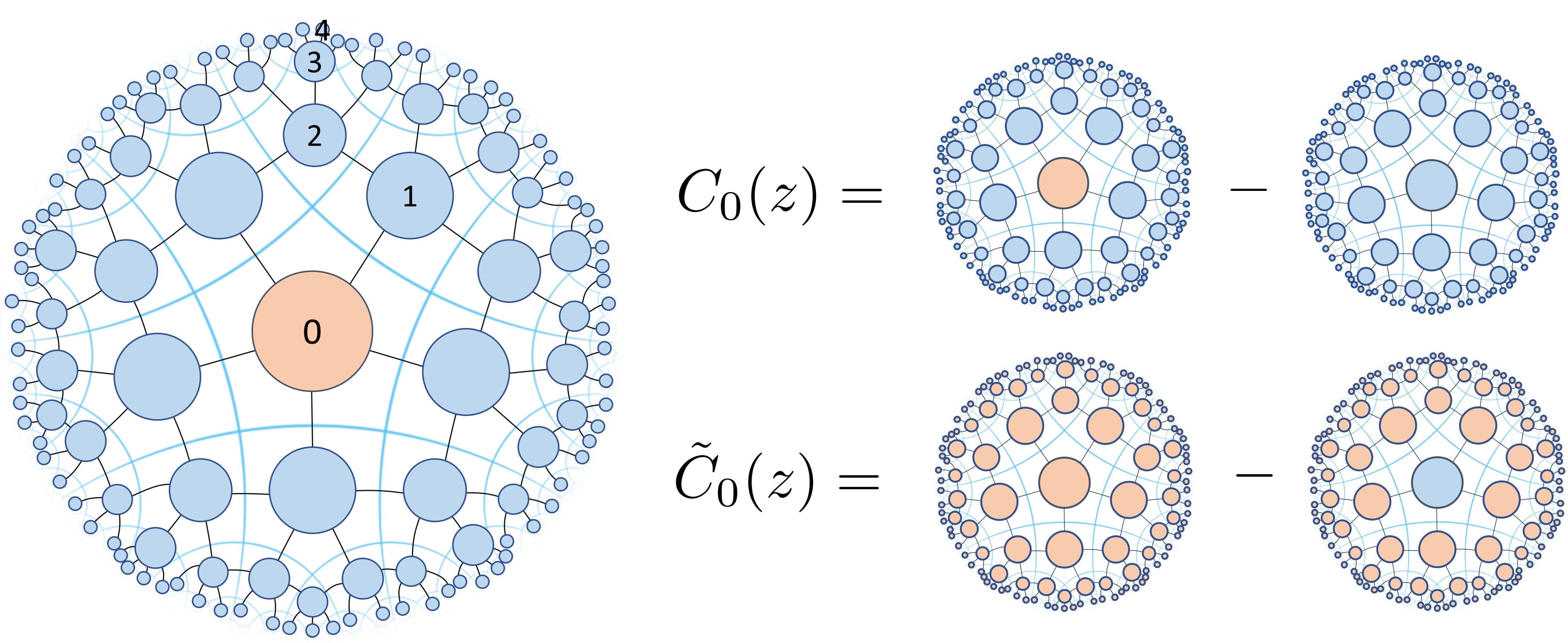}
    \caption{Left: tensor network for a mixed enumerator of the holographic HaPPY pentagon code, where blue indicates insertion of $\mathbf{A}(z)$ and orange $\mathbf{B}(z)$. Right: different tensor networks compute the different distribution of logical operators. The same exercise can be repeated for logical qubits at different distances from the center (labelled 0,1,2,3,4). }
    \label{fig:holographic}
\end{figure}

If each tile has multiple input legs, some of which are gauge and others logical, we then need to make slight modifications to the tensor enumerators used in the above prescription. For a word distance computation, we send $\mathbf{A}(\mathbf{u})\rightarrow \mathbf{A}'(\mathbf{u})$ such that $\mathbf{A}'(\mathbf{u})$ enumerates all logical identity operators of the local atomic code, e.g. the $[[4,2,2]]$ atomic code on even columns of a 2d Bacon-Shor code tensor network (Fig.~\ref{fig:baconshor}). In other words, we enumerate all elements of the non-abelian gauge group $\mathcal{G}$. For Pauli operators, this modification is rather straightforward as we simply count the number of operators that act as identity on the logical legs. More precisely, let 
\begin{equation}
    \Pi'=\frac{1}{|\mathcal{G}|}\sum_{g\in\mathcal{G}} g
\end{equation}
and prepare $\mathbf{A}'(\mathbf{u})=\mathbf{A}^{(J)}(\mathbf{u};M_1=\Pi',M_2=\Pi')$ as a reduced tensor enumerator.  

For stabilizer distance computations, we send $\mathbf{B}(\mathbf{u})\rightarrow\mathbf{B}'(\mathbf{u})$, the latter of which enumerates the number of logical operators that act as the identity on gauge qubits. This can be prepared by a similar reduced tensor enumerator such that $\mathbf{B}'(\mathbf{u}) = \mathbf{A}^{(J)}(\mathbf{u};M_1=\Pi'',M_2=\Pi'')$
where \begin{equation}\Pi''\propto \sum_{g\in \mathcal{G}'}g\end{equation} and $\mathcal{G}'$ is generated by the center of $\mathcal{G}$ and logical operators $\{\mathcal{L}_{\rm logical}\otimes I_{\rm gauge}\}$ that act as identity on the gauge qubits. In other words, we construct a new gauge group $\mathcal{G}'$ where we swapped the roles of the gauge and logical qubits in the original code defined by $\mathcal{G}$.

Finally, for a tensor network whose encoding map has a non-trivial kernel, i.e., the logical legs are inter-dependent, one should take extra care in applying the above recipe for building a useful mixed enumerator. For instance, in the Bacon-Shor code tensor network (Fig.~\ref{fig:baconshor}), multiple input legs are inter-dependent and several of them correspond to the same logical or gauge degree of freedom. One then needs to make sure that the type of the legs (gauge vs logical) is being tracked consistently across different tensors when contracting the tensor network. 

Instead of the mixed enumerators introduced above, we can also directly use tensor enumerators  to study subsystem codes. This is similar to the approach by \cite{LTNC,CL2022}. The recipe for building a relevant tensor enumerator of the code is quite similar. For each tensor that contains the logical leg/qubit, we put down a tensor enumerator $\mathbf{B}(z)$ of the \emph{encoding tensor} at that node (e.g. the tensor of the $[[6,0,4]]$ state in the HaPPY code) in the tensor network, except now that we keep the logical index open in addition to the contracted legs of the tensor. The components of the resulting tensor enumerator $\mathbf{B}(z)$ now contains the weight distribution for each logical Pauli operator. This allows us to read off the distances for each logical operator, after subtracting off the part that enumerates the logical identity by fixing certain tensor indices to $0$. This can be performed efficiently if the number of open logical legs is not too many although the number of gauge qubits can still be high\footnote{In practice, it may be more efficient to compute the tensor enumerator $\mathbf{A}(z)$ of the entire code, then perform a MacWilliams transform on the tensor enumerator.}.

\subsection{Higher Genus Enumerator}
{\color{black} There is a well understood link between multiple weight enumerators of classical codes and modular forms in number theory, wherein the degree of the modular form is the number of codewords whose weight is being enumerated \cite{duke1993codes, runge1996codes}. Consequently, that number is now typically referred to as the ``genus'' of the enumerator. Just as in the classical case, we can extend this to higher genus quantum weight enumerators by introducing weight functions that count the number of factors where tuples of error operators realize specific error patterns. We can also study subsystem codes using genus two enumerators.} 

For concreteness, consider genus $g = 2$. We introduce $q^4$ variables $\mathbf{u} = (u_{G_1,G_2} \::\: G_1,G_2 \in \mathcal{E})$, and weight function $\mathrm{wt}:\mathcal{E}^n\to \mathbb{Z}^{q^4}$ that counts factors
\begin{equation}
    \mathrm{wt}(E,F) = \#\{ i \::\: E_i = G_1,\ F_i = G_2,\ G_1,G_2\in\mathcal{E}\}.
\end{equation}
The genus-$2$ weight enumerators of Hermitian operators $M_1,M_2$ on $\mathfrak{H}\otimes\mathfrak{H}$ are\\
\begin{widetext}
\begin{align}
    A^{(2)}(\mathbf{u};M_1,M_2) &= \sum_{E,F\in\mathcal{E}^n} \Tr((E\otimes F)M_1)\Tr((E\otimes F)^\dagger M_2) \mathbf{u}^{\mathrm{wt}(E,F)}\\
    B^{(2)}(\mathbf{u};M_1,M_2) &= \sum_{E,F\in\mathcal{E}^n} \Tr((E\otimes F)M_1(E\otimes F)^\dagger M_2) \mathbf{u}^{\mathrm{wt}(E,F)}.
\end{align}
\end{widetext}

Notice the coefficients of these enumerators are just what would use for a systems with $2n$ factors. The interesting addition is the additional variables to track correlations in the weights of $E$ and $F$. Indeed if we were to ignore these correlations and evaluate $u_{G_1,G_2} = u_{G_1}u_{G_2}$ then we recover the ordinary enumerators:

\begin{align}
    &A^{(2)}(\{u_{G_1}u_{G_2}\};M_1\otimes M_1',M_2\otimes M_2')\\\nonumber
    &\quad = A(\{u_G\};M_1,M_2)\cdot A(\{u_G\};M_1',M_2')\\\nonumber
    &\quad = A(\{u_G\};M_1\otimes M_1',M_2\otimes M_2').
\end{align}
and similarly for $B^{(2)}$.

To capture new information in the higher genus enumerators, we evaluate their variables in interesting ways. For example, consider the case where $M_1 = M_2 = \Pi_1\otimes \Pi_2$ where $\Pi_1$ and $\Pi_2$ are projections that need not commute. Evaluating
\begin{equation}u_{G_1,G_2} = \left\{\begin{array}{cl} u_{G_1} & \text{if $G_1 = G_2$}\\ 0 & \text{if $G_1 \not= G_2$,}\end{array}\right.\end{equation}
we have
\begin{align}
    \mathbf{u}^{\mathrm{wt}(E,F)} &= \prod_{G_1,G_2} u_{G_1,G_2}^{\mathrm{wt}_{G_1,G_2}(E,F)}\\\nonumber
    &= \left\{\begin{array}{cl} \prod_G u_G^{\mathrm{wt}_G(E)} & \text{if $E = F$}\\ 0 & \text{if $E \not= F$.}\end{array}\right.
    \label{eqn:weightfcng2}
\end{align}
Thus
\begin{align}
    &A^{(2)}(\mathbf{u};\Pi_1\otimes \Pi_2)\nonumber\\ &= \sum_{E\in\mathcal{E}^n} \Tr(E\Pi_1)^2\Tr(E\Pi_2)^2 \mathbf{u}^{\mathrm{wt}(E)}\\
    &B^{(2)}(\mathbf{u};\Pi_1\otimes \Pi_2)\nonumber\\ &= \sum_{E\in\mathcal{E}^n} \Tr(E\Pi_1 E\Pi_1)\Tr(E\Pi_2 E\Pi_2)\mathbf{u}^{\mathrm{wt}(E)}.
    \label{eqn:genus2}
\end{align}

In particular, consider a subsystem code whose gauge group decomposes as $\mathcal{G} = \mathcal{G}_1 \cup \mathcal{G}_2$ where each of $\mathcal{G}_1$ and $\mathcal{G}_2$ are maximal Abelian subgroups and $\mathcal{C}(\mathcal{G}) = \mathcal{G}_1 \cap \mathcal{G}_2$. This is the case for generalized Bacon-Shor codes \cite{BravyiGBSC} where $\mathcal{G}_1$ consists of the $X$-type generators of $\mathcal{G}$ and the row operators, while $\mathcal{G}_2$ is the $Z$-type generators and the column operators.\footnote{In fact this is true of every subsystem code: using the usual symplectic formalism of stabilizer groups, the gauge group becomes a subspace, and a Darboux basis for this subspace provides the two isotropic subspaces that characterize $ \mathcal{G}_1$ and $\mathcal{G}_2$.} Each of $\mathcal{G}_1$ and $\mathcal{G}_2$ could be considered a stabilizer in its own right, however the weight enumerators of these have little to do with subsystem code of $\mathcal{G}$.

Nonetheless, consider them as stabilizers of codes and write the projections onto their code subspaces as $\Pi_1 = \frac{1}{2^{n-k_1}} \sum_{S\in\mathcal{G}_1} S$ and $\Pi_2 = \frac{1}{2^{n-k_2}} \sum_{S\in\mathcal{G}_2} S$ where $k_1,k_2$ are the dimensions of these codes. Then
\be\Tr(E\Pi_1)^2 = \left\{\begin{array}{cl} 4^{k_1} & \text{if $E \in \mathcal{G}_1$} \\ 0 & \text{otherwise,}\end{array}\right.\ee
and similarly for $\Tr(E\Pi_2)^2$, and therefore
\begin{align}
    &A^{(2)}(\mathbf{u};\Pi_1\otimes \Pi_2) = 4^{k_1+k_2} \sum_{E\in\mathcal{G}_1\cap\mathcal{G}_2} \mathbf{u}^{\mathrm{wt}(E)}\\\nonumber
    &\quad = 4^{k_1+k_2} \sum_{d=0}^n \#(\mathcal{E}^n[d] \cap \mathcal{C}(\mathcal{G})) w^{n-d} z^d
\end{align}
is the enumerator of the stabilizer of the subsystem code of $\mathcal{G}$. Also
\be\Tr(E\Pi_1E\Pi_1) = \left\{\begin{array}{cl} 2^{k_1} & \text{if $E \in \mathcal{N}(\mathcal{G}_1)$} \\ 0 & \text{otherwise,}\end{array}\right.\ee
and similarly for $\Tr(E\Pi_2E\Pi_2)$. Hence
\begin{align}
    &B^{(2)}(\mathbf{u};\Pi_1\otimes \Pi_2) = 2^{k_1+k_2} \sum_{E\in\mathcal{N}(\mathcal{G}_1)\cap\mathcal{N}(\mathcal{G}_2)} \mathbf{u}^{\mathrm{wt}(E)}\\\nonumber
    &\quad = 4^{k_1+k_2} \sum_{d=0}^n \#(\mathcal{E}^n[d] \cap \mathcal{N}(\mathcal{G})) w^{n-d} z^d
\end{align}
is the enumerator for the logical operators of the subsystem code.

{\color{black}Notice that up to taking a logarithm and a constant offset, (\ref{eqn:genus2}) is also the stabilizer Renyi-2 entropy when $q=2$, $\mathbf{u}^{wt(E)}\rightarrow \mathrm{constant}$ for $\Pi=|\psi\rangle\langle\psi|$. The stabilizer Renyi entropy is a computable measure of non-stabilizerness.
\begin{definition}
The $\alpha$-stabilizer Renyi entropy\cite{Leone_2022} is given by
   \begin{equation}
    M_{\alpha}(|\psi\rangle) = (1-\alpha)^{-1} \log \sum_{E\in \mathcal{P}^n}\Xi^{\alpha}_E(|\psi\rangle)  -n\log 2  
\end{equation}
where  
\begin{equation}
    \Xi_E(|\psi\rangle) = \frac{1}{2^n} \langle \psi|E|\psi\rangle^2 = \frac{1}{2^n}\Tr[E\Pi]^2.
    \end{equation}
\end{definition}

We see that $\Xi_E(|\psi\rangle)^{\alpha}$ is precisely the coefficient of the type $A$ genus-$\alpha$ weight enumerator (\ref{eqn:weightfcng2}). 

As a corollary, because the higher genus enumerators can be computed using the tensor network method, this provides a method for computing the stabilizer Renyi entropy also. While there are existing methods based on matrix product state \cite{haug_efficient_2023,PauliMPS} that computes stabilizer Renyi entropy, the enumerator tensor network provides an alternative and more general method for computing the magic of a quantum manybody system.

Physically, this also establishes a new connection between weight enumerator and quantum manybody magic, in addition to the entanglement angle that has been explored in the form of sector length. 

}

\subsection{Coset Enumerator and errors with non-trivial syndrome}
Until now, we have been working with particular instances of weight enumerator polynomials, that is, $M_1=M_2=\Pi$ or related operators. For stabilizer codes, they recover the weight distributions of error operators that have trivial syndrome. However, for the purpose of decoding, it is also useful to learn the probability $P(E|s)$ for any error syndrome $s$.

Let $E$ be a Pauli error that gives syndrome $\sigma(E)=s$. We consider the probability $P(E\bar{L})$ of errors that are stabilizer equivalent to $E\bar{L}$, where $\bar{L}$ is any logical operator. If we have this distribution, then we can construct a maximum likelihood decoder by undoing the $E\bar{L}$ with the maximal probability of $P(E\bar{L})$ given syndrome $s$. Similarly, one could apply a Bayesian decoder where $E\bar{L}$ is applied with the probability $p(E\bar{L}|\sigma(E))$ for error correction.

\begin{definition}
A coset weight enumerator for a stabilizer code is given by $A^s(\mathbf{u};E_s,\Pi) = A(\mathbf{u};M_1,M_2)$ where $M_1=M_2^{\dagger}=E_s \Pi$ for some (Pauli) operator $E_s$ with syndrome $s$. Its ``dual'' enumerator is $B^s(\mathbf{u};E_s,\Pi)= B(\mathbf{u};M_1,M_2)$ where $M_1=\Pi$, $M_2=E_s\Pi E_s^{\dagger}$. Their tensorial versions are similarly defined with $M_1,M_2$ taking on these specific values. The same definition applies for the generalized enumerators $\bar{A}^s, \bar{B}^s$.
\end{definition}
Note that $M_1,M_2$ here are no longer hermitian for $A$ and the operators used for $A, B$ are different. As a result, the ``dual'' enumerator is very different from its usual form. We do not use or prove a MacWilliams identity in this work, though it may be interesting to see if an analogous relation exists. More generally, the coset enumerators can be defined for $E_s$ that are not Pauli operators so long as $\Pi_s= E_s\Pi E_s^{\dagger}$ projects onto the error subspace. As the definition of such subspaces and syndromes can be highly code-dependent, here we focus on the instance where $E_s$ are Pauli's.

\begin{proposition}
Up to an overall normalization, the coefficients of the coset enumerator $A^s(\mathbf{u};E_s,\Pi)$ counts the number of coset elements in $E_s\mathcal{S}$ while $B^s(\mathbf{u}; E_s,\Pi)$ enumerates the number of elements $E_s\mathcal{N}(\mathcal{S})$.
\end{proposition}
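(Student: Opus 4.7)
The plan is to prove each half by expanding the code projector in terms of the stabilizer generators and then exploiting the orthogonality of the Pauli basis with respect to the Hilbert--Schmidt inner product.

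First I would handle the $A^s$ claim. Write $\Pi = \frac{1}{|\mathcal{S}|}\sum_{S\in\mathcal{S}} S$ and substitute into
\begin{equation*}
A^s(\mathbf{u};E_s,\Pi) = \sum_{E\in\mathcal{E}^n} \Tr(E\, E_s\Pi)\,\Tr(E^\dagger\, \Pi E_s^\dagger)\,\mathbf{u}^{\mathrm{wt}(E)}.
\end{equation*}
Since $\Tr(P)=0$ for any non-identity Pauli $P$ (up to phase), the factor $\Tr(E E_s S)$ is nonzero exactly when $E E_s S \propto I$, equivalently $E \in \mathcal{S}E_s^\dagger = E_s\mathcal{S}$ (the equality holds up to signs because $E_s^2\propto I$ and Paulis commute up to a sign). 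The same analysis for the second trace yields the identical condition, so both traces are simultaneously nonzero on precisely this coset. Collecting the surviving phases, each coset element $E \in E_s\mathcal{S}$ contributes a uniform nonzero coefficient to $\mathbf{u}^{\mathrm{wt}(E)}$, whence the coefficient of any monomial in $A^s$ counts $|\{E \in E_s\mathcal{S} : \mathrm{wt}(E) = \text{that multi-degree}\}|$ up to the overall normalization.

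Next I would handle the $B^s$ claim. Expand
\begin{equation*}
B^s(\mathbf{u};E_s,\Pi) = \sum_{E\in\mathcal{E}^n} \Tr\!\bigl(E\,\Pi\,E^\dagger\,E_s\Pi E_s^\dagger\bigr)\,\mathbf{u}^{\mathrm{wt}(E)}.
\end{equation*}
Using Pauli commutation $E S E^\dagger = \pm S$, the operator $E\Pi E^\dagger$ is again a projector, and it coincides (up to a global phase) with $E_s\Pi E_s^\dagger$ exactly when $E_s^\dagger E$ conjugates every stabilizer into the stabilizer group, i.e.\ $E_s^\dagger E \in \mathcal{N}(\mathcal{S})$. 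In that case the two projectors satisfy $\Tr(E\Pi E^\dagger \cdot E_s\Pi E_s^\dagger) = \Tr(\Pi) = 2^k$ (with an appropriate sign from the commutation phases), and otherwise the product of the two rank-$2^k$ projectors onto distinct syndrome sectors is orthogonal and has vanishing trace. Thus only $E \in E_s\mathcal{N}(\mathcal{S})$ contribute, and each contributes uniformly to its monomial.

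The main technical nuisance I expect is the bookkeeping of the $\pm 1$ phases arising from $E_s S = \pm S E_s$ and from $E S E^\dagger = \pm S$. These need to be tracked carefully to confirm that the nonzero coefficients are all positive and equal, and to confirm that $E_s\mathcal{S}$ versus $\mathcal{S}E_s^\dagger$ really denote the same coset (as subsets of the Pauli group modulo phase). Everything else is essentially orthogonality of Paulis plus the observation that $\Pi$ and $E_s\Pi E_s^\dagger$ are projectors onto the trivial and $s$-syndrome sectors respectively, so that their intersections with shifted normalizers have the structure claimed.
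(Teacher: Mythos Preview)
Your approach is essentially the same as the paper's: for $A^s$ you use Pauli orthogonality to see the traces vanish off the coset $E_s\mathcal{S}$, and for $B^s$ you use that $E\Pi E^\dagger$ and $E_s\Pi E_s^\dagger$ are syndrome projectors, hence either equal or orthogonal. The paper's proof is just a terse version of exactly this.

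Your worry about tracking $\pm 1$ phases is unnecessary. For $A^s$, note that $M_2=(E_s\Pi)^\dagger=M_1^\dagger$, so the coefficient is $\Tr(E\,E_s\Pi)\,\Tr(E^\dagger\,\Pi E_s^\dagger)=|\Tr(E\,E_s\Pi)|^2$, which is automatically nonnegative and uniform on the coset; this is exactly how the paper phrases it. For $B^s$, both $E\Pi E^\dagger$ and $E_s\Pi E_s^\dagger$ are genuine projectors (no ``global phase'' is possible), so their Hilbert--Schmidt inner product is $\Tr\Pi=2^k$ when they coincide and $0$ otherwise, with no sign ambiguity.
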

\begin{proof}

Let 
\be\Pi_s=\frac{1}{|\mathcal{S}|}\sum_{D\in E_s\mathcal{S}} D\ee then for any $E\in \mathcal{E}^n$

\be \Tr[E\Pi_s]\Tr[E^{\dagger}\Pi_s^{\dagger}] = |\Tr[E\Pi_s]|^2 = q^{2n}/|\mathcal{S}|^2\ee if $E\in E_s \mathcal{S}$ and zero otherwise.
Hence, up to a constant normalization factor, the coefficient of the coset enumerator counts the number of coset elements of a particular weight. As we do not track signs in the distribution, no generality is lost by choosing the left vs right coset.

The $B$ type enumerators have coefficients
\be\Tr[E\Pi E^{\dagger}\Pi_s] = \Tr[\Pi_E \Pi_s]\ee
where $\Pi_s=E_s\Pi E_s^{\dagger}, \Pi_E=E\Pi E^{\dagger}$. As these projectors are orthogonal for different syndromes in a stabilizer codes, this coefficient is only non-trivial when $E\in E_s\mathcal{N}(\mathcal{S})$, i.e., when $E$ is any logical error with syndrome $s$. Therefore, up to normalization, we again obtain an enumerator that captures the weight distribution of $E_s\mathcal{N}(\mathcal{S})$. 
\end{proof}

Practically, the process of preparing this enumerator using tensor network is the same as before except we modify the values of $M_1$ and $M_2$.
First we identify the physical qubits on which $E_s$ has support.
Suppose $E_s$ acts on a particular atomic code non-trivially with $E_s^T$, then we prepare the $A$-type tensor coset enumerator of this code with $M_1=M_2^{\dagger}=E_s^T\Pi^T$ where $\Pi^T$ is the projection onto the code subspace of the local QL. 
Such a tensor enumerator counts elements in the coset $E_s^T\mathcal{S}^T$. 
We then repeat this for all such tensors. For ones that $E_s$ does not have support, we compute their tensor enumerator with $M_1=M_2=\Pi^T$ as usual. Then we contract these tensor enumerators in the same way as we did for building $A(\mathbf{u};M_1,M_2)$, e.g. Figure~\ref{fig:surface_coset}. The resulting enumerator polynomial is the desired $A_s(\mathbf{u};E_s\Pi)$. 
Also note that $M_1, M_2$ take on a special form that satisfy Proposition \ref{prop:stab_dt}, hence we can compute it more efficiently using a tensor network with reduced bond dimension, much akin to its weight enumerator counterparts. 

With these weight distributions, it is obvious that we can then compute $P(E|s)$.
For example, suppose we are given the coset enumerator $A^s(z,w;E_s,\Pi)$ for a code space defined by $\Pi$, then under symmetric depolarizing channel with physical error rate $p$, 
\be p_s= B^s(z=p,w=1-3p)/K\ee is the probability of returning an error syndrome $s$ with noiseless checks and $A^s(z=p,w=1-3p)/K^2$ is the probability of errors that are stabilizer equivalent to $E_s$. Indeed, this also extends trivially to double and complete enumerators by evaluating the polynomial at the respective parameters we used for the trivial syndrome examples in Sec.~\ref{subsec:errdet}. 

In fact, such kind of error probabilities generalize to any error channel. Similar to the non-detectable errors we have analyzed in the previous section, it is possible to compute $p(\bar{L}|s)$ using generalized enumerators as long as we replace $M_1,M_2$ by the appropriate values used in the coset enumerators. 

\begin{theorem}\label{thm:coset}
Consider a stabilizer code where $\Pi$ is the projection onto its code subspace of dimension $K$ and let $E_s$ be an error operator with syndrome $s$. Let the error channel be given by the Kraus form $\mathcal{E}(\cdot) = \sum_{\mathbf{i}}\mathcal{K}_{\mathbf{i}}\cdot \mathcal{K}_{\mathbf{i}}^{\dagger}$. Then 
\begin{align}
    p(E_s\mathcal{S}\cap s) &= \frac{1}{K(K+1)}\Big(\sum_{\mathbf{i}}\Tr[\mathcal{K}_{\mathbf{i}} \Pi \mathcal{K}_{\mathbf{i}}^{\dagger} \Pi_s] \\\nonumber
    & + \sum_{\mathbf{i}}\Tr[\mathcal{K}_{\mathbf{i}}\Pi{E}_s^{\dagger} ] \Tr[\mathcal{K}_{\mathbf{i}}^{\dagger}{E}_s\Pi]\Big)\\
    p_s&= \frac 1 K \sum_{\mathbf{i}}\Tr[\Pi \mathcal{K}_{\mathbf{i}}^{\dagger} \Pi_s \mathcal{K}_{\mathbf{i}}],
\end{align}
where $\Pi_s=E_s\Pi E_s^{\dagger}$.
\end{theorem}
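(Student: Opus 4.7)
The plan is to mirror the Haar-averaging argument underlying Theorem~\ref{thm:gen_err}, now lifted from the trivial syndrome sector to an arbitrary coset $E_s\mathcal{S}$. The operational content of $p(E_s\mathcal{S}\cap s)$ is the probability, averaged over uniformly random pure logical states $|\psi\rangle$, that applying the channel and then the recovery $E_s^{\dagger}$ upon observing syndrome $s$ restores $|\psi\rangle$ exactly. This is equivalent to averaging the squared overlap of the channel output with $E_s|\psi\rangle$, so I would first write
\begin{equation*}
    p(E_s\mathcal{S}\cap s) = \sum_{\mathbf{i}} \int d\psi\, \bigl|\langle\psi|E_s^{\dagger}\mathcal{K}_{\mathbf{i}}|\psi\rangle\bigr|^2.
\end{equation*}
The justification is a coset count: Pauli components of $\mathcal{K}_{\mathbf{i}}$ lying in $E_s\mathcal{S}$ contribute $\pm 1$, those in distinct syndrome cosets contribute zero, and those in $E_s(\mathcal{N}(\mathcal{S})\setminus\mathcal{S})$ are nulled by the diagonal matrix element on a generic logical $|\psi\rangle$.

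Next I would apply the standard second-moment Haar identity on the $K$-dimensional code subspace,
\begin{equation*}
    \int d\psi\, |\langle\psi|A|\psi\rangle|^2 = \frac{|\Tr[\Pi A \Pi]|^2 + \Tr[\Pi A \Pi A^{\dagger} \Pi]}{K(K+1)},
\end{equation*}
with $A = E_s^{\dagger}\mathcal{K}_{\mathbf{i}}$. The first-trace piece, using $\Pi^2 = \Pi$ and cyclicity, becomes $\Tr[\mathcal{K}_{\mathbf{i}}\Pi E_s^{\dagger}]\Tr[\mathcal{K}_{\mathbf{i}}^{\dagger}E_s\Pi]$. For the second piece I would invoke the defining identity $E_s \Pi E_s^{\dagger} = \Pi_s$, which holds because $E_s$ is a Pauli representative of the syndrome $s$, to collapse $\Tr[\Pi E_s^{\dagger}\mathcal{K}_{\mathbf{i}}\Pi \mathcal{K}_{\mathbf{i}}^{\dagger} E_s \Pi]$ into $\Tr[\mathcal{K}_{\mathbf{i}}\Pi \mathcal{K}_{\mathbf{i}}^{\dagger}\Pi_s]$. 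Summing over $\mathbf{i}$ and dividing by $K(K+1)$ reproduces the first equation of the theorem.

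For $p_s$ the plan is lighter: the syndrome-$s$ probability starting from a Haar-random pure logical input is $p_s = \sum_{\mathbf{i}}\int d\psi\, \langle\psi|\mathcal{K}_{\mathbf{i}}^{\dagger}\Pi_s\mathcal{K}_{\mathbf{i}}|\psi\rangle$, and the first-moment Haar identity $\int d\psi\, \langle\psi|B|\psi\rangle = \Tr[\Pi B]/K$ together with cyclicity delivers $p_s = \frac{1}{K}\sum_{\mathbf{i}}\Tr[\Pi \mathcal{K}_{\mathbf{i}}^{\dagger}\Pi_s \mathcal{K}_{\mathbf{i}}]$ immediately.

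The main obstacle is conceptual rather than computational: justifying that $|\langle\psi|E_s^{\dagger}\mathcal{K}_{\mathbf{i}}|\psi\rangle|^2$ is the right quantity to average when $\mathcal{K}_{\mathbf{i}}$ is not itself Pauli. A Pauli decomposition of each $\mathcal{K}_{\mathbf{i}}$ together with the fact that a Haar-random logical state discriminates between stabilizer-equivalence classes of logicals provides this justification, and also explains why the resulting formula naturally factors through the coset enumerators $A^s, B^s$ defined just above the theorem. Once this identification is in hand, everything reduces to the same trace manipulations that appear in Appendix~\ref{app:errorchannel} for the trivial-syndrome case.
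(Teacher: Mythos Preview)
Your proposal is correct and follows essentially the same route as the paper's proof in Appendix~\ref{app:errorchannel}: both reduce to the Haar integral $\sum_{\mathbf{i}}\int d\mu(|\tilde\psi\rangle)\,|\langle\tilde\psi|E_s^{\dagger}\mathcal{K}_{\mathbf{i}}|\tilde\psi\rangle|^2$, then apply the second-moment identity (the paper cites it as Lemma~8 of \cite{err_det}) and simplify via $E_s\Pi E_s^{\dagger}=\Pi_s$; the $p_s$ computation is likewise identical. The one point where the paper is cleaner is the \emph{justification} of that starting integral: rather than arguing via a Pauli decomposition that non-trivial logical components are ``nulled'' on a generic $|\psi\rangle$ (which is not true pointwise---$\langle\psi|\bar L|\psi\rangle$ need not vanish before averaging), the paper simply \emph{defines} $p(E_s\mathcal{S}\cap s)$ operationally as the overlap $\Tr[E_s|\tilde\psi\rangle\langle\tilde\psi|E_s^{\dagger}\,\Pi_s\mathcal{K}_{\mathbf{i}}|\tilde\psi\rangle\langle\tilde\psi|\mathcal{K}_{\mathbf{i}}^{\dagger}\Pi_s]$ and then uses $\Pi_s E_s|\tilde\psi\rangle=E_s|\tilde\psi\rangle$ to drop the syndrome projectors, arriving at your starting expression without any coset heuristics.
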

Note that $E_s$ need not be a Pauli operator but can take on a general form $P_s\bar{L}$ where $P_s$ is a Pauli error with syndrome $s$ and $\bar{L}$ is any unitary logical operation. For proof and generalization where the logical error is a general quantum channel, see Appendix~\ref{subapp:gen_LEC}. We see that these terms in the expression share a great deal of similarities with Theorem \ref{thm:gen_err} which computes the logical error probability for trivial syndromes. Indeed, by expanding the Kraus operators in the Pauli basis, we see that these expressions can again be written as generalized weight enumerators (\ref{eqn:general_enum}) that we used to compute the uncorrectable error rates. To obtain these error probabilities, we follow the identical recipe by decomposing the Kraus operators in the Pauli basis $\sum_i K_i\cdot K_i^{\dagger}=\sum_{P\bar{P}} k_{P\bar{P}} P\cdot \bar{P}$ and evaluate $\mathbf{u}(k_{P\bar{P}})$ at the appropriate values based on that decomposition (c.f.~ eqn \ref{eqn:krausdecomp}). Finally, we  set $M_1=\Pi, M_2=E_s\Pi E_s^{\dagger}$ for the $B$ type enumerator and $M_1=\Pi E_s^{\dagger}, M_2=M_1^{\dagger}$ for the $A$ type.

Formally, 
\begin{align}
    p(E_s\mathcal{S}\cap s) &= \frac{1}{K(K+1)}\Big( B(\mathbf{k}; \Pi, \Pi_s) \\\nonumber&\quad + A(\mathbf{k}; \Pi E_s^{\dagger},  E_s\Pi) \Big)\\
    p_s &=\frac 1 K B(\mathbf{k}; \Pi, \Pi_s)
\end{align}
where $\mathbf{k}=\{k_{P\bar{P}}\}$ are the coefficients from the Pauli expansion.
With these coset enumerators in hand, we are now ready to discuss optimal decoders for general noise channels.

\subsection{Decoders from weight enumerators}\label{subsec:decoder}
We see that one can express the probability 

\begin{equation}
    p(E_S\mathcal{S}|s) = p(E_s\mathcal{S}\cap s)/p_s
\end{equation}
entirely in terms of weight enumerators. Suppose $E_s=P_s\bar{L}$ where $P_s$ is any Pauli error with syndrome $s$, which can be obtained by solving a set of $n-k$ linear equations, and $\bar{L}$ is a logical operator, then the set of probabilities $\mathcal{P}_s=\{p(P_s\bar{L}\mathcal{S}|s)\}$, as $\bar{L}$ runs over logical operators, is sufficient for us to perform error correction. It is customary to generate the set $\mathcal{P}_s$ for the set of $\bar{L}$ that are logical Pauli operations as they form an operator basis for the code subalgebra and are thus sufficient to generate the conditional probabilities for any unitary logical operators. 

\subsubsection{Maximum likelihood and Bayesian decoders}
It is straightforward to implement a maximum likelihood decoder where we identify the logical operator $\bar{L}_m$ for which $p(P_s\bar{L}_m\mathcal{S}|s)=\max \mathcal{P}_s$. Then error correction is performed by acting $\bar{L}$ and $P_s$ following the syndrome measurements. In this case, it is sufficient to compute just the $A$ enumerator because the $B$ enumerators are independent of $\bar{L}$ and only add to an overall normalization that not impact our choice of the maximum element. When multiple global maxima exist, then we choose one at random.

 One can also correct errors based on the probability distribution of $p(P_s\bar{L}\mathcal{S}|s)$ where we act on the state using operator $P_s\bar{L}$ with the selfsame probability. We call this the Bayesian decoder. As we require that $\sum_{\bar{L}}p(P_s\bar{L}\mathcal{S}|s)=1$ when summing over all Paulis, it is again sufficient to only compute the type $A$ enumerator as the constant from $B$ can be obtained by solving the normalization condition.

For each syndrome $s$, the complexity in implementing these decoders is therefore the complexity $\mathcal{C}(A^s)$ of computing $A^s$ from tensor contractions. For some codes this can be performed efficiently, which we further elaborate in Sec.~\ref{sec:complexity}. Nevertheless, even if each such contraction is efficient, we would still have to compute $q^{2k}$ number of enumerators as there are $q^{2k}$ number of distinct logical Pauli operators. Therefore the overall complexity estimate for such a decoder is $O(\mathcal{C}(A^s)q^{2k})$.

\subsubsection{Marginals}
For a code where $k$ is large, building the above maximum likelihood decoder remains challenging.  However, it is possible to compute the ``marginals'' efficiently. Let us write any logical Pauli operator $\bar{L}$ as
\begin{equation}
    \bar{L}(\mathbf{a},\mathbf{b})\propto \bigotimes_{i=1}^k X_i^{a_i}Z_i^{b_i},~ a_i,b_i=0,\dots q-1,
\end{equation}
where $\mathbf{a},\mathbf{b}$ are $k$-tuples with $\mathbf{a}=(a_1,a_2,\dots,a_k)$ and the same for $\mathbf{b}$.
Let $\bar{L}_i(a_i,b_i)$ be the logical Pauli acting on the $i$th logical qubit.  Consider the marginal error probability 

\begin{equation}
    p(E\bar{L}_i)=\sum_{a_j,b_j;\forall j\ne i} p(E\bar{L}(\mathbf{a},\mathbf{b})).
\end{equation}

This can be computed using the mixed enumerator. Recall that one can compute $p(\bar{I}_i|s=0)$ by inserting $B$-types tensor enumerators for all atomic blocks whose logical leg correspond to qubits $j\ne i$, and $A$ type for blocks whose logical legs correspond to qubit $i$. This enumerator, which we called $I_i(z)$, records the weight distribution of  logical operators $\mathcal{N}_i\subset \mathcal{N}(S)$, which consists of all Pauli logical operators in the code that act as the identity on the $i$-th logical qubit. If we treat the other qubits $\ne i$ as gauge qubits, we can think of it as recording all gauge equivalent representations of the identity operator. For example, this is what we have done for the Bacon-Shor code and the holographic code. For general cosets of operator $E_s$, we now compute mixed coset enumerator for operator $E_s$ such that $\sigma(E_s)=s$. Let us construct the mixed enumerators
\begin{align}
    M(\mathbf{u};E_s,\Pi)=\wedge_{J,J_j} &\Big[A^{J}_{(i)}(\mathbf{u};E_s^A,\Pi) \\\nonumber
    &\bigotimes_j B^{J_j}_{\ne i}(\mathbf{u};E_s^B,\Pi)\Big]
\end{align}
where $E_s=E_s^A\otimes E_s^B$ and $E_s^{A,B}$ are the Pauli substrings that only have support on physical legs of the atomic codes that are mapped to type $A$ or $B$ tensor enumerators respectively. We take $\wedge_{J,J_j}$ to be tracing over the appropriate legs required by the tensor network. This now enumerates the weights of $E_s\mathcal{N}_i$. We can repeat this a number of times for different $E_s=\bar{L}_iP_s$s, and the resulting enumerators would provide the requisite error probabilities $p(P_s\bar{L}_i|\sigma(P_s))$. 

For example, under symmetric depolarizing noise with probability $p$, 

\begin{equation}
    p(P_s\bar{L}_i|\sigma(P_s)) =\\ M(z=p,w=1-3p;P_s\bar{L}_i,\Pi).
\end{equation}
For other error models, we again select the appropriate parameters for the abstract n-tuple $\mathbf{k}$ and weight function. See Sec.~\ref{subsec:errdet} and App.~\ref{app:scalarenum}.

A decoder can then choose an operator with the highest probability then correct the error by acting $E\bar{L}_i$ on the system. In the case where no other logical qubits are present, this reduces to the maximum likelihood decoder.

It is easy to generalize this such that $\bar{L}_i$ can include multiple qubits logical qubits in some set $\kappa$ such that 
\begin{align}
    &M(\mathbf{u};E_s,\Pi)=\\\nonumber
    &\wedge_{J_i,J_j} \bigotimes_{i\in \kappa}  A^{J_i}_{i}(\mathbf{u};E_s^A,\Pi) \bigotimes_j B^{J_j}_{\not\in \kappa}(\mathbf{u};E_s^B,\Pi).
\end{align}
However, in general, if we include $|\kappa|$ qudits in the mixed enumerator such that we only integrate out $j\not \in \kappa$, then we need to check $q^{2|\kappa|}$ terms to find the error operator with the highest marginal probability.

\subsection{Logical Error Rates}
\subsubsection{Exact Computations} 
We have seen previously how one can compute the trivial syndrome enumerators, which yield the uncorrectable error probability. We can interpret the value $p_D=1-A/B$ as a logical error rate for a decoder with perfect syndrome measurements such that one discards the state whenever a non-trivial syndrome is measured. For such processes, one can define an error detection threshold $p_{th}$ such that $p_D$ is suppressed as a function of $d$ for error rates below the threshold. One such example is shown in Fig.~\ref{fig:threshold} for the surface code and 2d color code.

\begin{remark}\label{rmk:threshold}
If a class of quantum codes has an error detection threshold under i.i.d. depolarizing error, then the threshold is $p_{\rm th}=1/6$.
\end{remark}

\begin{proof}
    Let $A^*(z), B^*(z)$ be the enumerators with normalization such that $A_0^*, B_0^*=1$. Then for a quantum code with dimension $K$, $A(z) = K^2 A^*(z)$ and $B(z) = K B^*(z)$. Thus
\be\frac{B^*(z) - A^*(z)}{B^*(z)} = 1 - \frac 1 K \frac{A(z)}{B(z)}.\ee
Now, homogenizing, the MacWilliams identity has $B(w,z) = A((w+3z)/2, (w-z)/2)$, and using $z = p = 1/6$ and $w = (1-3p) = 1/2$, we see $A(1/2, 1/6) = B(1/2, 1/6)$ for every quantum code. Therefore all curves $p_L(p,1-3p)$ cross at $(1/6,1-1/K)$. 
\end{proof}

 We may similarly ask whether the current tensor network method can efficiently compute the exact logical error rate under other decoders. We do not provide such a method in this work, though it may be an interesting direction. A simple application of the current method fails to be efficient in the following examples. 
 
The exact logical error rate with maximum likelihood decoder can be expressed as 

\begin{equation}
    p_L = \sum_s [B^s(\mathbf{u};P_s,\Pi)-\max_{\forall\bar{L}\in \mathcal{L}}\{A^s(\mathbf{u};P_s\bar{L},\Pi)\}]
\end{equation}
and the error rate for a Bayesian decoder is 

\begin{equation}
    p_L =1- \sum_s \frac{1}{B^s(\mathbf{u};P_s,\Pi)}\sum_{\bar{L}}  A^s(\mathbf{u};P_s\bar{L},\Pi)^2.
\end{equation}

We see that both of them involve non-linear functions of the weight enumerators, which makes it difficult to compute efficiently through a tensor network method. It would appear that one has to sum over exponentially many syndromes even if each enumerator can be produced efficiently. 

This does not mean that enumerators cannot improve the computation of logical error rates. In practical decoding, it is far more relevant to consider a sample with only polynomially many distinct syndromes after running the decoder for a reasonable amount of time. It is also the case for all sampling-based simulations that are currently used for error and threshold computations.

\subsubsection{Error rate estimation} 
In addition to computing exact error probabilities given syndrome $s$, one can also use enumerators to provide more accurate estimates for logical error rates in conjunction with sampling-based methods. 

Conventional sampling methods generate errors $E$ based on particular noise models. Once the error is generated, its associated syndromes $\sigma(E)$ are determined. Note that for noiseless syndrome measurements, $\sigma(E)$ always outputs a syndrome $s$ deterministically. However, for more realistic models with faulty measurements, the outcome $s$ can depend also on the noisy measurement process. A decoder $\mathcal{D}(\sigma(E))$ then takes the syndrome and suggests a recovery operator $R$ with probability $p_{\mathcal{D}}(R|s), \sum_R p(R|s)=1$. If $RE\sim \bar{L}$ is equivalent to a non-identity logical operator, then a logical error has occured and this adds to the error probability $p_L$. This process is repeated until a large enough sample size has been established such that the overall $p_L$ estimate is believed to have sufficiently converged.

We can improve up this method, especially those derived from rare events/syndromes with enumerators. 
Given an error model (e.g. depolarizing noise with fixed error probability $p$) a set of errors are generated using existing sampling methods. Subsequent syndrome measurements (either noiseless or noisy) lead to a sampled syndrome distribution $\mathcal{P}(s)$ such that $\sum_s \mathcal{P}(s)=1$ and only has support over polynomially many distinct syndromes. 
In our case, we assume that we are given the distribution $\mathcal{P}(s)$, the error correcting code (along with its tensor network construction), the error model in question, and a decoder $\mathcal{D}$ of the user's choice. 

The logical error rate estimates are thus given by
\begin{equation}\label{eqn:rateest}
    \bar{p}_L(\mathbf{k}) = \sum_s \mathcal{P}(s) \sum_R p_{\mathcal{D}}(R|s) \Big( 1-\frac{A_s(\mathbf{k};R,\Pi)}{B_s(\mathbf{k};\Pi,\Pi_s)}\Big)
\end{equation}
where $A_s(R)/B_s$ is precisely the expected probability where the decoder's choice of $R$ successfully corrects the error based on the syndrome. For a maximum likelihood decoder, $p_{\mathcal{D}}(R|s)$ is also trivial except for one $R$. 
For a pure sampling based method, the probability $A_s(R)/B_s$ would usually require a large number of events before the estimate converges to its true value. Therefore, its estimate for rare syndromes can be wildly inaccurate.
Here with the enumerator method, we can compute these quantities exactly, thereby improving the accuracy for $\bar{p}_L$. It is also useful sometimes to further sort the logical error rate by operator types. This can be done by excluding certain terms in (\ref{eqn:rateest}) from the summation over $R$. We do not provide its explicit forms here as the extension is somewhat trivial and situation-dependent. 

In scenarios where the computation cost of enumerators are relatively expensive, one can complement, for instance, the Monte Carlo method, where only error rates associated with rare syndromes are computed using weight enumerators.

\textit{Faulty measurements:} With logical error probabilities in hand, we can compute error thresholds in the usual way by repeating such calculations or estimations for a class of codes with different distances. Note that the use of enumerators above is compatible with any error model composed of identical single qubit error channels. The computation also fully accommodates different models of noisy syndrome measurements, as they only affect the distribution $\mathcal{P}(s)$. Furthermore, the impact of each decoder can be independently evaluated to produce the conditional probability $p(R|s)$. We hasten to point out that the choice of decoder here is completely arbitrary and not limited to the decoders we constructed in Sec~\ref{subsec:decoder} based on weight enumerators. 

Since the contributions from the error channel, noisy measurements, decoders, and enumerators can be separated into independent modules, one can prepare them separately. For example, one can prepare a syndrome distribution $\mathcal{P}_0(s)$ with noiseless measurements. If the measurements are noisy, they are given by some set of transition probabilities $p(s_f|s_i)$ which depend solely on the noise model associated with the measurement. Composing these probabilities we get \be\mathcal{P}(s)=\sum_{s_i}\mathcal{P}_0(s_i)p(s|s_i).\ee Once the set of relevant syndromes have been established, which we take to be $poly(n-k)$, we create the decoding table from which $p_{\mathcal{D}}(R|s)$ can be obtained. At the same time, the enumerators that depend on $s$ and $R$ may be prepared in parallel, if needed. In many cases, exact contractions may not be needed as we may not require the same level of accuracy for distance verification. In such cases, approximate but efficient contraction algorithms maybe sufficient. 

{\color{black}
\section{Tensor networks for Codes}
As our primary tool for speed up comes from the QL description of the code, to make use of these methods, one also needs a modular construction for the codes. For codes created from QL, this is automatically true. Here we also provide a general method for decomposing known codes and states into smaller ``quantum lego blocks''.
We give two approaches for performing this decomposition based on how the code or state is prepared. As the enumerator for graph states\cite{SLD} is of interest, we also provide an explicit QL decomposition of all graph states. 
\subsection{Quantum codes from quantum Tanner graph}
For any stabilizer code with Abelian and non-Abelian stabilizer group (such as XP stabilizer codes\cite{XS,XP,shen2023quantum}), the codewords can be defined by the simultaneous $+1$ eigenspace of the stabilizer elements. 

Any such code can be rewritten with a QL decomposition where the tensor network is isomorphic to the Tanner graph of the code (Fig.~\ref{fig:qtanner}a). 
\begin{figure}
    \centering
    \includegraphics[width=\linewidth]{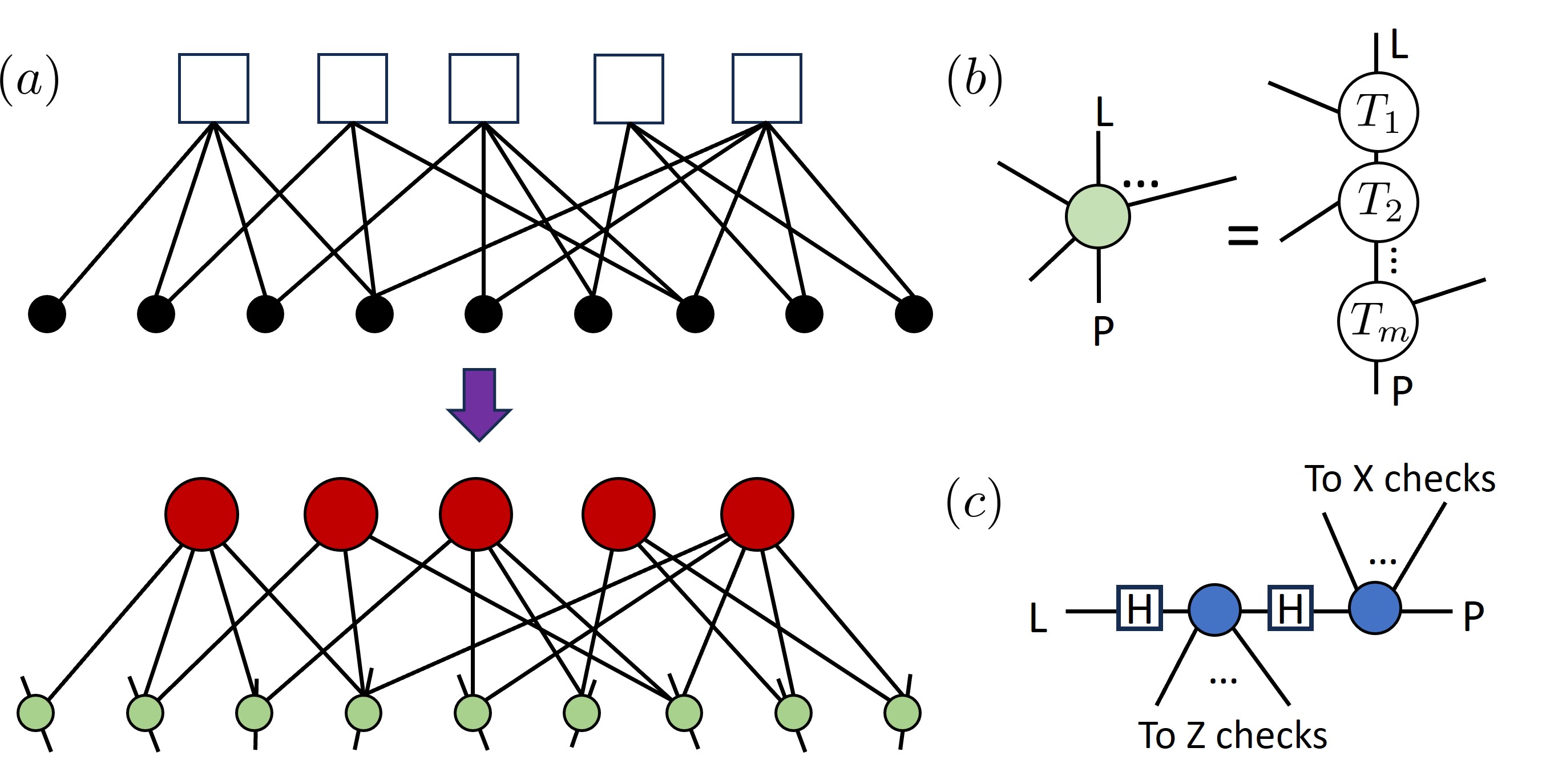}
    \caption{(a) A quantum code written as a Tanner graph where qubits are nodes and checks are squares. A tensor network isomorphic to this graph can be constructed from $Z$-spiders (red) and local QL codes (green). (b) Each local code has two dangling legs where $L$ marks the logical leg and $P$ marks the physical leg. The remaining legs are contracted with the check node tensors. (c) For the special case of a CSS code, the local code simplifies to the contraction of Hadamard tensors and X-spiders (blue).}
    \label{fig:qtanner}
\end{figure}
For each check node connected to $\ell$ qubits/qudits, we place a degree GHZ tensor which is the encoding tensor of a repetition code. The GHZ tensor is also known as a Z-spider in ZX-calculus \cite{Coecke_2011}. A similar repetition code with $H$ applied to each leg is known as an X-spider. For each physical node checked by $m$ checks with operators $\{g_i, i=1,\dots,m\}$ acting on the physical node, we place a tensor (green) with degree $m+2$ shown in Fig.~\ref{fig:qtanner}b where each of the $m$ contracted legs is connected to the corresponding check node tensor the qubit is checked by. The remaining two dangling legs represent the logical and physical degrees of freedom associated with the atomic code at each physical node. The specific $T_i$ tensor of degree 3 is completely determined by the type of operator $g_i$ that is present in the stabilizer check. If the code is a Calderbank-Shor-Steane (CSS) code, then the local tensor takes the simple form of Fig.~\ref{fig:qtanner}c, where it is a contraction between X-spiders (blue), which are repetition codes in a different basis, and Hadamard tensors. With this construction, one can easily build up tensor networks for existing codes with known stabilizer checks. When applied to topological codes, the quantum Tanner graphs are structurally similar to existing constructions with similar bond dimension, e.g. \cite{Guetal} for the toric code, which is CSS, and for the twisted quantum double model\cite{wenlevin,Vidal}, which has a non-abelian stabilizer group.  Details of this construction are given in Appendix~\ref{app:qtanner}.
The number of seed codes from the quantum Tanner graph are essentially optimal where we have $2n-k$ atomic legos for an $[[n,k]]$ code, identical to the usual Tanner graph description. If the code is a Low-Density Parity-Check (LDPC) code, then each tensor node also has bounded degree.

\subsection{Circuit-based tensor network}
The quantum Tanner graph tensor network covers the vast majority of the existing quantum codes. For codes and states with an encoding circuit, then one can also convert the circuit into a tensor network\footnote{In the context of stabilizer codes, its Clifford encoding circuit is also easily obtainable \cite{Aaronson}.} as it is simply the contraction of tensors of unitary gates and product $|0\rangle$ states. For Clifford gates, the states dual to these tensors are stabilizer states. For instance, concatenated codes can naturally yield a log-depth tree tensor network. In general, the connectivity of a subregion of the network can scale linearly as the number of gates/tensors inside the region. For 1d (spatially) local circuit, it is in principle possible to cut through the network in the time direction. The edge cuts are upper bounded by the circuit depth $T$, and hence each contraction is no costlier than $O(\exp(T))$. For log-depth, this clearly yields exponential speed up. For $d$ spatial dimensions, the number of edge cuts is upper bounded by the surface area of the spacetime region $\ell^{d-1}T$ where $\ell^d \lesssim n$. Therefore the upper bound for the cost for each contraction is $O(\exp(n^{1-1/d}T))$. This can still lead to a sub-exponential speedup as long as $T\lesssim n^{1/d-\epsilon}$ asymptotically. 

A simple procedure of such a circuit-to-tensor-network conversion is constructed in Fig.~\ref{fig:graph_state} for all graph states. Each CZ gate can be converted into the fusion of 3 tensors which can then be individually simplified through local recombinations (Fig.~\ref{fig:graph_state}a). This produces a tensor network with the same graph connectivity as the graph state with GHZ tensors (Z spiders) on the nodes and Hadamard tensors inserted on the edges (Fig.~\ref{fig:graph_state}b).

\begin{figure}
    \centering
    \begin{subfigure}[t]{0.45\textwidth}
        \centering
        \includegraphics[width=\linewidth]{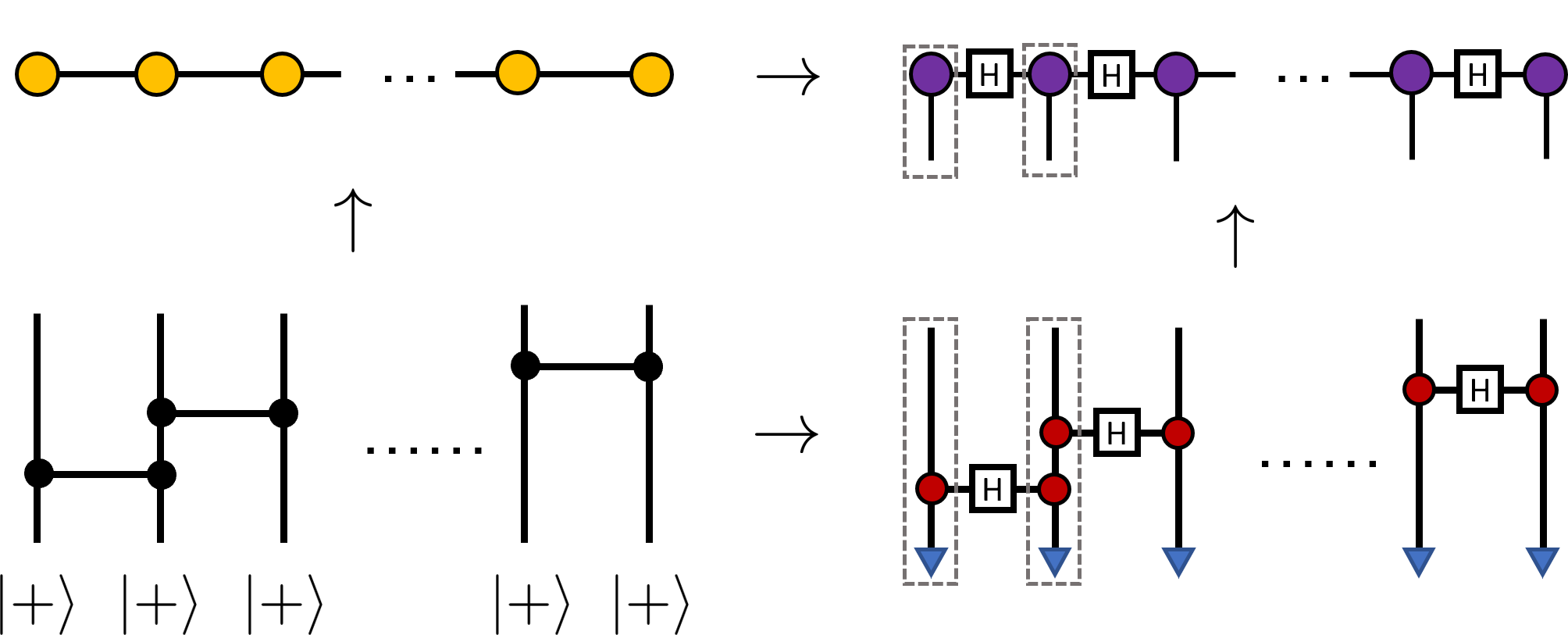} 
        \caption{Example: a 1d cluster state tensor network from encoding circuit. } \label{fig:1dcluster}
    \end{subfigure}
    \hfill
    \begin{subfigure}[t]{0.45\textwidth}
    \centering
    \includegraphics[width=\linewidth]{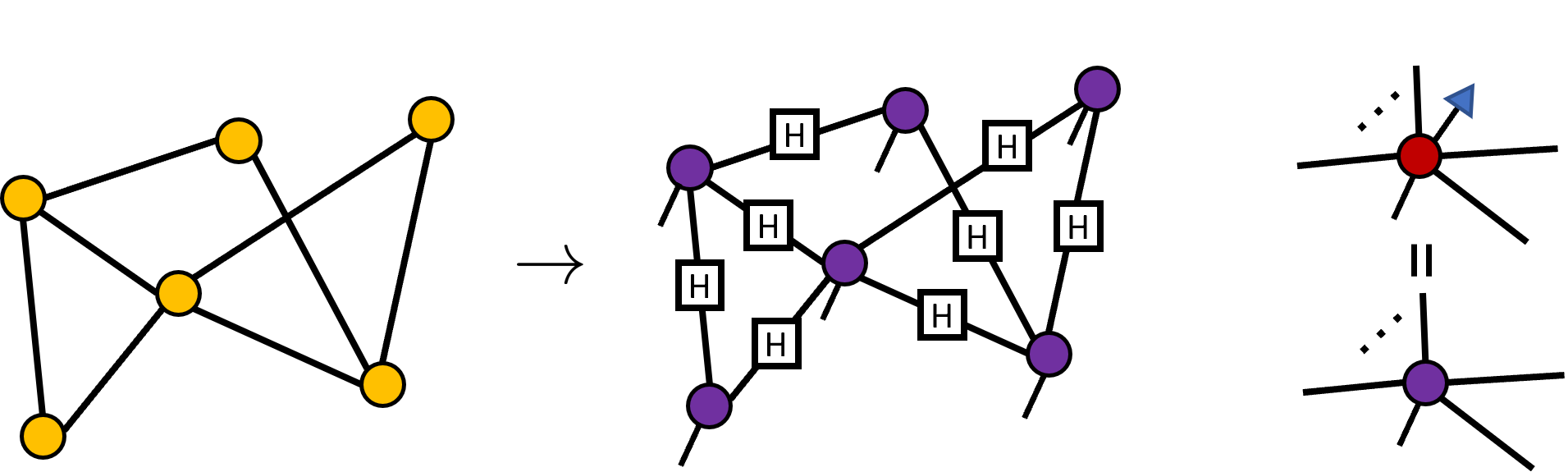}
    \caption{Any graph state can be converted to such a tensor network using the procedure above. A purple tensor is the GHZ tensor contracted with a $|+\rangle$ tensor. }
    \end{subfigure}
    \caption{Any graph state (yellow) can be converted to a tensor network using its encoding circuit constructed from $CZ$s acting on $|+\rangle$s. The tensor network consists of the GHZ tensors (red), Hadamard tensors (H) and contraction with $|+\rangle$ (blue triangles). Note that multiple GHZ tensors, which are also Z-spiders, can be merged to create a larger Z-spider.}
    \label{fig:graph_state}
\end{figure}

It is important to note that given the level of generality of our method, not all tensor networks will be exactly contractible in polynomial time as we see in the next section. This is because finding enumerators is NP-hard and exponential time for general tensor network contraction is unavoidable.} 

\section{Computational Complexity}\label{sec:complexity}

\subsection{General Comments}

\subsubsection{Brute Force Method} 
For a generic stabilizer code, the construction of its weight enumerator polynomial is at least NP-hard. We thus expect the same for a generic quantum code. Indeed, as we see that constructing enumerators solve the optimal decoding problem \cite{IyverPoulin}, such tasks must be at least $\#$P-complete. A simple brute force algorithm is exponential in the system size. For stabilizer codes, one can enumerate all of its stabilizer or normalizer elements, which is of $O(q^{n-k})$ and $O(q^{n+k})$ respectively. This extracts the relevant coefficients $A_d,B_d$. For a general quantum code, each coefficient $A_d,B_d$ is already hard, as it involves $q^n\times q^n$ matrix multiplications. One then has to repeat this $O(q^{2n})$ times for each error basis element. Therefore the complexity for the brute force method is $O(q^{O(n)})$ for general quantum codes of local dimension $q$. A slightly better strategy computes only the coefficients of $A_d$ and then perform a MacWilliams transform, which is polynomial in $n$. Therefore, for complexity estimates, it is sufficient that we provide the estimate for computing $A(\mathbf{u})$.

\subsubsection{Tensor network method} 
Now we analyze how our method improves this picture assuming the QL constructions are known. 

\paragraph{Tensor preparation overhead.}
Let us first revisit the encoding tensor network of an $[[n,k]]$ stabilizer code with local dimension $q$ where each tensor is obtained from a small stabilizer code. We assume that the degree of each tensor (including dangling legs) is bounded by some constant $c$. This is to ensure that the complexity in constructing the tensor enumerator of each node is upper bounded by a constant overhead\footnote{For stabilizer codes, if there are $k_v$ logical legs on a tensor on a node $v$, then building $\mathbf{A}_v(z)$ is upper bounded by complexity $O(q^{c-2k_v})$ and is less expensive compared to that of $\mathbf{B}_v(z)$.For general quantum codes where one uses the full tensor enumerator, preparing the coefficients of $\mathbf{A}_v(z)$ requires a worst case of $O(q^{(5c-4k_v)})$ operations.}. Then consider the graph $G=(V,E)$ produced from the tensor network by removing all dangling legs such that the tensors are vertices and contracted legs are edges.
Suppose the tensor network representation is one such that $|V|\leq C(n+k)$ for some constant $C$, then preparation of the atomic blocks has worst case complexity $O((n+k)q^{5c})$.
 In fact, many tensor networks consist of only a few types of tensors, e.g. recall that any QL structure is constructible from a constant number distinct blocks, making even $O(q^{5c})$ sufficient.  Therefore the overhead for tensor preparation is usually constant while a generous upper bound is at most linear in the system size. Here we assume that the tensor network does not contain an overwhelming number of tensors that have no dangling legs, e.g. a deep quantum circuit. This assumption can always be satisfied (e.g. MPS). 

\paragraph{Tensor Contraction.}
We now contract these tensors to build up the tensor network. Recall that each tensor contraction may be construed as a matrix multiplication. Suppose we have two tensors of $p\leq m$ legs respectively, while we contract $n\leq p$ legs. For the most general quantum code, we need to use the full tensor enumerators as building blocks, which have bond dimension $\chi=q^4$ and can be reshaped as a multiplication of two matrices of size $\chi^{(p-n)}\times \chi^{n}$ and $\chi^{n}\times \chi^{(m-n)}$.  Hence each contraction step with the same parameters above has worst case $O(\chi^{(p+m-n)})$. For codes that only needed reduced enumerators, this can be done with $\chi=q^2$.  
 For stabilizer codes, these matrices are especially sparse and have at most $q^p, q^m$ nonzero elements, thus each such contraction is loosely upper bounded by $O(q^{p+m+\min(p,m)})$. 
 Therefore, the computational complexity scales exponentially with the number of uncontracted legs during tensor contraction. 
 
To incorporate the symbolic functions, additional degrees of freedoms are often needed. The specifics can depend on the implementation. One method is to introduce a separate index with bond dimension $(n+1)^{\ell}$ to track the power of the polynomial (App.~\ref{app:tensoronly}). This adds another factor of $n^{\ell}$ to the complexity counting above. The power $\ell$ depends on the number of independent variables one needs to track. For Shor-Laflamme enumerators $\ell=1$, but $\ell>1$ for the refined enumerators. As this cost can vary depending on the treatment of symbolic objects, we do not include their contributions in the following estimates. One can easily restore them when needed.

\paragraph{Fully contracted tensor network.} 
Aside from minor corrections related to symbolic manipulations and those associated with storing and manipulating for large numbers, the computational complexity would be determined by the contractibility of the tensor network, which is ultimately dominated by the cost of multiplying large matrices. Heuristically, the cost of tensor contraction scales linearly with the bond dimension of the uncontracted indices, or exponentially with the number of minimal edge cuts in the tensor network. 

In the ensuing the discussion we will use base $e$ exponential for complexity. For a tensor network with bond dimension $\chi$, we can generally set $e\rightarrow \chi$ to obtain the worst case complexity estimate. As we discussed earlier, the general rule of thumb for bond dimension is $\chi=q^4$ for the full tensor enumerator, $\chi=q^2$ for codes that only requires reduced enumerators. However, using a sparsity argument in stabilizer codes, the effective bond dimension needed in an efficient representation can even be as low as $q$.

Let us represent a sequence $\mathcal{S}_G$ of tensor contractions by a sequence of induced subgraphs $H_i=(V^H_i, E^H_i)$ where $V^H_i\subset V$, $V^H_{i+1}=V^H_i\cup \{v_{i+1}\in V\setminus V^H_{i}\}$, and $V^H_0=\{v_0 : v_0\in V\}$.   In other words, we construct a sequence of subgraphs by adding one additional vertex at a time. The sequence terminates at $i=|V|-1$, when the subgraph contains $G$. 
Let $E_c(W,W')= \{e=\{v_a,v_b\}\in E: v_a\in W, v_b\in W'\}$ denote the set of edges connecting any two sets of vertices $W,W'$ and let $M_{i+1}$ be the connected component of $H_{i+1}$ containing $v_{i+1}$.

Then the complexity for the $i$th step of contraction is
\begin{widetext}
\begin{align}
    \mathcal{C}_i\lesssim \exp(|E_c(V_{H_i}\cap V_{M_{i+1}},V\setminus V_{H_i})|+deg(v_{i+1})-|E_c(V_{H_i}\cap V_{M_{i+1}},\{v_{i+1}\})|)\lesssim O(\exp(|C_{\rm max}|))
\end{align}
\end{widetext}
where $|C_{\rm max}|=\max_i|E_c(V_{H_i},V\setminus V_{H_i})|$ is the largest possible cut through the tensor network during contraction.
Then we see that the number of computations needed for calculating the final tensor enumerator of the tensor network is given by

\begin{equation}
    \mathcal{C}=\sum_{i=0}^{|V|-1} \mathcal{C}_i\lesssim O(|V|\exp(|C_{\rm max}|)).
\end{equation}

The upper bound is a pretty drastic overcounting especially if $H_i$ contains many disconnected components, as many do not enter the contraction. In other words, as long as each connected component of the induced subgraph has only $\log|V|$ connectivity with its complement throughout the sequence $\mathcal{S}_G$, then the complexity is polynomial in $|V|$.

\subsection{Cost for common codes}
\begin{table*}[t]
    \centering
    \begin{tabular}{|c|c|c|}
    \hline
         Network architecture & TN cost& code examples  \\
         \hline
         Tree  & $O(\log n)$ & concatenated (symmetric)\\
         Tree, 1d area law & $O(n)$& concatenated (general), convolutional \\
         2d hyperbolic & $O(n^{\alpha+1}), \alpha>0$& holographic, surface code (hyperbolic) \\
         (hyper)cubic& $O(n\exp(n^{1-1/D}))$ & topological (Euclidean), Bacon-Shor \\
         (hyper)cubic (bounded $L$)& $O(n\exp(L^{D-1}))$ & rectangular surface code\\
         $\delta$-volume law & $O(n\exp(\delta n)), \delta<1$ & non-degenerate code, random code\\
         generic encoding circuit & $O(n^2\exp(n)/\log n)$ & generic stabilizer code\\ 
         \hline
    \end{tabular}
    \caption{tabulates the computational cost for enumerator preparation from tensor network contractions. There are additional complexity associated with the symbolic manipulation of the polynomial, storage of large numbers, and MacWilliams transforms, which can also contribution an additional cost that can be superlinear.}
    \label{tab:costs}
\end{table*}

\paragraph{Tree tensor network.}
Tree tensor networks can be used to describe concatenated codes over $n$ qubits (leaves). It is also known that these tensor networks can be contracted with polynomial complexity. A contraction algorithm would start from the leaves of the tree  and contract into $O(n)$ disconnected components of the graph. 
Each piece in this first layer of contraction has at most $\mathcal{E}\sim O(c)$ open legs where $c$ is the maximum degree or branching factor in the tree. Then at each iteration, we join the $\leq c-1$ branches with another tensor. The maximum number of open legs on each connected component is always bounded by $c$, therefore the complexity for each contraction is at most $O(e^{2c})$. For a tree with $n$ leaves, the overall complexity is $O(n e^{2c})$ for tensors of bounded degree, Fig.~\ref{fig:ttn}. If the codes on each node are identical, then we only have to perform a separate contraction at each layer, yielding a complexity $O(\log n)$, Table~\ref{tab:costs} (general and symmetric). The latter would be doubly exponentially faster than brute force enumeration.
\begin{figure}
    \centering
    \includegraphics[width=\linewidth]{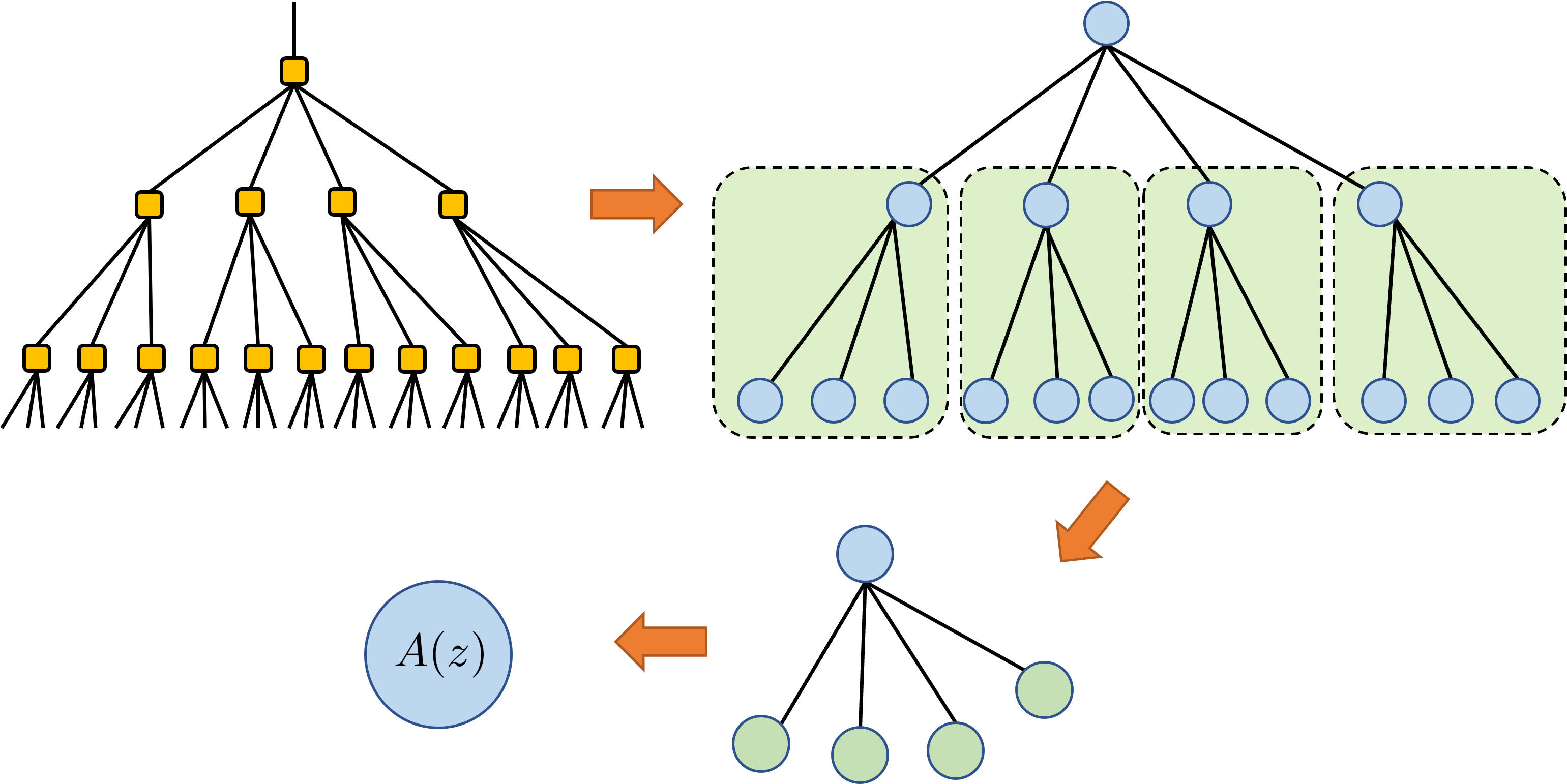}
    \caption{Tree tensor network for concatenated code. It is efficiently contractible from the bottom up and can be parallelized.}
    \label{fig:ttn}
\end{figure}

\paragraph{Holographic code.}
For tensor networks of holographic codes \cite{HaPPY,holosteane,ABSC,HTN}, the network is taken from a tessellation of the hyperbolic disk. This is slightly more connected than the tree tensor network (TTN) as it contains loops. The contraction strategy is similar to that of the TTN, except now minimum cuts depend on the system size such that each connected component has at most $O(\alpha \log n)$ open legs during the contraction. The parameter $\alpha$ depends on the tessellation. Then
\begin{align}
    \mathcal{C}&\sim \sum_{m=1}^n \exp(\alpha \log m+2c) \leq \exp({2c}) n^{\alpha+1} \\\nonumber
    &\sim O(n^{\alpha+1}).
\end{align}

A similar counting argument holds for the hyperbolic surface code, where minimal cuts remain logarithmic in the system size.

\paragraph{Codes with shallow local circuits.}
If the encoding circuit of a code is known (e.g. stabilizer code once the check matrices are given), then we can easily convert the circuit into a tensor network. If these circuits are shallow, say, of constant or $\log n$ depth, then one can contract the circuit induced tensor network in the space-like direction where the minimal number of edge cuts would be given by the circuit depth. Thus the enumerators of such codes can be prepared in $poly(n)$ time. 

\paragraph{Codes on a flat geometry.}
These are codes on an Euclidean geometry of dimension $D$ such as ones where the code words may be described by a PEPS. Some examples include the 2d color code, the surface code, Haah code \cite{haahcode}, etc. Constructions like the Bacon-Shor code also fall under this category. Note that the worst case complexity holds for any such tensor network regardless of the specific tensor construction or its symmetries.

For codes whose discrete geometry are embeddable in the $D$ dimensional Euclidean space, we simply ``foliate'' the lattice with co-dimension 1 objects. 
Each such object can be built up from $O(n^{1-1/D})$ contractions where each contraction retains at most $O(n^{1-1/D})$ open legs. Then $\mathcal{C}\sim O(n\exp(n^{1-1/D}))$. Compared to the brute force method, this permits a sub-exponential speed up. 

If the geometry of the network allows for fewer open edges during tensor contraction, then it is possible to get further speedups. Note the above counting assumes $n\sim L^D$ for a system that has similar lengths in different directions. If all but one direction have bounded length $L$ then we obtain an exponential speed up. For example, consider a rectangular surface code of size $L\times n/L$ on a long strip where $L$ is bounded, then each contraction along its shorter side is only $O(\exp(L))$. 

Note that the hardness of evaluating the weight enumerator polynomial here is directly tied to the hardness of the tensor network contraction. It was shown in \cite{pepshard} that contraction of PEPS is average case $\#P$-complete. Therefore there is strong reason to believe that an exponential speedup of this process is unlikely for both classical and quantum algorithmic approaches using tensor networks if one disallows post-selection and choose the tensors in a Gaussian random fashion. However, we also note that often the tensors are strictly derived from stabilizer codes. Therefore it is not impossible that these added structures in the discrete symmetry and contractible 2D tensor networks may permit further speedups. 

\paragraph{Codes with volume law entanglement.}
For states that have volume law entanglement for any subsystem, let us assume that the number of edges connected to vertices in a subregion is proportional to the number of vertices in that region, i.e. $\eta |V|$. For simplicity, let us also assume that the number of tensors and qubits are roughly equal. In general, $\eta$ need not be less than one. This is because each node may be connected to multiple nodes in the complementary region, while the entanglement captured in each bond is not maximal. However, if a carefully crafted tensor network is efficiently capturing the entanglement of the state, such that each bond is roughly maximally entangled, then we could expect the number of bonds cut to be less than or equal to the total number of qubits in the region for large enough subregions. Then the cost for each contraction is $O(\eta |V|)$. For $\eta<1$, this provides a polynomial speed up. If the number of bonds cut $\leq d$ for any subsystem and the code distance $d=\delta n, \delta <1$, which is the case for random codes, then the overall complexity would be

\begin{equation}
   \mathcal{C}\sim O(n\exp(\delta n)),
\end{equation}
which again admits a polynomial speed up.

However, if the number of bonds cut for a subsystem is $\geq n$, then we do not get any speed up. This would be case for all-to-all connected graphs where the edge cuts can be of size $(n/2)^2$, our algorithm at $O(\exp(n^2/4))$ will actually be slower than the brute force algorithm. For another example, consider the encoding circuit of any stabilizer code has $n^2/\log n$ complexity, which can be thought of as a tensor network. Suppose we simply contract the circuit tensor network timeslice by time slice, then we expect $|C_{\rm max}|\sim n$ because each time slice would correspond to a tensor network with $O(n)$ legs and the worst case complexity scales as $\sim O(n^2 \exp(n)/\log n)$. This is fully expected, as we should not be able to solve a $\#$P-complete problem in polynomial time. Therefore, in this regime, even if its tensor network description is optimal and minimizes the number of edge cuts for any subregion, the tensor network method would still only provide a polynomial speed up at best.

\subsection{Entanglement and Cost} In this work, we say that a tensor network representation is \emph{good} if its graph connectivity reflects the entanglement structure of the underlying state. In other words, the entanglement entropy of any subsystem can be reasonably well approximated by the number of edge cuts when bipartitioning the graph into the subsystem and its complement. This definition does not require the network to be efficiently contractible \cite{pepshard, GeEisert}. If we use the tensor network connectivity interchangeably with subsystem entanglement then we see that the complexity for computing the weight enumerator can be connected with the amount of entanglement present in the codewords. For more highly entangled codewords/states, its tensor network will be more connected, and hence the number of edge cuts for each subsystem will be higher. This provides us a heuristic where the general expectation of its weight enumerator computation should scale as $\sim \exp(S)$ where $S$ is roughly the maximum amount of entanglement for subsystems we generate during tensor tracing. We see that this is indeed the case for our examples --- the complexity is polynomial for codes whose code words are weakly entangled, i.e., $S\lesssim \log n$ and generally subexponential for states that satisfy an area law $S\sim n^{1-1/D}$ for systems with $D$-dimensional Euclidean geometry. 

For \emph{non-degenerate} quantum codes, the $(d-1)$-site subsystem are maximally mixed, hence $d\sim S$. Therefore, up to polynomial factor corrections, we expect the complexity lower bound for computing the enumerator polynomial to be comparable to that of finding the minimal distance in classical linear codes \cite{CRSS,distance2}, i.e.,
\begin{equation}
    \mathcal{C}\sim \exp(O(S))\sim\exp( O(d)).
\end{equation}
For this high level analysis, we will neglect other subleading terms and the dependence on rate $R=k/n$. Because stabilizer codes can be identified with classical linear codes over $GF(4)$ \cite{CRSS}, it means that the tensor network method should have comparable complexity scaling with existing algorithms for non-degenerate stabilizer codes.

In \emph{degenerate codes}, however, there exist subsystems where $S\ll d$. For example, a gauge fixed Bacon-Shor code can be constructed from a TTN (Sec.~\ref{subsec:bsc}). Although certain subsystems are highly entangled, its much weaker entanglement for some other subsystems allows one to engineer the network such that it is written in an efficiently contractible form, such that each step of the contraction is bounded by a constant. Depending on the gauge, we can get away with an enumerator with as few as $2\sqrt{n}$ such contractions. Although the code has overall distance $d\sim \sqrt{n}$, the cost in preparing its enumerator is only $O(\sqrt{n})$ time, compared to a naive distance scaling of $O(\exp(\sqrt{n}))$ (Figure~\ref{fig:xxbsc}). Therefore, we expect some degenerate codes to have $\mathcal{C}\ll\exp(O(d))$, which is a substantial speedup compared to known methods.

\begin{figure}[t]
    \centering
    \includegraphics[width=\linewidth]{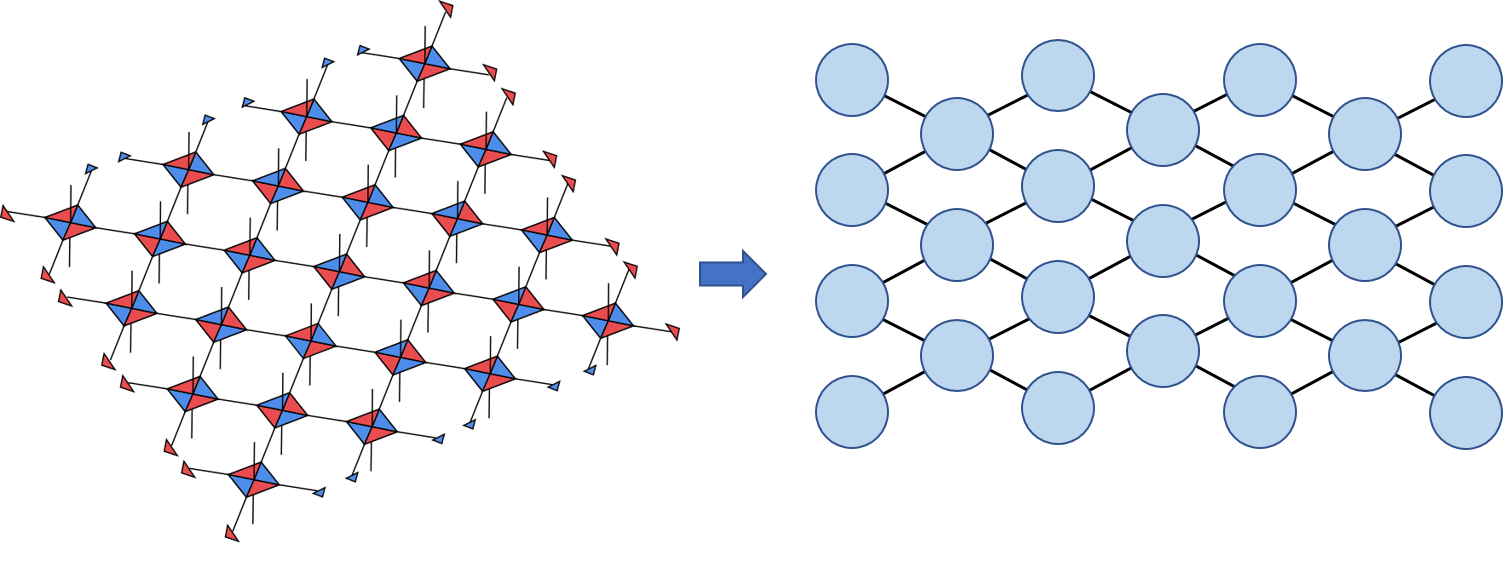}
    \caption{A surface code and the tensor network of its weight enumerator.}
    \label{fig:surface_code}
\end{figure}

\section{Examples}\label{sec:examples}
Now we examine a few examples by computing the enumerators for codes that have order a hundred qubits or so. These analyses are to showcase the tensor enumerator method; they are not meant to be exhaustive nor do they represent the largest possible codes one can study with this method.

\subsection{Surface code}\label{subsec:surfacecode}
\paragraph{Kitaev's Surface code.} 
Recall from \cite{Kitaev03} that the tensor network for the surface code encoding map, Fig.~\ref{fig:surface_code} (left), is one  where each tensor is a $[[5,1,2]]$ code and the boundaries are contracted with $|0\rangle, |+\rangle$ states (red and blue triangles). The upward pointing dangling legs denote the logical inputs and downward pointing legs denote physical qubits, therefore the encoding map has a non-trivial kernel and a physical qubit sits on each node. For each atomic block, we construct its tensor enumerator and contract them column by column to generate the entire network, Fig.~\ref{fig:surface_code} (right). For example, the quantum weight enumerators of a $[[181,1,10]]$ surface code are
\begin{align}
    A(z)&=1+36 z^3+180z^4+136z^5+1344z^6\\\nonumber
    &\quad +7084z^7+24001z^8+60432z^9\\\nonumber
    &\quad +286748z^{10}+\dots\\
    B(z)&=1+36 z^3+180z^4+136z^5+1344z^6\\\nonumber
    &\quad +7084z^7+24001z^8+60432z^9\\\nonumber
    &\quad +286768 z^{10}+\dots,
\end{align}
where we count only 20 representations of non-trivial logical operators at weight $10$.

Using a similar network, we can also find the coset weight distribution. Suppose a Pauli error acts on physical qubits in the form of Fig.~\ref{fig:surface_coset} (left). Note that we do not contract the Pauli errors into the encoding tensor network when defining the encoding map; if we actually contract the Pauli errors onto the physical legs in the tensor network construction then obtain enumerators from those networks, it would correspond to finding the stabilizer weight distribution for surface codes that have extra minus signs on certain generators. To build the coset enumerator, we swap out the original tensors in Fig.~\ref{fig:surface_code} (right) for the proper coset tensor weight enumerator of each error node (red). The modified tensor network then computes the weight distribution of coset elements. For example, the coset distribution for a single $X$ error at the bottom left corner for a $[[113,1,8]]$ surface code, is 
\begin{align}
    &A^{s_{Xbl}}(z) = z+z^2+2z^3+31z^4+146z^5+284z^6\\\nonumber
    &+1258z^7+5180z^8+17627z^9+\dots
\end{align}

\begin{figure}[b]
    \centering
    \includegraphics[width=\linewidth]{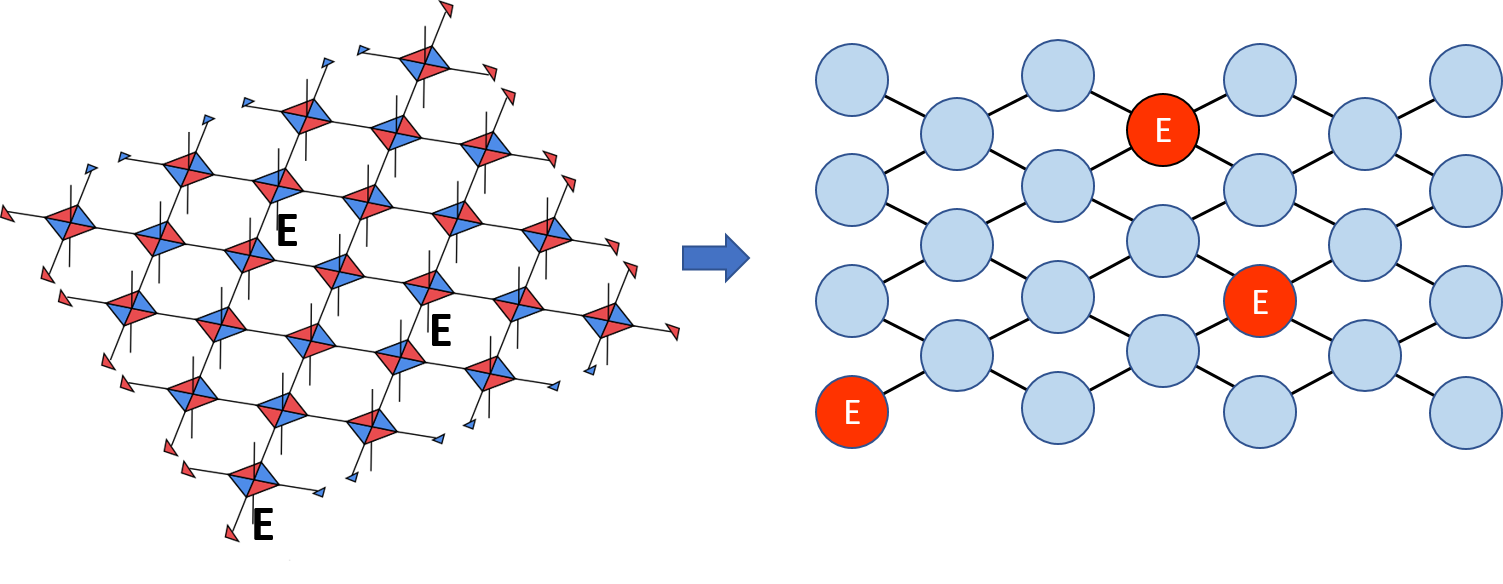}
    \caption{The coset enumerator of a particular error string that acts trivially on some qubits.}
    \label{fig:surface_coset}
\end{figure}

These exercises can be easily repeated for the double and complete weight enumerators where the weights are counted differently. For example, see Fig. 3 of \cite{CL2022} and Fig.~\ref{fig:surfaceCode_double}.

\begin{figure*}[]
    \centering
    \includegraphics[width=0.9\linewidth]{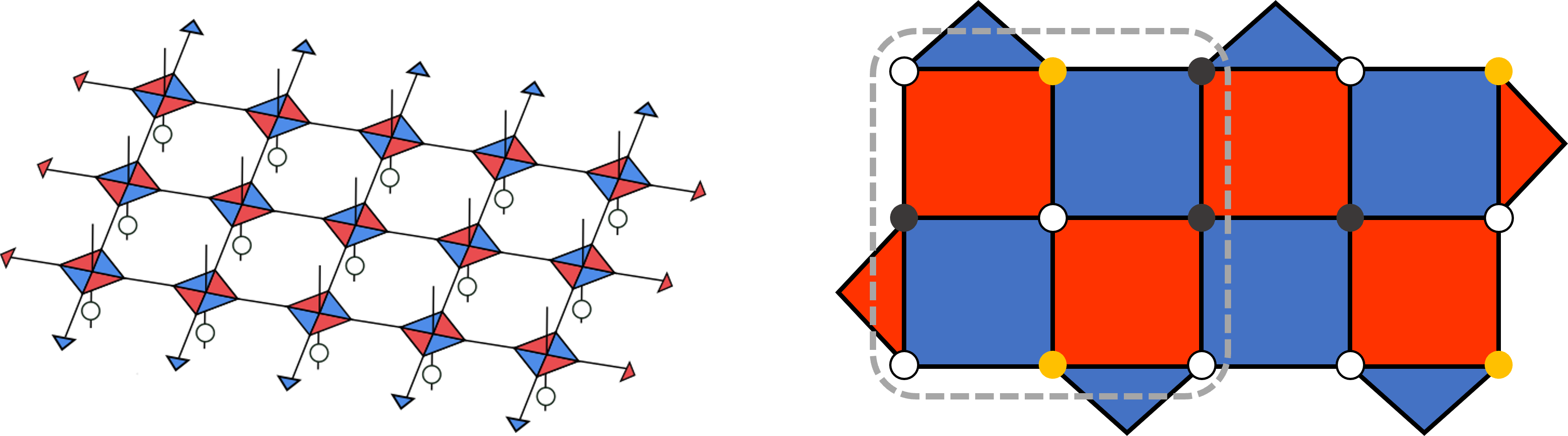}
    \caption{Tensor network of a rotated surface code where the atomic codes are identical to those of the surface code. Only the boundary conditions are modified. One can also modify each tensor by contracting some other single qubit gate/tensor. The checks are given on the right where qubits (vertices) adjacent to red regions indicate $Z$ checks and blue indicate $X$ checks. For the derandomized local Clifford deformed code \cite{arpit}, white and yellow dots indicate local $HSH$ and $H$ deformations respectively.}
    \label{fig:rot_surf}
\end{figure*}

\paragraph{Rotated Surface code.} 
In practice, it is easier to deal with rotated surface code as the distance scaling is better by a constant factor for a similar value $n$, Fig.~\ref{fig:rot_surf}. Note that one only has to modify the boundary conditions compared to the original surface code.  
The rotated surface code tensor network is also easier to contract exactly. For reference, the enumerator for the $[[256,1,16]]$ rotated surface code at $d=16$ can be computed on a laptop with a run time of $\approx 20$ minutes. The weight enumerators for this code are
\begin{widetext}
\begin{align}
    A(z)&=1+30z^2+776z^4+15538z^6+276801z^8+4431408z^{10}+65676619z^{12}\\\nonumber
    &+912021486z^{14}
    +12003931907z^{16}+150911390280z^{18}+\dots\\
    B(z)&=1+30z^2+776z^4+15538z^6+276801z^8+4431408z^{10}+65676619z^{12}\\\nonumber
    &+912021486z^{14}
    +12004980483z^{16}+150970896992z^{18}+\dots
\end{align}
\end{widetext}
Indeed, we see that the two coefficients start deviating at $d=16$.

One can also obtain an error detection threshold by assuming a decoder that performs no active error correction, but discards all instances that return a non-trivial syndrome assuming perfect measurements. Recall (Remark~\ref{rmk:threshold}) that this threshold is at $p=1/6\approx 16.67\%$, which is quite similar to the code-capacity thresholds \cite{eczoo_surface} across various decoders under depolarizing noise.

\begin{figure}[h]
    \centering
    \includegraphics[width=\linewidth]{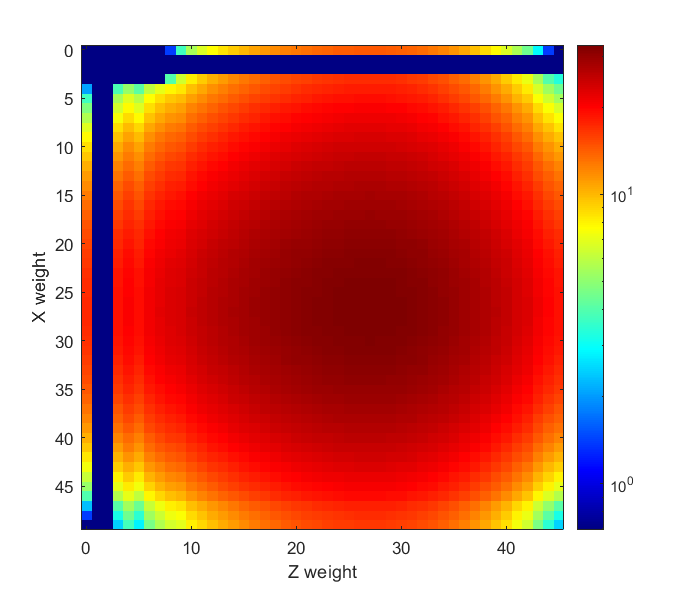}
    \caption{Double enumerator of a 4 by 8 surface code at $n=53$. Plotting log of operator weight distribution for non-trivial logical operators. A relatively small code is chosen for clarity in the figure. }
    \label{fig:surfaceCode_double}
\end{figure}

\paragraph{Local Clifford deformations.}
We can perform local modifications \cite{CL2021} on each tensor to perturb the (rotated) surface code. These are represented by the circle tensors that act on each qubit. For the vanilla surface code, these tensors are trivial (identity). However, we may choose them at will. For instance, if they are random single qubit Clifford operators, then the tensor network reproduces the Clifford-deformed surface codes \cite{Cliffdef}. Similarly, if choosing every other tensor to be a Hadamard, then one arrives at the XZZX code \cite{XZZX}.

Because the Shor-Laflamme enumerator is invariant under local unitary deformations, it is clear that the logical error probabilities of such locally deformed codes would be identical under unbiased noise. However, this local unitary invariance is broken when we consider more general enumerators with other weight functions, which indicate that their performances under biased noise differ. In Fig.~\ref{fig:asymdist}, we see that the de-randomized Clifford deformed code (right) has fewer logical operators that have low $Z$ weight, which is to be contrasted with the rotated surface code (left) and the XZZX code (middle). We use a de-randomized Clifford deformed code like the one shown in Fig.~\ref{fig:rot_surf} (right) where yellow and white dots indicate local $HSH$ and $H$ rotations \cite{arpit}. More general dimensions of the code follow from repeating the local patterns on the $3\times 3$ blocks (enclosed by dashed lines) periodically.

\begin{figure*}[t]
    \centering
    \includegraphics[width=\linewidth]{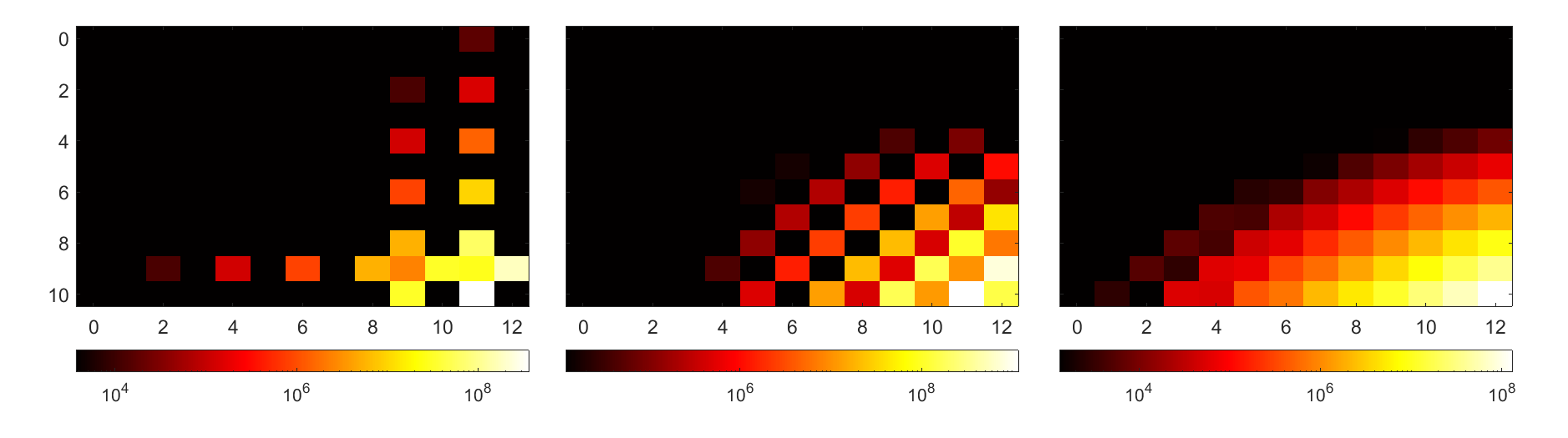}
    \caption{Truncated $X,Z$ weight distribution of non-trivial logical operators for the $9\times 9$ surface code (left), XZZX code (middle), and the Clifford deformed code (right). Horizontal axis: $X$-weight, vertical axis: $Z$-weight. Note that  non-zero weights are invisible in this scale.}
    \label{fig:asymdist}
\end{figure*}

For example, using the double enumerators, we contrast the performance of the XZZX code and the derandomized Clifford deformed code, Fig.~\ref{fig:biased_ratio}, under biased noise with $p=p_X+p_Y+p_Z$ and $p_X=p_Y=p_Z/(2\eta)$. It is clear from the normalized uncorrectable error rate (and hence effective distances) that the Clifford deformed construction vastly outperforms the XZZX at high bias and small $p$. Note that the weight function for these double enumerators is slightly different from the one used in App.~\ref{app:scalarenum} or \cite{hu2020weight} because it enumerates the $X,Y$ weight separately from the $Z$ weights.
\begin{figure}
    \centering
    \includegraphics[width=\linewidth]{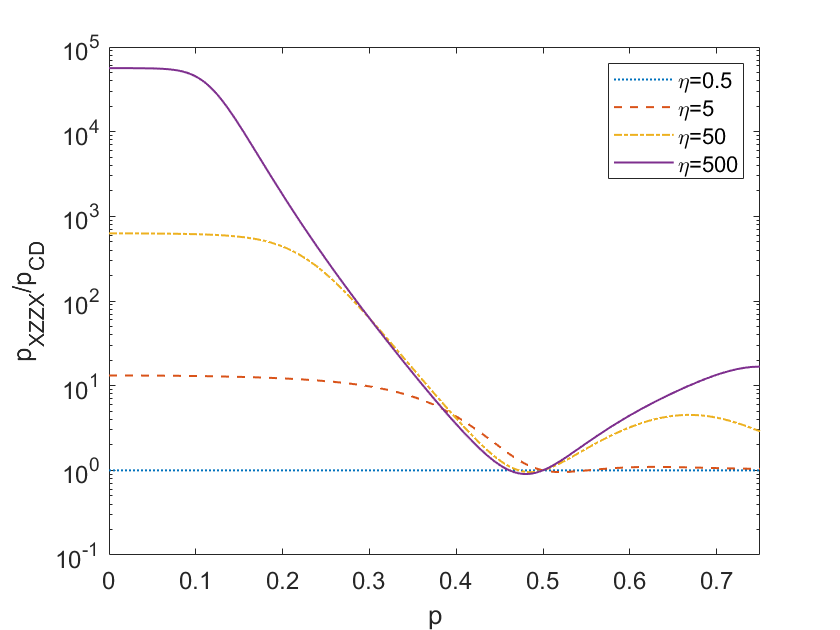}
    \caption{The ratio between the XZZX code normalized uncorrectable error rate $p_{XZZX}$ and that of the Clifford deformed code $p_{CD}$ as a function of physical error parameter $p$ at different biases $\eta$ for $d=7$. }
    \label{fig:biased_ratio}
\end{figure}

\paragraph{Coherent error.} 
General quantum errors are not limited to random Pauli noise, which are somewhat ``classical''. Here we compute the coherent error probability of the rotated surface code using techniques introduced earlier.

Efficient methods for computing unitary rotations along $X$ or $Z$ have been introduced by \cite{FFcoherent} using a Majorana fermion mapping. Here we instead consider i.i.d.~coherent error of the form $U=\exp(it Y) = \cos (t) I + i \sin(t) Y$. Note that the normalized logical error rate differs for codes with even or odd $X$ and $Z$ distances because the abundance of $Y$-only operators differ for these codes, Fig.~\ref{fig:coherent} (left).

When $d_x,d_z$ are odd, the normalized logical error rate under coherent noise with rotation angle $t$ coincides with that under the $Y$-only Pauli noise with probability $p_Y = \sin^2(t)$. This is because at odd distances, the only $Y$ type logical operator acts globally on the system. When at least one of $d_X$ or $d_Z$ is even, then the coherent noise yields slightly higher logical error probability, Fig.~\ref{fig:coherent} (right). However this only incurs a small correction with a similar order of magnitude, consistent with earlier results but in different settings \cite{FFcoherent}. A similar result holds for the XZZX code with coherent noise of $Y$-only rotations because up to a phase, $Y$ is invariant under Hadamard conjugation.

\begin{figure*}[]
    \centering
\includegraphics[width=0.8\linewidth]{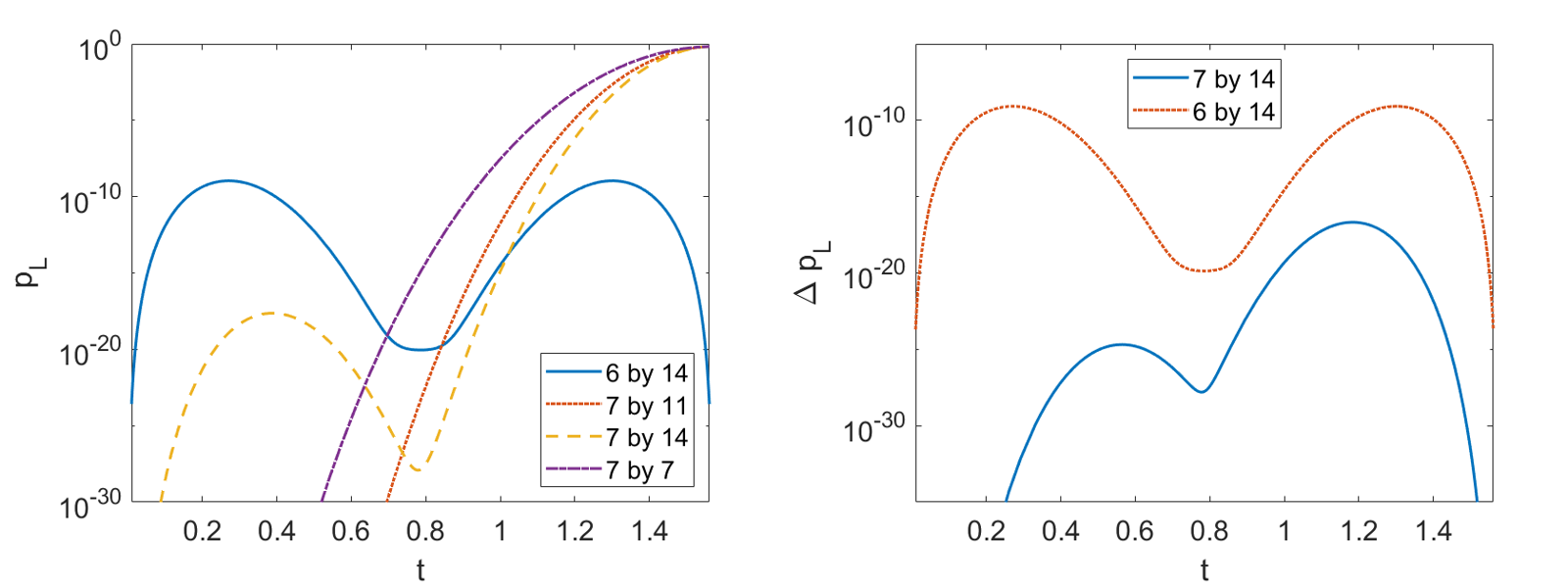}
    \caption{Left: normalized logical error rate as a function of the rotation angle $t$ for codes with size $n=d_x\times d_z$. Right: differences in normalized logical error rates $\Delta p_L = p_L^{\rm coherent} - p_L^{\rm Y~only}$. }
    \label{fig:coherent}
\end{figure*}

Although the impact of coherent noise with $Z$ or $X$ only rotations produce very different logical error profiles than those produced by the $Z$ or $X$-only Pauli noise in the rotated surface code, there exist XZZX codes where their impact are identical. For instance, for the system sizes tested, the effect of such coherent errors and Pauli errors coincide when we have a square lattice where the width is equal to height. It also holds for some rectangular lattices, though not all. The reason is similar as before, where there is a sole logical operator consists of only $I$ and $X$ (or $Z$), but the operator need not act globally. This may be due to special symmetries of the XZZX code, which indicates that local deformations can be tuned to reduce the impact of coherent noise. Though it is also likely that such symmetries are restricted to the $s=0$ sector. We leave a more systematic characterization of such behaviours to future work.

\subsection{2D color code}\label{subsec:colorcode}
We first provide a novel tensor network construction for the hexagonal 2d color code, which is a self-dual CSS code constucted entirely from Steane codes, Fig.~\ref{fig:2dcolor_code}. The class of such tensor networks constructs a family of $[[3\ell(\ell+1)+1,1,2\ell]]$ codes. Similar color codes with hexagonal plaquettes can also be constructed by following the same contraction pattern in the bulk and imposing different boundary conditions. Just like the surface code construction, this tensor network represents an encoding map with a non-trivial kernel\footnote{A previous tensor network construction of the $[[19,1,5]]$ color code can be found in \cite{LTNC}, which requires both the $[[7,1,3]]$ codes and $[[9,0,3]]$ stabilizer states as building blocks. However, the protocol does not generalize to $d>5$ due to concavity of the polygonal region.}. One can similarly construct a codeword of this code, e.g. $|\bar{0}\rangle$ by contracting all the dangling logical legs with $|0\rangle$. Recall that each Steane code can be built from contracting two $[[4,2,2]]$ atomic codes, which was used to construct the surface code. As such, this tensor network can indeed be construed as a double copy \cite{colordoublecopy} of the surface code  in some sense.

Each tensor in the left figure is a Steane code where the logical leg is suppressed. For the remaining 7 physical legs, 6 are drawn in-plane while the remaining one is represented as a dot that corresponds to a physical qubit in the color code. Each stabilizer generator that acts on the plaquette of the $[[7,1,3]]$ code is mapped to a stabilizer that acts on the four physical legs adjacent to a colored quadrilateral in the tensor description. Given this QL construction, its enumerator can be computed using the same method. For example, the enumerators for a $[[91, 1, 11]]$ code are 

\begin{widetext}
\begin{align}
    A(z)&=1+54z^4+297z^6+2889z^8+24258z^{10}+197493z^{12}
    +1629738z^{14}
    +13287999z^{16}\\\nonumber
    &+108647952z^{18}+\dots\\
    B(z)&=1+54z^4+297z^6+2889z^8+24258z^{10}+4176z^{11}+197493z^{12}+67242z^{13}
    +1629738z^{14}\\\nonumber
    &+1066740z^{15}+13287999z^{16}+14401674z^{17}+108647952z^{18}+\dots
\end{align}
\end{widetext}
We see that the two coefficients start deviating at $d=11$, thus verifying its adversarial distance. 
The computation time is only tens of seconds, but a better encoding is needed to avoid unnecessary allocation of memory space for 0s in the sparse matrix. Also note that the cancellation at even weights between $A$ and $B$.

\begin{figure*}
    \centering
    \includegraphics[width=0.65\linewidth]{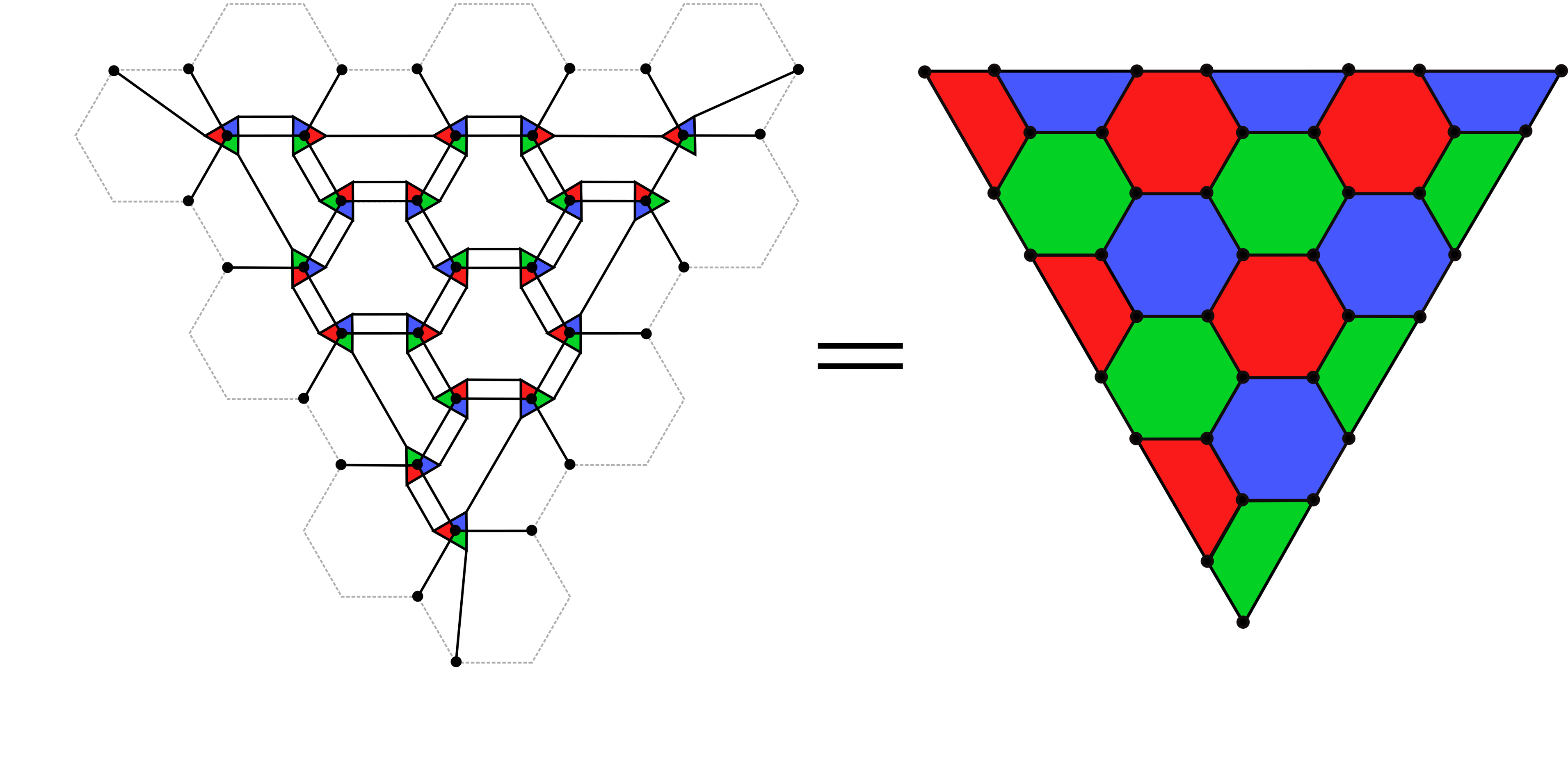}
    \caption{A $[[37,1,7]]$ 2d color code (left) tensor network construction where (right) its stabilizer generators are all $X$ or all $Z$ operators acting on the vertices of each colored plaquette.}
    \label{fig:2dcolor_code}
\end{figure*}

As we discussed in Remark~\ref{rmk:threshold}, these codes admit a common error detection threshold at $p=1/6$ (Fig.~\ref{fig:threshold}) thanks to the MacWilliams identity, and is close to the known code capacity threshold. 

\begin{figure}
    \centering
    \includegraphics[width=\linewidth]{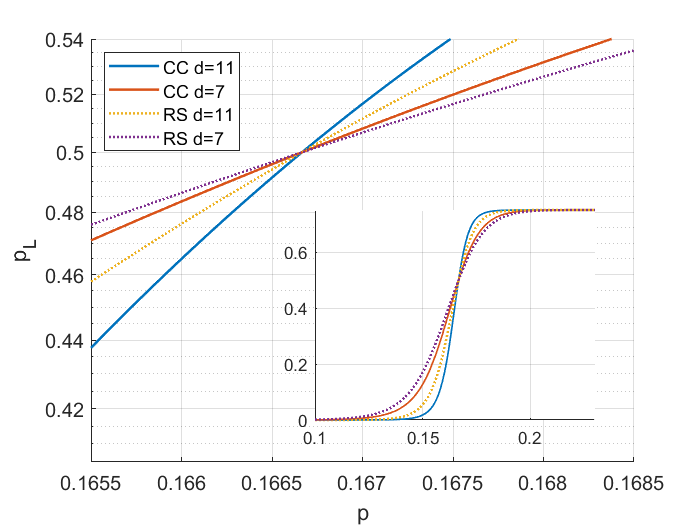}
    \caption{Error-detection thresholds coincide for the 2d color code (CC) and the surface code (RS). Zoomed out plot on the corner shows the error probability in a greater range. Only two distinct distances are shown in the plot, since other distances cross at the same value. }
    \label{fig:threshold}
\end{figure}

\subsection{Holographic code}\label{subsec:holographic}
To demonstrate the usefulness of mixed enumerators, we now look at a class of finite rate holographic code \cite{HaPPY} also known as the HaPPY (pentagon) code,  originally conceived as a toy model of the AdS/CFT correspondence. Different versions of this code have been proposed in various contexts \cite{Harris2020,ABSC} where preliminary studies have examined some of its behaviours under erasure errors and symmetric depolarizing noise. However, the application of such codes in quantum error correction is far less understood compared to the surface code. Here we analyze the HaPPY code as a useful benchmark using our mixed weight enumerator technology and present some novel results. 

This code can be constructed from purely $[[5,1,3]]$ atomic codes. It is known that, as a stabilizer code, it has an adversarial distance 3 regardless of $n$ because of the bulk qubits that are close to the boundary.  However, from AdS/CFT, we expect the logical qubits deeper in the bulk to be better protected and hence having different ``distances''. We can analyze the distances of these bulk qubits in different ways.

First as a stabilizer code, we define the \textit{stabilizer distance} $d_S$ of each bulk qubit as the minimal weight of all stabilizer equivalent non-identity logical operator that acts on a bulk leg/qubit \cite{Harris2020}.  To enumerate such operators, we can build a mixed enumerator by contracting a $B$-type tensor enumerator associated with the bulk tile that contains the logical qubit for which we compute the distance, with $A$-type tensor enumerators on the other tiles.  Subtracting the enumerator polynomial $A(\mathbf{u})$ of the stabilizers, we then obtain a distribution for all the non-identity logical operators acting on that bulk qubit Fig.~\ref{fig:holographic} (top right).

One can also define the word distance of this code, as in \cite{Harris2020}, where it is simply the distance of the resulting subsystem code if we isolate one bulk qubit as the logical qubit and the rest as gauge qubits. To compute the word distance, we construct an $\tilde{A}(\mathbf{u})$ enumerator by contracting $A$-type tensor enumerator on the central tile with $B$-type tensor enumerator on the rest of the network. This enumerates the logical identities in the gauge code. Then subtracting it from the scalar $B(\mathbf{u})$ enumerator of the whole code yields the distribution of all gauge equivalent non-trivial logical operators, Fig.~\ref{fig:holographic} (bottom right).

For each code of a fixed size $n$, we then repeat this for bulk qubits at different radii from the center of the graph measured in graph distance. An explicit labelling of the qubits we study is shown in Figure~\ref{fig:holographic} left\footnote{Note that this radius is different from that in \cite{Harris2020} where on the central bulk qubit is singled out and its distances are computed with respect to codes of different $n$s.}. We give a summary for $n=25$ and $n=85$ in Table~\ref{tab:holo_dist}, where $\mathcal{N}_S,\mathcal{N}_W$ denote the number of minimal weight stabilizer or gauge equivalent representations of the non-identity logical operators.

\begin{figure}
    \centering
    \includegraphics[width=\linewidth]{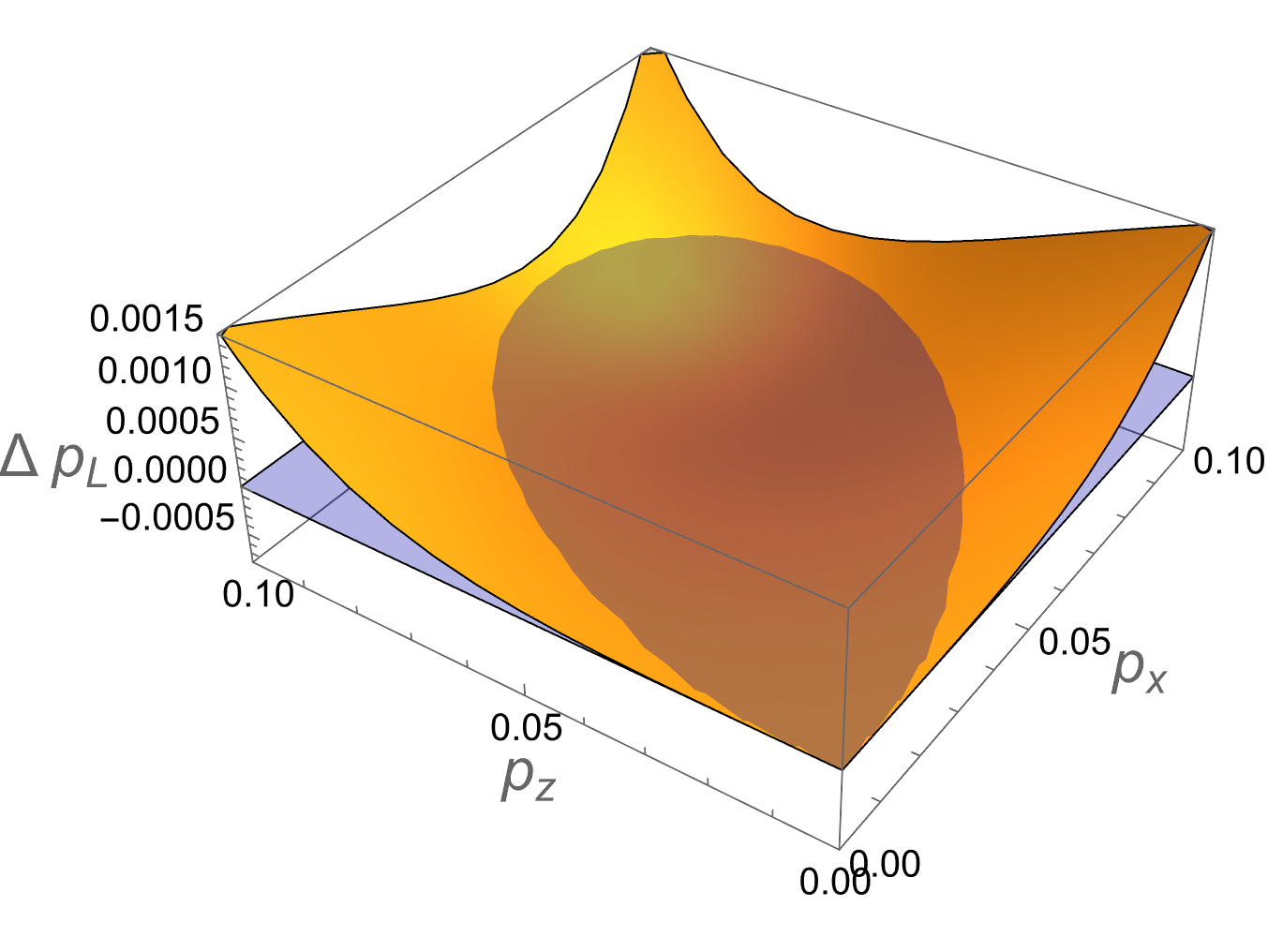}
    \caption{$\Delta p_L= p_L^{r=0}-p_L^{r=3}$ as a function of $p_x,p_z$ the bit flip and phase error probabilities. The blue translucent plane marks $\Delta p_L=0$, below which the bulk qubit at $r=0$ provides better protection.}
    \label{fig:holoxz}
\end{figure}

\begin{figure*}[t!]
    \centering
    \subfloat{{\includegraphics[width=0.45\linewidth]{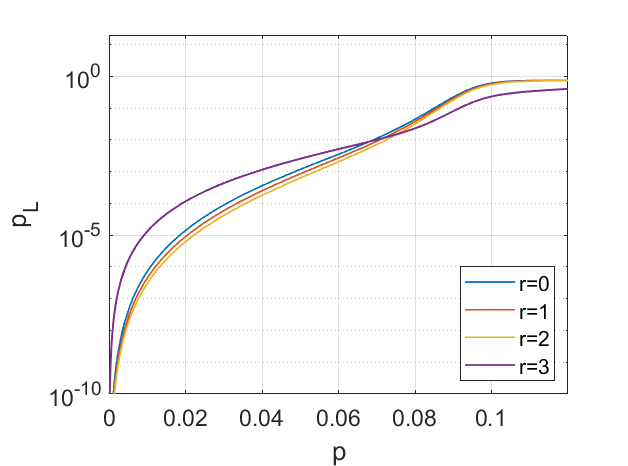} }}%
    \qquad
    \subfloat{{\includegraphics[width=0.45\linewidth]{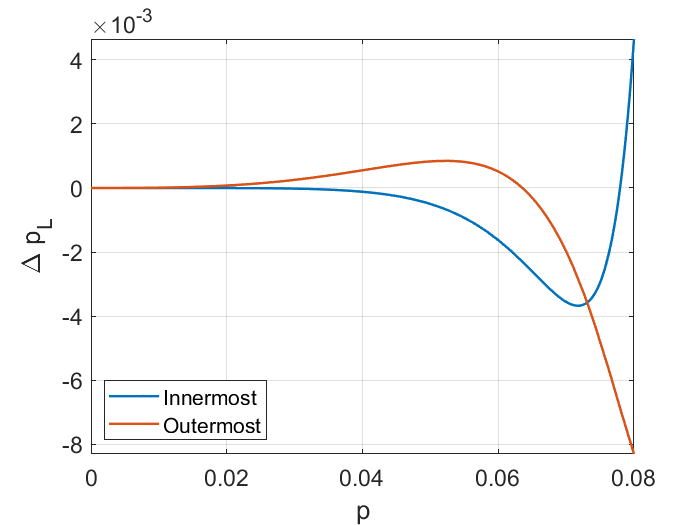} }}%
    \caption{Left: logical error probability of bulk qubits at different radii for a $[[85,41,3]]$ HaPPY code. Right: The difference between logical error rates for two HaPPY codes of radii 3 and 2. At higher $n$, the innermost bulk qubit has lower logical error rate while that for the outermost is higher for sufficiently low physical error rate $p$. The opposite is true at higher $p$.}%
    \label{fig:happy_err}%
\end{figure*}

Although the stabilizer distance decreases as a function of radius, the word distance is more or less constant with respect to the radius. This is a particular consequence of the tiling and the atomic codes, such that erasure of 4 certain boundary qubits can lead to the erasure of the inner most bulk qubit \cite{HaPPY}. Under depolarizing noise with probability $p$, the normalized uncorrectable error probability $p_L$ is shown in Fig.~\ref{fig:happy_err} (left). We see that the central bulk qubit in fact suffers from more errors because it has a greater number of minimal weight equivalent representations despite having the same word distance as most other bulk qubits. We see a crossing because the outermost bulk qubit has a slightly lower distance compared to the rest.

Despite the constant word distance as a function of system size for logical qubits that are deep in the bulk, and presumably the lack of erasure threshold for the central bulk qubit\footnote{This result assumes a particular decoder applied to small sized systems using Monte Carlo methods. It is possible that a different asymptotic behaviour can emerge with larger codes and greater accuracy. }, a larger $n$ does hint at a greater degree of error suppression. Let $\Delta p_L = p_L(n=85)-p_L(n=25)$, we see that the error rate difference for the inner most bulk qubit has a slight suppression at small $p$ while the outermost bulk qubit is the opposite. Intuitively, this is expected for general holographic codes as its construction is a slight generalization of code concatenation. As such, a crossing is expected, where adding more layers of code would generally lead to noisier bulk qubits in the deep IR when the physical error rates are sufficiently large while the opposite happens for the logical qubits in the UV. A more in-depth analysis of other holographic codes with varying word distances can be interesting as future work.

Let us also briefly examine its properties under biased noise using the double enumerator. The asymmetric distances $d^X/d^Z$ are recorded in Table~\ref{tab:holo_dist}. The XZ weight distribution is not symmetric, but the normalized logical error probability is fairly symmetric with respect to $p_X,p_Z$. Here we compare the logical error probability $\Delta p_L= p_L^{r=0}-p_L^{r=3}$ for the $n=85$ code, Fig.~\ref{fig:holoxz}. Like the symmetric depolarizing noise, the bulk qubit deeper in the bulk provides slightly better protection for the encoded information, but becomes noisier at higher physical error rates. However, the bulk qubit at $r=0$ does not provide better protection compared to the bulk qubits close to the boundary for any noise parameter in the heavily biased regime.

\begin{table*}[t]
    \centering
    \begin{tabular}{|c|c|c|c|c|c|c|c|c|c|c|c|c|}
    \hline
    ~  & \multicolumn{6}{c|}{$[[25,11,3]]$} & \multicolumn{6}{c|}{$[[85,41,3]]$}\\
    \hline
   r & $d_S$ & $\mathcal{N}_S $& $d_W$ & $\mathcal{N}_W$ & $d_S^X/d_S^Z$ & $d_W^X/d_W^Z$ & $d_S$ & $\mathcal{N}_S$& $d_W$ & $\mathcal{N}_W$ & $d_S^X/d_S^Z$ & $d_W^X/d_W^Z$\\
    \hline
    0 & 9 & 30 & 4 & 60 & 5/5 & 2/2 & 23 & 240 & 4 & 60 &13/13 &2/2\\
    \hline
    1 & 5 & 6 & 4 & 54 & 3/3 & 2/2 & 13 & 48 & 4 & 36 &7/7 &2/2\\
    \hline
    2 & 3 & 3 & 3 & 3 &1/2 & 1/2 & 9 & 12 & 4 & 24 & 5/5 &2/2\\
    \hline
    3 & n/a & n/a & n/a & n/a  & n/a & n/a& 3 & 12 & 3 &12 &1/2 &1/2 \\
    \hline
    \end{tabular}
    \caption{Tabulated stabilizer distances $d_S$ and word distances $d_W$ for two HaPPY pentagon codes at different sizes. $\mathcal{N}_S, \mathcal{N}_W$ denote the number of minimal weight stabilizer equivalent and gauge equivalent representations of non-trivial logical operators, respectively. We also provide the corresponding asymmetric stabilizer and word distances sorted by $X$ and $Z$ weights. Radial distance $r$ is the graph distance of the bulk qubit from the central tile for a code of fixed $n$. The qubits we studied are labelled according to Figure~\ref{fig:holographic}. }
    \label{tab:holo_dist}
\end{table*}

\subsection{2d Bacon-Shor code}\label{subsec:bsc}
For another example of the subsystem code, we study the 2d Bacon-Shor code. The tensor network for this code is identical to that of the surface code except we designate the physical legs every other row as gauge legs; see Appendix G.4 of \cite{CL2021}. It is conceptually convenient to think of these blocks as $[[4,2,2]]$ stabilizer codes or $[[4,1,2]]$ Bacon-Shor codes. As a subsystem code, it is most relevant to obtain its word distance. To that end we construct its mixed enumerator $I(z)$ for the logical identity. The enumerator for the non-trivial logical operators (non-identity logical operators multiplying any element of the gauge group) is $C(z)=B(z)-I(z)$. It is most convenient to express these enumerators graphically, Fig.~\ref{fig:baconshor}.

\begin{figure*}
    \centering
    \includegraphics[width=0.7\linewidth]{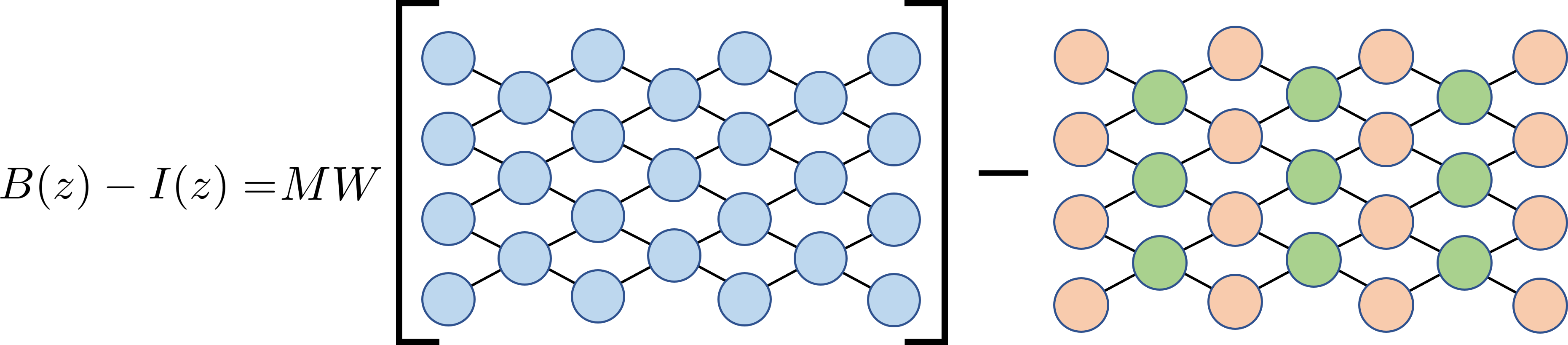}
    \caption{Distribution of non-identity logical operators in the 2d Bacon-Shor code, where blue tensors indicate $\mathbf{A}(z)$ of the $[[5,1,2]]$ code (odd columns) and $[[4,2,2]]$ codes (even columns). Green tensors are $\mathbf{A}'(z)$ of the $[[4,1,2]]$ subsystem code while orange tensors are $\mathbf{B}(z)$ of the $[[5,1,2]]$ codes. }
    \label{fig:baconshor}
\end{figure*}

Computing $B(z)$ is relatively straightforward, as we build it by contracting all $\mathbf{B}(z)$ of the $[[5,1,2]]$ and $[[4,2,2]]$ codes in the tensor network and then renormalize $B_0$ to $1$. Practically, we compute $A(z)$ by contracting all the $\mathbf{A}(z)$ of these tensors then perform a MacWilliams transform. However $I(z)$ requires extra care as we need to place $\mathbf{B}(z)$ on the odd number rows for the regular $[[5,1,2]]$ codes and $\mathbf{A}'(z)$ for the $[[4,1,2]]$ Bacon-Shor codes on even rows.  
Although these tensors in the encoding map are identical, the downward pointing legs in the $[[5,1,2]]$ code now maps to a gauge leg in the $[[4,1,2]]$ code. Therefore we must account for its weight distributions appropriately. It can be checked that the logical legs on the odd rows and columns are correlated with the logical legs on the even rows and columns. Therefore, they only contribute to an overall normalization. 

Above computations can also be easily generalized to double and complete enumerators for the Bacon-Shor code. For example, the $X,Z$ weight distributions of all non-trivial logical Pauli operator representations in this subsystem code is  shown in Fig.~\ref{fig:Bacon-ShorXZ} for the 2d Bacon-Shor code of different sizes.

\begin{figure}
    \centering

    \begin{subfigure}[t]{0.45\textwidth}
        \centering
        \includegraphics[width=\linewidth]{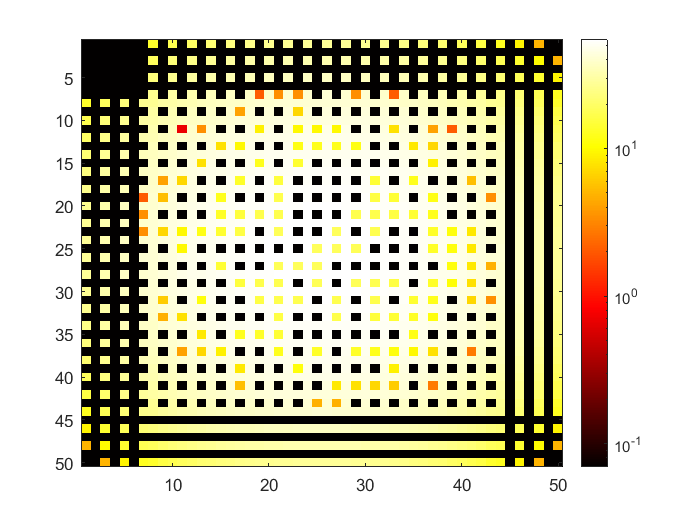} 
        \caption{7 by 7 Bacon-Shor code} \label{fig:timing3}
    \end{subfigure}
    \hfill
    \begin{subfigure}[t]{0.45\textwidth}
    \centering
    \includegraphics[width=\linewidth]{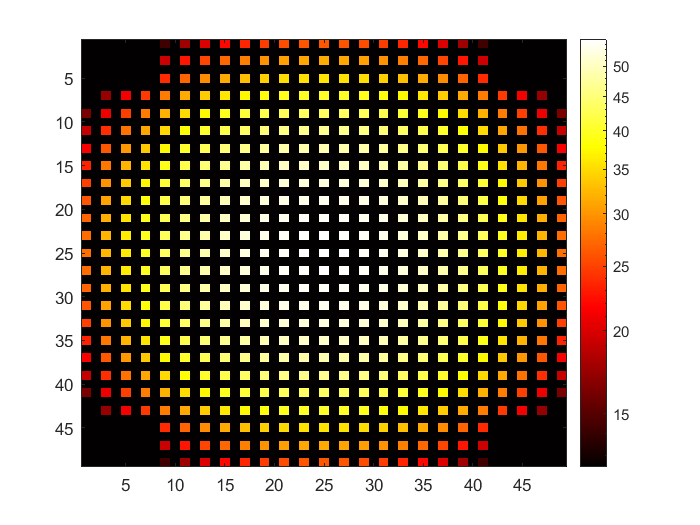}
    \caption{6 by 8 Bacon-Shor code}
    \end{subfigure}
    \caption{Plotting $\log(C_{w_x,w_z})$ in log scale, where $X$ and $Z$ weights are labelled by the vertical and horizontal axes respectively.}
    \label{fig:Bacon-ShorXZ}
\end{figure}

Note that it has a very different structure from the surface code operator weight distribution, a likely consequence of the even weight gauge generators. 

\subsubsection{2d Compass code}
Now we examine different instances of gauge fixed Bacon-Shor codes. For an $\ell\times \ell'$ Bacon-Shor code, let us fix the XX gauge by promoting $(\ell-1)(\ell'-1)$ weight-2 $X$ type gauge operators to stabilizer generators. This yields a stabilizer group with $(\ell-1)+(\ell'-1)+(\ell-1)(\ell'-1)=\ell+\ell'-2+\ell\ell'-\ell-\ell'+1=\ell\ell'-1$ generators, which is a $[[\ell\ell',1,\min(\ell,\ell')]]$ stabilizer code. 

The tensor network for this gauge can be built from the tensor of two different repetition codes 
\begin{align}
    W_R&=|00\rangle\langle 0|+|11\rangle\langle 1|\\
    W_B&=(|00\rangle+|11\rangle)\langle 0|/\sqrt{2}+(|10\rangle+|01\rangle)\langle 1|/\sqrt{2}.\nonumber
\end{align}
The code defined by $W_R$ has stabilizer $ZZ$, and $\bar{X}=XX,\bar{Z}=IZ$ and the one with $W_B$ has $X\leftrightarrow Z$ with stabilizer $XX$, $\bar{X}=IX,\bar{Z}=ZZ$. Their tensors are colored red and blue respectively. The output legs of the tensors are connected into a ring while leaving the inputs dangling. This constructs tensors in the Tree Tensor Network, Fig.~\ref{fig:xxbsc}. Each of the bigger red nodes corresponds to a stabilizer state with stabilizer group $\langle \mathrm{all~}X, \mathrm{even~weight}~ZZ\rangle$. The same holds for the big blue nodes but with $X\leftrightarrow Z$. 

\begin{figure*}[]
    \centering
    \includegraphics[width=0.7\linewidth]{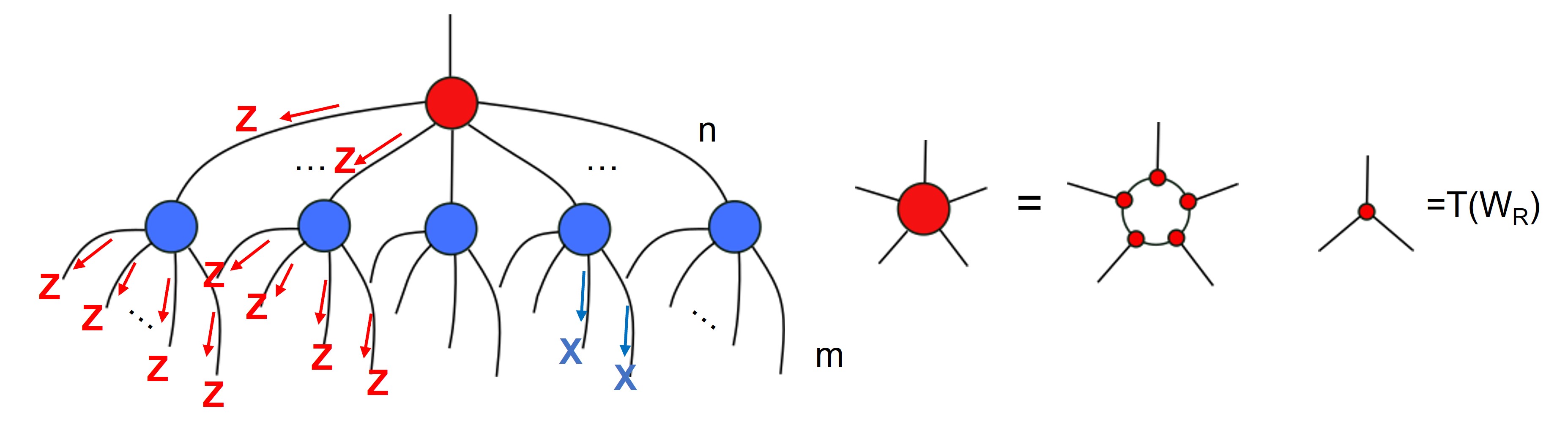}
    \caption{Tree tensor network for a $m\times n$ Bacon-Shor code in the XX gauge. Some stabilizers are shown via operator pushing. The tensors are obtained from repetition code encoding maps.}
    \label{fig:xxbsc}
\end{figure*}

Although the code has $d\sim\sqrt{n}$, the entanglement for some subsystems of size $\sim d$ can be much weaker. This allows us to write down a more efficiently contractible tensor network by taking advantage of these low entanglement cuts\footnote{Note that this speed up would not be possible for non-degenerate codes as all subsystems of size $d$ has the same entanglement $\sim d$.
}. The total time complexity for obtaining the enumerator is thus $O(\ell+\ell')\sim O(d)$ if $\ell\approx \ell'$. 

By fixing the gauges in other ways, one produces a class of codes known as the 2d compass codes \cite{compasscode} which includes a gauge that reproduces the surface code and the $XX$ (or $ZZ$) gauge we examined. Coincidentally, these are also two gauges of the Bacon-Shor code with the highest ($O(n\exp(\sqrt{n}))$) and lowest ($O(\sqrt{n})$) computational cost respectively. The entanglement structure of the underlying quantum state generally depends on the gauge choice. While this speed up is not surprising, as the example can be built from code concatenation, we can estimate how cost would scale for other patterns of gauge fixings that are everywhere-in-between provided we have tensor networks whose connectivity captures the entanglement feature. Intuitively, we can roughly understand the speed up as a statement about entangled clusters. When the code is fixed in the pure XX gauge for instance, there is little entanglement across the columns or rows of the code. If we now introduce gauge fixing such that $ZZ$ stabilizers can occur with some non-vanishing fraction, this introduces more entanglement across these clusters and the resulting tensor network minimal cut now has to cut through these additional bridges of entanglement. Generally, we then expect the complexity to scale exponentially as the width of these bridges, or the minimal cuts that separates these clusters. In the extreme case of the surface code, the bridges are of $\sqrt{n}$, and in the pure XX or ZZ gauge, the bridge is of $O(1)$.  By slowly deforming from the $XX$ or $ZZ$ gauge, one may also explore the intermediate regime of complexities $O(\sqrt{n})\rightarrow poly(n)\rightarrow O(\exp(\sqrt{n}))$\footnote{In this way, the $\sim\exp(d)$ cost of computing the Bacon-Shor weight enumerator is not surprising as the unfixed tensor network encompasses all 2d compass code configurations.}. A more comprehensive study of this complexity transition and gauge fixing can be an interesting subject for future exploration.

\section{Discussion}\label{subsec:discussion}
In this work, we generalize the existing weight enumerator formalism to study cosets, subsystem codes, and all single qubit error channels. In conjunction with tensor networks, we extend their applications in quantum error correction. We show that weight enumerators can be computed more efficiently using tensor network methods once a QL construction of the code is known. The complexity can vary depending on the tensor network connectivity, and is dominated by the cost of tensor contractions. For a QL construction that faithfully reflects the entanglement structure of the code words, the cost for finding their enumerator is $\sim O(\exp(d))$ for non-degenerate codes and up to exponentially faster for degenerate codes. As a novel distance-finding protocol, our proposal constitutes the only and the best current algorithm for finding the distance beyond stabilizer codes. In the case of Pauli stabilizer codes, this provides a comparable performance for non-degenerate codes, and up to exponential speed up for degenerate codes. 

Using the generalized coset enumerators, we also construct (optimal) decoders for all codes using weight enumerators for all i.i.d.~single qubit error channels. As a corollary, it improves the simulation accuracy when estimating fault-tolerant thresholds if used in conjunction with existing methods. Since QL includes all quantum codes, and thus stabilizer codes, the enumerator method can also be understood as a generalization of tensor network decoders. Finally we applied our method numerically to codes with sizes of order ${100}$ to $200$ qubits, showing that it is practical to study codes of relevant sizes in near-to-intermediate term devices. We also provide novel analysis of the surface code, color code, holographic code and the Bacon-Shor code using exact analytical expressions. These include their full operator weight distributions and certain code performance under coherent or biased noise. For the holographic code, we also present new results on asymmetric distances and the varied behaviour of different bulk qubits under (biased) Pauli error. 

{\color{black}This advance also has a wide range of applications in the context of measurement-based quantum computation quantum manybody physics. We have shown that higher genus weight enumerators computes the stabilizer Renyi entropy, or magic, of a quantum state. It is also known that Shor-Laflamme enumerator, or sector length, is a powerful tool to study the entanglement structure of cluster states. With these novel connections between coding-theoretic objects and quantum resource-theoretic quantities, our method provides a far more efficient method to characterize entanglement and magic in quantum manybody systems compared to brute force evaluation. For the case of graph states, existing methods to compute sector lengths have been limited to order 30 qubits. Our numerics from 2d tensor network suggests that this may be pushed to about ten times higher with modest hardware requirements on 2d cluster states. Numerical computation of quantum manybody magic is also widely recognized as a challenging problem. Here we provide an alternative method that is readily implementable for a variety of tensor network architectures.

We also provided a systematic method for building QL decomposition of existing quantum codes, filling the void left by our previous work \cite{CL2021}. In particular, our novel tensor network construction applied to quantum LDPC codes provides a simple algorithmic method to analyze such codes from the perspective of quantum manybody systems using bounded-degree tensor networks.
}

\subsection{Connection with stat mech mapping}
We comment on the connection between optimal decoding and distance from the point of view of the statistical mechanical mapping and weight enumerators.
Recall that the coset weight enumerator polynomial $A(\bar{E},\mathbf{u})$ of $E$ captures the weight distribution of all operators that are stabilizer equivalent to $E$. By plugging in the corresponding coefficients $\mathbf{k}$ from decomposing the error channels, one obtains the probability of incurring any errors that are equivalent to $E$. This is nothing but the partition function $Z_E$ by solving the stat mech mapping \cite{ChubbFlammia} associated with a noise model that satisfies the Nishimori condition for all parameters $\beta J_i$ where $\beta$ is the inverse temperature and $\{J_i\}$ are coupling strengths of the model. 

Conversely, if the error probability from the stat mech model can be obtained exactly, then it must agree with $A(\bar{E},\mathbf{u})$ in some domain that is a connected region near the origin. Since if two polynomials $f,g$ agree in infinite number of points, $f-g$ must have an infinite number of roots. This cannot happen for any non-trivial $f-g$ because the degree $deg(f-g)\leq \max (deg(f),deg(g))$ is bounded. This implies that the solution $Z_E=A(\bar{E},\mathbf{u})$ must be unique. Therefore by solving the stat mech model and obtaining its partition function for different values of $\beta J_i$, we must also have sufficient information to uniquely fix the enumerator polynomial. For example for symmetric Pauli noise, one can in principle fix the coefficients of the polynomial by computing the values of $Z_E(\beta)$ at different temperatures. As there are only finitely many coefficients for $A^s$, one can solve an overconstrained system of equations with integer solutions. 

In practice, however, the expression for $Z_E=\Pr(\bar{E})$ in the stat mech model is often obtained numerically. Therefore, unless $P=NP$ (or $NP=RP$) the reverse process going from the stat mech output to the enumerator can only be trusted to produce the correct results only when the values of $\Pr(\bar{E})$ hold to exponential accuracy generally. This is expected, because otherwise one can solve the minimal distance problem approximately \cite{DumerApprox} using the stat mech model in polynomial time with approximate tensor contraction.
In instances where the partition functions can be (or have been) obtained with relatively high accuracy such that the cost is less expensive compared to the current enumerator method, one can also acquire polynomially many values of the partition functions at different coupling strengths. One can then fit the coefficients of the enumerator polynomial to these data points. This allows us to derive (an approximation of) the enumerator and thus also extrapolat the error probabilities to other regimes instead of evaluating those points individually using the stat mech model.

\subsection{Future directions}

Recently, it was shown that asymptotically good quantum Low-Density-Parity-Check (LDPC) codes like \cite{NLTS} have a circuit depth lower bound that is $\log n$. Since these codes are highly degenerate and some may sustain linear distances even with a much lower entanglement along some cuts, it is possible that a good tensor network description may lead to a more efficient distance verification protocol for codes whose code words saturate the entanglement lower bound. However, we also note that small sized examples, their tensor network descriptions, and a tight entanglement lower bound are still open problems as of the time of writing, the advantage our method provides only remains a theoretical possibility\footnote{The expansion property of these codes may naively indicate the edge cut to scale with volume. However, as they are not the corresponding tensor networks, and that edge cuts are only upper bounds of entanglement, it may be possible that a sparser tensor network can be found that permits fewer cuts.}. Therefore, a general QL recipe for building qLDPC codes would be useful.

As weight enumerators are applicable for non-(Pauli)-stabilizer codes, they can be used to study or search for such codes while providing crucial information on their distances. This would extend the examples in this work beyond stabilizer codes and would also have relevant applications in optimization-based methods that need not produce stabilizer codes \cite{VQAQEC}. For example, XS or XP codes \cite{XS,XP} do not have abelian stabilizer groups and currently lack a protocol for computing their code distances. However for general codes, reduced enumerators are likely insufficient, and a higher bond dimension will be needed.

Note that beyond QECC literature, Shor-Laflamme enumerators, also known as sector lengths in graph states \cite{Eltschka2020maximumnbody,SLD}, have been used to study the structure as well as the robustness of entanglement in entangled resource states. {\color{black} Sector lengths of graph states are difficult to compute using the brute force method. Given our tensor network decomposition of all graph states, we expect our method to carry immediate impact in the analysis of 2D or planar cluster states with $>100$ qubits using sector lengths as well as more general applications in the context of fault-tolerant resource state preparation for measurement/fusion-based quantum computations and quantum networks. 

In the context of quantum many-body magic, non-stabilizerness has been difficult to compute numerically. As stabilizer Renyi entropy and other measures have been related to quantum chaos, entanglement spectrum and the emergence of gravity in AdS/CFT correspondence, it is interesting to explore whether the tensor network methods based on enumerators are advantageous for more efficient magic computations. It is also intriguing to understand whether the quantum MacWilliams identity can provide important constraints for quantum many-body entanglement and magic.}

Tensor enumerator methods are also useful when used in conjunction with machine learning (especially reinforcement learning)-based methods for QECC search \cite{QLRL,TNCRL}. As one would typically need to evaluate certain code properties, such as distance, that are resource intensive, the tensor enumerator method can be used to drastically decrease the time needed to evaluate the cost function. It is also of interest to study the effect of approximate tensor contractions and how they impact the accuracy of the weight distribution and related distance information. 

While we have treated all i.i.d. single qubit errors, the current formalism does not tackle general location-based or correlated error efficiently. For the former, a straightforward extension exists where one can either introduce an additional variable for each location that has independent error pattern. This remains efficient as long as the types of distinct error channels is small, but can quickly become intractible if it scales with the system size. Alternatively, one can precontract the tensor with a fixed error parameter $\{p_i\}$ instead of describing them as variables. The latter reduces to a more general tensor network decoder \cite{FP2013,TNrealnoise,TNdec,TNML,DarmawanPoulinCorr}. {\color{black}In particular, \cite{DarmawanPoulinCorr} shows that correlated errors can be efficiently studied using PEPO with approximate tensor contraction. Both correlated noise and approximate contraction schemes can be interesting future avenues of research in the context of tensor weight enumerators. }In the same vein, further extension is needed to describe fault-tolerant processes which are fundamentally dynamical. Therefore, an enumerator framework compatible with space-time quantum error correction that incorporates gadgets that includes measurement errors, mid-circuit noise and POVMs will be needed.

Finally, while enumerators were first defined in classical coding theory, one yet needs an efficient method to compute them for classical codes. Therefore, it is natural to extend the current QL-based approach to classical codes and compute their weight enumerator polynomials. Such tasks may be accomplished by directly applying the current formalism for classical codes and rephrasing them as quantum stabilizer codes with trivial generators, or devising a more efficient method that performs the analog of the trace or conjoining operation for classical codes.

\section*{Acknowledgement}
We thank Y.D. Li, D. Miller, G. Sommers, and Y.J. Zou for helpful discussions and comments on the manuscript.
C.C. acknowledges the support by the U.S. Department of Defense and NIST through the Hartree Postdoctoral Fellowship at QuICS, the Air Force Office of Scientific Research (FA9550-19-1-0360), and the National Science Foundation (PHY-1733907). M.J.G. acknowledges support from the National Science Foundation (QLCI grant OMA-2120757). The Institute for Quantum Information and Matter is an NSF Physics Frontiers Center. Certain commercial equipment, instruments, or materials are identified
in this paper in order to specify the  procedure adequately and do not reflect any endorsement by NIST.
\appendix

\section{Common Scalar Enumerators}\label{app:scalarenum}
For completeness, we review a few examples below that we we have used in this work. 
\subsection{Shor-Laflamme weight enumerator}
The original weight enumerators \cite{macwilliams1963theorem,macwilliams1977theory} are important objects in classical coding theory. Their quantum counterparts were introduced by Shor and Laflamme \cite{ShorLaflamme}, which capture some key properties of an error correcting code. They feature a duo of polynomials that take the forms of

\begin{align}\label{al:shorlaflamme}
    A(z,w) &= \sum_{d=0}^n A_d(M_1,M_2) z^dw^{n-d}\\
    B(z,w) &= \sum_{d=0}^n B_d(M_1,M_2) z^dw^{n-d},
\end{align}
where 
\begin{align}
    A_d(M_1,M_2) &= \sum_{E \in \mathcal{E}[d]} \Tr(EM_1)\Tr(EM_2), \text{ and }\\
    B_d(M_1,M_2) &= \sum_{E \in \mathcal{E}[d]} \Tr(EM_1 EM_2)
\end{align}
for some Hermitian $M_1,M_2$ and $\mathcal{E}[d]$ which denotes unitary errors of weight $d$. Here without loss of generality we can simply choose the Pauli basis.
Note that they are a special case of the abstract enumerator \cite{CL2022}, and we may recover them by setting $\mathbf{u}=(w,z)$ and

\be\mathrm{wt}(E) = \left\{\begin{array}{cl} (1,0) & \text{if $E = I$}\\ (0,1) & \text{otherwise.}\end{array}\right.\ee
So that $\mathbf{u}^{wt(E)}=w^{n-wt(E)}z^{wt(E)}$, where $wt(E)$ is simply the operator weight of the Pauli string $E$.

These polynomials are related by the MacWilliams identity
\begin{equation}
    B(w,z)= A\left(\tfrac{w+(q^2-1)z}{q},\tfrac{w-z}{q}\right).
\end{equation}
Therefore, it is sufficient to obtain one of them, and perform MacWilliams transform to get the other. In practice, for a brute force algorithm, it is often easier to recover $A(z,w)$. 

Note that these polynomials are sometimes expressed in the inhomogeneous form where $A(z)=A(w=1,z), B(z)=B(w=1,z)$. As it is simple to recover the homogenized form by setting $A(z)\rightarrow w^nA(z/w)$ and similarly for $B$, we refer to both of them weight enumerators as they contain the same information as encoded by the coefficients.

\subsection{Refined Enumerators}
We can also consider a generalization of the Shor-Laflamme polynomial (\ref{al:shorlaflamme}) where we separate the weights by type \cite{hu2020weight}. One such example is the double weight enumerator. Using variables $\mathbf{u}=(w,x,y,z)$

\begin{equation}\label{eqn:doublewt}
\mathrm{wt}(E) = \left\{\begin{array}{cl} 
(0,1,0,1) & \text{if $E = I$}\\ 
(0,0,1,1) & \text{if $E = X$}\\ 
(1,0,1,0) & \text{if $E = Y$}\\ 
(1,1,0,0) & \text{if $E = Z$}\end{array}\right.\end{equation}

This is useful when for instance, we consider a biased error model where bit flip $(X)$ and phase $(Z)$ errors occur independently with different probabilities. Depending on the form of the biased Pauli noise, other weight functions may be used for the weight function. Such double enumerators may be used as long as the biased Pauli noise only admits two independent physical error parameters. 
The polynomials are
\begin{align}
    &D(x,y,z,w;M_1.M_2)\\\nonumber
    =&\sum_{w_x,w_z}^n D_{w_x,w_z}y^{w_x}w^{w_z}x^{n-w_x}z^{n-w_z}\\
    &D^{\perp}(x,y,z,w; M_1,M_2)\\\nonumber
    =&\sum_{w_x,w_z}^n D_{w_x,w_z}^{\perp}y^{w_x}w^{w_z}x^{n-w_x}z^{n-w_z},
\end{align}
where
\begin{align}
    D_{w_x,w_z}&=\sum_{E\in \mathcal{E}[w_x,w_z]}\Tr[EM_1]\Tr[E^{\dagger}M_2]\\
    D^{\perp}_{w_x,w_z}&=\sum_{E\in\mathcal{E}[w_x,w_z]}\Tr[EM_1 E^{\dagger}M_2],
\end{align}
and $\mathcal{E}[w_x,w_z]$ is the set of Paulis that have $X$ and $Z$ weights $w_x,w_z$ respectively. 

The MacWilliams identity was derived in \cite{hu2020weight} for local dimension 2 where $M_1=M_2$ are projection operators onto the code subspace. In \cite{CL2022}, it was extended arbitrary local dimension $q$ and $M_1,M_2$. We reproduced the relation here for convenience
\begin{align}
    &D^{\perp}(x,y,z,w)\\\nonumber &\quad = D\left(\tfrac{x+(q-1)y}{\sqrt{q}}, \tfrac{z-w}{\sqrt{q}}, \tfrac{z+(q-1)w}{\sqrt{q}}, \tfrac{x-y}{\sqrt{q}}\right).
\end{align}

The inhomogeneous forms are
\begin{align}
    D(y,w)&=\sum_{w_x,w_z}^n D_{w_x,w_z} y^{w_x}w^{w_z}\\
    D^{\perp}(y,w)&=\sum_{w_x,w_z}^n D^{\perp}_{w_x,w_z} y^{w_x}w^{w_z}.
\end{align}
One can easily restore the $x,z$ dependence as their powers are fixed by $n, w_x,w_z$.

\begin{theorem}\label{thm:refined_dist}
If $t_x,t_z$ are the two largest integers such that $D_{w_x,w_z}=D^{\perp}_{w_x,w_z}$ for $w_x<t_x, w_z<t_z$, then $d_x=t_x, d_z=t_z$.
\end{theorem}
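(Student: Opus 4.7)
The plan is to mimic the classical argument that gives the distance from the Shor--Laflamme enumerator, but apply it to the refined weight function in Eq.~(\ref{eqn:doublewt}) so that the $X$-weight and $Z$-weight are tracked independently. Specifically, I would specialise $M_1=M_2=\Pi$, the projector onto a stabilizer code of dimension $K=q^k$, so that standard identities give, for any Pauli $E$,
\[
\Tr(E\Pi)\,\Tr(E^\dagger\Pi) \;=\; \begin{cases} q^{2k}/|\mathcal{S}|^2 \cdot q^{2n} & E\in\mathcal{S},\\ 0 & \text{otherwise},\end{cases}
\]
and
\[
\Tr(E\Pi E^\dagger\Pi) \;=\; \begin{cases} q^{2n}/|\mathcal{S}| & E\in\mathcal{N}(\mathcal{S}),\\ 0 & \text{otherwise}.\end{cases}
\]
After the normalization that makes $D_{0,0}=D^\perp_{0,0}=1$, this shows $D_{w_x,w_z}$ equals $|\mathcal{S}\cap \mathcal{E}[w_x,w_z]|$ and $D^\perp_{w_x,w_z}$ equals $|\mathcal{N}(\mathcal{S})\cap\mathcal{E}[w_x,w_z]|$, where $\mathcal{E}[w_x,w_z]$ is the set of Pauli strings of $X$-weight $w_x$ and $Z$-weight $w_z$. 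A parallel argument with the reduced projectors handles the general (non-stabilizer) case via the inequality $\Tr(E\Pi E^\dagger\Pi)\geq |\Tr(E\Pi)|^2/K$.

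Next I would read off the distance in the biased sense. Since $\mathcal{S}\subseteq \mathcal{N}(\mathcal{S})$, we have $D^\perp_{w_x,w_z}\geq D_{w_x,w_z}$ for every pair, and the difference
\[
\Delta_{w_x,w_z} \;:=\; D^\perp_{w_x,w_z}-D_{w_x,w_z}
\]
is the number of non-trivial logical Pauli operator representatives with $X$-weight $w_x$ and $Z$-weight $w_z$. By definition, $d_x$ is the smallest $w_x$ for which there exists some $w_z$ with $\Delta_{w_x,w_z}>0$; symmetrically for $d_z$. Reading the hypothesis as stating that $t_x$ (resp.\ $t_z$) is the largest integer such that $\Delta_{w_x,w_z}=0$ for every $w_x<t_x$ and every $w_z$ (resp.\ every $w_z<t_z$ and every $w_x$), the theorem reduces to the identification $d_x=t_x$, $d_z=t_z$.

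The concluding step is a straightforward minimality argument: if $\Delta_{w_x,w_z}=0$ for all $w_x<t_x$ then no non-trivial logical has $X$-weight below $t_x$, so $d_x\geq t_x$; and maximality of $t_x$ in the hypothesis gives some $w_z^\star$ with $\Delta_{t_x,w_z^\star}>0$, producing a logical of $X$-weight exactly $t_x$, so $d_x\leq t_x$. The same argument with the roles of $X$ and $Z$ swapped yields $d_z=t_z$.

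The main subtlety, and the step I would be most careful about, is pinning down the precise meaning of ``the two largest integers'' in the hypothesis, since the set of $(t_x,t_z)$ for which equality holds on a rectangular region is downward closed rather than singly maximized. The cleanest resolution is to read $t_x$ and $t_z$ as independently maximal one-dimensional thresholds, matching the analogue for Shor--Laflamme in Theorem~3.1. A secondary care point is bookkeeping the normalization constants so that the equality $\Delta_{w_x,w_z}=0$ really corresponds to absence of non-trivial logicals rather than to a coincidence of two unnormalized sums; this is handled once and for all by enforcing $D_{0,0}=D^\perp_{0,0}=1$. Extending from stabilizer codes to general codes proceeds verbatim after invoking Ashikhmin--Litsyn's inequality $B_d\geq A_d$ in the refined setting, which follows by the same Cauchy--Schwarz bound applied term-by-term within each $(w_x,w_z)$ sector.
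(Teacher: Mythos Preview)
The paper does not provide its own argument; it simply defers to Theorem~8 of \cite{hu2020weight}. Your proposal is the natural self-contained proof and is essentially what that reference does: interpret the normalized coefficients $D_{w_x,w_z}$ and $D^\perp_{w_x,w_z}$ as counting stabilizers and normalizers of prescribed $(w_x,w_z)$-weight, so that their difference counts non-trivial logical representatives in each sector, and then read off the biased distances by a minimality argument. Your flagging of the interpretational ambiguity in ``the two largest integers'' is well placed; the intended reading is exactly the independently maximal one-dimensional thresholds you adopt. One small bookkeeping remark: the explicit constants in your trace identities are off (for instance $\Tr(E\Pi)=q^k$ when $E\in\mathcal{S}$, not $q^{2k}/|\mathcal{S}|^2\cdot q^{2n}$), but since you immediately normalize to $D_{0,0}=D^\perp_{0,0}=1$ this has no effect on the argument.
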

\begin{proof}
    See Theorem 8 of \cite{hu2020weight}.
\end{proof}

A even more refined weight function distinguish all the Pauli operators by their types
\be\mathrm{wt}(E) = \left\{\begin{array}{cl} 
(1,0,0,0) & \text{if $E = I$}\\ 
(0,1,0,0) & \text{if $E = X$}\\ 
(0,0,1,0) & \text{if $E = Y$}\\ 
(0,0,0,1) & \text{if $E = Z$}\end{array}\right.\ee
This is known as the complete weight enumerator \cite{hu2020weight}. Again, let $\mathbf{u}=(w,x,y,z)$
\begin{align}
    &E(x,y,z,w;M_1,M_2)\\\nonumber
    =&\sum_{w_x,w_y,w_z}E_{w_x,w_y,w_z}x^{w_x} y^{w_y}z^{w_z} w^{n-w_x-w_y-w_z}\\
   &F(x,y,z,w;M_1,M_2)\\\nonumber
   =&\sum_{w_x,w_y,w_z}F_{w_x,w_y,w_z}x^{w_x} y^{w_y}z^{w_z} w^{n-w_x-w_y-w_z},
\end{align}
where
\begin{align}
      E_{w_x,w_y,w_z}  = &\sum_{Q\in \mathcal{E}[w_x,w_y,w_z]}\Tr[QM_1]\Tr[Q^{\dagger}M_2] \mathbf{u}^{wt(Q)}\\
      F_{w_x,w_y,w_z} =&\sum_{Q\in \mathcal{E}[w_x,w_y,w_z]}\Tr[QM_1 Q^{\dagger}M_2]\mathbf{u}^{wt(Q)}.
\end{align}
and $\mathcal{E}[w_x,w_y,w_z]$ are the Pauli operators with those $X,Y$ and $Z$ weights respectively. See \cite{CL2022} for general MacWilliams identities at any $q$.

\subsection{Applications to stabilizer codes}
Before we move on to tensor enumerators, let us build up some intuition as to what these polynomials are enumerating. Let us examine a special case where we set $M_1=M_2=\Pi$ to be the projection onto the code subspace of a quantum code. Furthermore, let us suppose that this is a $[[n,k]]$ stabilizer code, meaning that

\begin{equation}
    \Pi= \frac{1}{2^{n-k}}\sum_{S\in \mathcal{S}} S
\end{equation}

It is clear that $\Tr[E\Pi]\neq 0$ if and only if $E\in \mathcal{S}$ is a stabilizer element and $\Tr[E\Pi E^{\dagger}\Pi]\neq 0$ if and only if $E\in \mathcal{N}(\mathcal{S})$ is a normalizer element. Therefore, we see that, up to a constant normalization factor, the coefficients $A_d$ of $A(z;\Pi,\Pi)$ is simply enumerating the number of stabilizer elements with weight $d$ and $B_d$ enumerating the number of logical operators with weight $d$. Consequently, $\sum_dA_d=2^{n-k}$ and $\sum_d B_d=2^{n+k}$ for a $[[n,k]]$ code.

For the refined enumerators, the coefficients of the double enumerator $D,D^{\perp}$ are simply recording the number of stabilizer and normalizer elements that have $X,Z$-weights $(w_x,w_z)$. Similarly, the complete enumerator coefficients $E_{w_x,w_y,w_z},F_{w_x,w_y,w_z}$ count the elements with those corresponding $X,Y$ and $Z$ weights.

One can also set $M_1,M_2$ to different operators to extract different information about the code. For example, in coset enumerators, $A^s_d$ counts the number of coset elements with a particular weight.

\section{Instances of Tensor Enumerators}\label{app:tensorenum}
We have seen previously particular instances of tensor enumerators with $\mathbf{u}=z$. One can extend examples in the main text to other enumerators, which we have used to study other error models. 
\subsection{Refined Tensor Enumerators}
Similar to the scalar forms, we apply $\mathbf{u}=(w,x,y,z)$ and the weight function \ref{eqn:doublewt} to \ref{eqn:tensor_enum}.
The tensor coefficients are
\begin{align}\label{eqn:B3}
\begin{split}
    &D^{(J)}_{w_x,w_z}(E,\bar{E};M_1,M_2)\\
    & = \sum_{F \in \mathcal{E}^{n-m}[w_x,w_z]} \Tr((E \otimes_J F)M_1) \Tr((\bar{E} \otimes_J F)^\dagger M_2),\\
    &D^{\perp(J)}_{w_x,w_z}(E,\bar{E};M_1,M_2)\\
    & = \sum_{F \in \mathcal{E}^{n-m}[w_x,w_z]} \Tr((E \otimes_J F)M_1 (\bar{E}^\dagger \otimes_J F^\dagger) M_2),
    \end{split}
\end{align}
where $J\subseteq\{1,2,\dots n\}$ are the locations of open legs in the tensor enumerator. As in the main text, $\otimes_J$ denotes the operation where we insert $E$s at corresponding positions of $J$ indices to form a $n$-qubit Pauli string with $F$ which has length $n-m$ for $m$ open indices.

For complete tensor enumerators, we replace $\mathcal{E}[w_x,w_z]\rightarrow \mathcal{E}[d_x,d_y,d_z]$ and 
\begin{align}\begin{split}
    D_{w_x,w_z}^{(J)}(E,\bar{E},M_1,M_2)\rightarrow E_{d_x,d_y,d_z}^{(J)}(E,\bar{E},M_1,M_2),\\ D^{\perp (J)}_{w_x,w_z}(E,\bar{E},M_1,M_2)\rightarrow F_{d_x,d_y,d_z}^{(J)}(E,\bar{E},M_1,M_2)
    \end{split}
\end{align} 
in (\ref{eqn:B3}). $\mathcal{E}[d_x,d_y,d_z]$ is the set of Pauli operators with $X,Y,Z$ weights given by $d_x,d_y,d_z$ respectively.

\subsection{Generalized Tensor Enumerators}
For the most general noise model, it is also useful to define generalized abstract enumerator
\begin{widetext}
\begin{align}
    \tilde{\mathbf{A}}^{(J)}(\mathbf{u}; M_1, M_2) &= \sum_{E,\bar{E}\in\mathcal{E}^m} \sum_{F,\bar{F} \in \mathcal{E}^{n-m}} \Tr((E \otimes_J F)M_1) \Tr((\bar{E}^\dagger \otimes_J \bar{F}^\dagger) M_2) \mathbf{u}^{\mathrm{wt}(F,\bar{F})} e_{E,\bar{E}},\\
    \tilde{\mathbf{B}}^{(J)}(\mathbf{u}; M_1, M_2) &= \sum_{E,\bar{E}\in\mathcal{E}^m}  \sum_{F \in \mathcal{E}^{n-m}}  \Tr((E \otimes_J F)M_1 (\bar{E}^\dagger \otimes_J \bar{F}^\dagger) M_2) \mathbf{u}^{\mathrm{wt}(F,\bar{F})} e_{E,\bar{E}},
\end{align}
\end{widetext}
where the forms are similar to the conventional tensor enumerator but the sum and weight function now depend on two independent variables $F,\bar{F}$. These are useful for computing generalized scalar weight enumerators (Sec~\ref{subsec:genabs}) which finds applications in noise models such as coherent noise or amplitude damping channel (Sec~\ref{subsec:errdet}). 

\begin{theorem}
    Suppose $j,k\in J\subset\{1,\dots,n\}.$ Then
    \begin{align}\begin{split}
        &\wedge_{j,k}\tilde{\mathbf{A}}^{(J)}(\mathbf{u};M_1,M_2)\\
        &\quad = \tilde{\mathbf{A}}^{(J\setminus\{j,k\})}(\mathbf{u};\wedge_{j,k}M_1,\wedge_{j,k}M_2)
        \end{split}
    \end{align}
    and similarly for $\tilde{\mathbf{B}}$.
\end{theorem}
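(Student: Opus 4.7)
The plan is to adapt the proof of Theorem \ref{thm:tensortrace1} (Theorem 7.1 of \cite{CL2022}), noting that the only structural difference between the generalized tensor enumerators $\tilde{\mathbf{A}}^{(J)}, \tilde{\mathbf{B}}^{(J)}$ and the original $\mathbf{A}^{(J)}, \mathbf{B}^{(J)}$ is that the interior string $\bar{F}$ is independent from $F$ and the monomial carries a joint weight $\mathrm{wt}(F,\bar{F})$. Crucially, the trace operation $\wedge_{j,k}$ acts only on the open-leg indices $e_{E,\bar{E}}$ at positions $j,k$ and on the operators $M_1,M_2$, and never touches $F$ or $\bar{F}$; this decoupling is the reason the proof should go through essentially verbatim.

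First I would expand the left-hand side by applying $\wedge_{j,k}$ to the defining sum for $\tilde{\mathbf{A}}^{(J)}(\mathbf{u};M_1,M_2)$ and invoking the basis-element rule $\wedge_{j,k} e_{E,\bar{E}} = \delta_{E_j,E_k^*}\,\delta_{\bar{E}_j,\bar{E}_k^*}\, e_{E',\bar{E}'}$, where $E',\bar{E}'$ denote the strings with the $j$-th and $k$-th entries deleted. This produces an expression in which the sums over $E_j,E_k$ and $\bar{E}_j,\bar{E}_k$ are restricted to matched Pauli pairs and appear inside the two Hermitian-inner-product traces, while the monomial $\mathbf{u}^{\mathrm{wt}(F,\bar{F})}$ factors out because the weight function depends only on $F,\bar{F}$ and is unaffected by the constrained indices.

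Next I would invoke the key lemma used in \cite{CL2022}: for any operator $N$ on $\mathfrak{H}^{\otimes n}$ and any two sites $j,k$,
\begin{equation*}
\sum_{P\in\mathcal{E}}\mathrm{Tr}\bigl((\,\cdot\otimes_j P\otimes_k P^*)\,N\bigr) = q^{2}\,\mathrm{Tr}\bigl(\,\cdot\,(\wedge_{j,k}N)\bigr),
\end{equation*}
which is nothing more than the Pauli resolution of identity applied leg-wise to produce a partial trace. Applying this once to the sum over $E_j,E_k$ inside $\mathrm{Tr}((E\otimes_J F)M_1)$ converts $M_1$ to $\wedge_{j,k}M_1$ and restricts the remaining indices to $J\setminus\{j,k\}$. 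Applying it again, independently, to the sum over $\bar{E}_j,\bar{E}_k$ inside $\mathrm{Tr}((\bar{E}^\dagger\otimes_J\bar{F}^\dagger)M_2)$ converts $M_2$ to $\wedge_{j,k}M_2$. Collecting everything yields exactly $\tilde{\mathbf{A}}^{(J\setminus\{j,k\})}(\mathbf{u};\wedge_{j,k}M_1,\wedge_{j,k}M_2)$, once the normalization constants are reconciled in the same manner as in \cite{CL2022}.

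For $\tilde{\mathbf{B}}^{(J)}$ the same strategy applies, except that $M_1$ and $M_2$ now sit inside a single trace $\mathrm{Tr}((E\otimes_J F)M_1(\bar{E}^\dagger\otimes_J\bar{F}^\dagger)M_2)$. Here one uses cyclicity of the trace to move the $E$-factors and $\bar{E}^\dagger$-factors into positions where the matched-Pauli identity can be invoked on each pair separately, yielding the partial trace $\wedge_{j,k}$ applied to the composite operator structure; the resulting expression matches the definition of $\tilde{\mathbf{B}}^{(J\setminus\{j,k\})}(\mathbf{u};\wedge_{j,k}M_1,\wedge_{j,k}M_2)$. The main obstacle would be careful bookkeeping of factors of $q$ and of the orderings under the trace in the $\tilde{\mathbf{B}}$ case, but since the original proof already handled this when $F=\bar{F}$ and the present generalization only decouples those two strings without altering the traced legs, no genuinely new work is required beyond tracking the independent $\bar{F}$ sum.
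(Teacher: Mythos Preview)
Your proposal is correct and follows essentially the same approach as the paper: expand the defining sum, apply the basis-element rule for $\wedge_{j,k}e_{E,\bar{E}}$, and use the Pauli resolution identity $\sum_{P}P\otimes P^* = q\,|\beta\rangle\langle\beta|$ to convert the matched-leg sums into the partial trace $\wedge_{j,k}$ on $M_1$ and $M_2$ separately, with the independent $F,\bar F$ sums and the monomial $\mathbf{u}^{\mathrm{wt}(F,\bar F)}$ carried along unchanged. The paper likewise handles $\tilde{\mathbf{B}}^{(J)}$ by direct repetition of the same manipulation (explicitly noting that no MacWilliams identity is invoked), matching your sketch.
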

\begin{proof}
\begin{widetext}
\begin{align}
\begin{split}
    &\wedge_{jk}\tilde{\mathbf{A}}^{(J)}(\mathbf{u};M_1,M_2) = \sum_{E,\bar{E},F,\bar{F}} \left[\Tr((E\otimes_J F)M_1)\Tr((\bar{E}\otimes_J \bar{F})^\dagger M_2\right] \mathbf{u}^{wt(F,\bar{F})}\left[\wedge_{j,k}e_{E,\bar{E}}\right]\\
    &= \sum_{F,\bar{F}}\sum_{\scriptsize\begin{array}{c}E\setminus\{E_j,E_k\},\\\bar{E}\setminus\{\bar{E}_j,\bar{E}_k\}\end{array}} \sum_{G} \Big\{ \Tr([((G\otimes G^*) \otimes_{j,k} E\setminus\{E_j,E_k\}) \otimes_J F] M_1) \\
    &\qquad \qquad \cdot \sum_{\bar{G}}\Tr([((\bar{G}\otimes \bar{G}^*) \otimes_{j,k} \bar{E}\setminus\{\bar{E}_j,\bar{E}_k\}) \otimes_J \bar{F}]^\dagger M_2)\Big\} \mathbf{u}^{wt(F,\bar{F})} e_{E\setminus\{E_j,E_k\},\bar{E}\setminus\{\bar{E}_j,\bar{E}_k\}}\\
     &= \sum_{E',\bar{E}',F} \Tr((E'\otimes_{J\setminus\{j,k\}} F) (|\beta\rangle\langle\beta|_{j,k} M_1)) \Tr((\bar{E}'\otimes_{J\setminus\{j,k\}} \bar{F})^\dagger (|\beta\rangle\langle\beta|_{j,k} M_2)) \mathbf{u}^{wt(F,\bar{F})}e_{E',\bar{E}'}\\
    &= \sum_{E',\bar{E}',F} \Tr((E'\otimes_{J\setminus\{j,k\}} F) (\wedge_{j,k} M_1)) \Tr((\bar{E}'\otimes_{J\setminus\{j,k\}} \bar{F})^\dagger (\wedge_{j,k} M_2)) \mathbf{u}^{wt(F,\bar{F})}e_{E',\bar{E}'}\\
    &= \tilde{\mathbf{A}}^{(J\setminus\{j,k\})}(\mathbf{u}; \wedge_{jk}M_1, \wedge_{jk}M_2)
    \end{split}
\end{align}
\end{widetext}
where the wedge acts on the vector basis in the usual way.

We used the fact that 
\be|\beta\rangle\langle\beta| = \frac{1}{\D}\sum_{P \in \mathcal{P}} P\otimes P^{*}.\ee

Similarly, we can repeat this argument for $B$-type generalized enumerators.
We do not use MacWilliams identity for this proof as we have not been able to identify any.
\end{proof}

Note that it is often possible to cut down the computational cost when the weight function satisfies the form 
\begin{equation}
    \mathbf{u}^{wt(E,F)} = \mathbf{u}_1^{wt(E)}\mathbf{u}_2^{wt(F)}.
\end{equation}
Then we can write the generalized enumerator as
\begin{align}
    &\bar{A}(\mathbf{u};M_1,M_2)\\ \nonumber&=\sum_{E\in\mathcal{E}^n}\Tr[EM_1]\mathbf{u}_1^{wt(E)}\sum_{F\in \mathcal{E}^n}\Tr[F^{\dagger}M_2] \mathbf{u}_2^{wt(F)}
\end{align}
that factorize into two separate sums such that each piece can be computed separately. For each $M_i$, we can rewrite as a tensor network. This allows us to compute either sum using a tensor network of $\chi=q^2$ by tracing reduced tensor enumerators. We see that for stabilizer codes, the coefficients for each term are identical to the usual $A$-type scalar weight enumerator up to a constant factor normalization. 

\subsection{Stabilizer codes and reduced enumerators}\label{subapp:reducedenum}
Again, let us come back to stabilizer codes for intuition behind these constructions. Consider the reduced tensor enumerator polynomial with open indices $J=\{j_1,\dots j_m\}$ where we set $M_1=M_2=\Pi$ to be the projection onto stabilizer code subspace. We see that each coefficient $A_d^{j_1,\dots,j_m}$ simply enumerates, up to a constant normalization, the number of stabilizer elements that has Pauli string $\sigma^{(j_1)}\otimes\dots\otimes \sigma^{(j_m)}$ on the $1$st through $m$-th qubit/qudit and has weight $d$ on the remaining qubit/qudits. Similarly for the reduced double, complete, and the generalized enumerators, the same intuition applies, except the weights are separated and recorded according to the types of the Pauli operators.

The tracing of reduced enumerators for stabilizer codes can be understood as a simple consequence of operator matching. Recall that stabilizers and logical operators in the QL construction come from matching such operators on the smaller tensors. Since the tensor enumerator is counting the number of stabilizers with weight $d$ and a particular Pauli type on the open legs, tracing it with another tensor enumerator retains precisely the weight distribution of Pauli elements that are matching on the legs being glued. This in turn produces the desired weight distribution of the larger tensor network. Although Theorem \ref{thm:tensortrace1} provides a construction that is sufficient for building weight enumerator of any quantum code, the above intuition suggests that the reduced enumerators are sufficient for stabilizer codes, which allows us to reduce the bond dimension from $q^4$ to $q^2$. 

\begin{definition}
A diagonal trace is defined by
\begin{align}&\wedge_{j,k}^{DT} e_{E,\bar{E}} \\\nonumber
&= \left\{\begin{array}{cl} e_{E\setminus\{E_j,E_k\},\bar{E}\setminus\{\bar{E}_j,\bar{E}_k\}} & \text{if $E_j = E_k^* =\bar{E}_j = \bar{E}_k^*$}\\ 0 & \text{otherwise.}\end{array}\right.
\end{align}
\end{definition}

\begin{proposition}\label{prop:stab_dt}
Suppose
\begin{align}
    M_1&=\frac{1}{|S|}\sum_{S\in P\mathcal{S}} S\\
M_2&=\frac{1}{|S|}\sum_{S\in P\mathcal{S}}\omega_S S
\end{align} for any coset $P\mathcal{S}$ of Pauli operator $P$ and $\omega_S\in \mathbb{C}$. Let $\wedge_{\rm all}$ be the set of self-contractions that reduce an even rank tensor enumerator to a scalar, then 
\be\wedge_{\rm all}^{DT} \mathbf{A}^{J}(\mathbf{u};M_1,M_2)\propto A(\mathbf{u};\wedge_{\rm all} M_1,\wedge_{\rm all} M_2)\ee
and similarly for $\mathbf{B}$. The same holds if the forms of $M_1,M_2$ are switched.
\end{proposition}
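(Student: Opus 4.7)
The strategy is to show that for $M_1, M_2$ of the stated stabilizer coset form, the full self-trace $\wedge_{\rm all}$ and the diagonal self-trace $\wedge_{\rm all}^{DT}$ applied to $\mathbf{A}^{(J)}(\mathbf{u};M_1,M_2)$ produce scalars that differ only by an overall multiplicative constant. Combining this with Theorem~\ref{thm:tensortrace1}, which identifies $\wedge_{\rm all}\mathbf{A}^{(J)}$ with the usual scalar enumerator $A(\mathbf{u};\wedge_{\rm all}M_1,\wedge_{\rm all}M_2)$, immediately yields the proposition.

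First I would substitute the explicit forms of $M_1, M_2$ into the tensor enumerator coefficients. Expanding $M_1 = \frac{1}{|\mathcal{S}|}\sum_{g \in P\mathcal{S}} g$, the trace $\Tr((E \otimes_J F) M_1)$ vanishes unless $E \otimes_J F$ lies in the Pauli coset $(P\mathcal{S})^{\dagger}$, in which case it returns a fixed constant. Analogously $\Tr((\bar{E} \otimes_J F)^{\dagger} M_2)$ vanishes unless $\bar{E} \otimes_J F \in P\mathcal{S}$, and otherwise returns a constant times the phase $\omega_{\bar{E} \otimes_J F}$. Both conditions can be met simultaneously only when $E \bar{E}^{\dagger}$, extended by identity on the complement $\bar{J}$, lies in $\mathcal{S}$. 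Writing $N_J := \{s \in \mathcal{S} : s|_{\bar{J}} = I\}$ for the subgroup of stabilizers supported on $J$, the off-diagonal support of $\mathbf{A}^{(J)}$ is thus parametrized by $\bar{E} = (s|_J) E$ with $s \in N_J$.

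Next I would compare the two self-traces. Both $\wedge_{\rm all}$ and $\wedge_{\rm all}^{DT}$ impose the matching conditions $E_j = E_k^*$ and $\bar{E}_j = \bar{E}_k^*$ at each contracted pair $\{j,k\}$; the diagonal variant additionally enforces $E_j = \bar{E}_j$ at every leg, reducing the effective bond dimension from $q^4$ to $q^2$. A surviving off-diagonal pair $(E, (s|_J) E)$ in the full trace requires that $s|_J$ itself respects the contraction pattern, carving out a subgroup $N'_J \subseteq N_J$. For each such $s$, the off-diagonal coefficient satisfies $A^{(J)}_{E, (s|_J) E}(\mathbf{u}) = \omega_s \cdot A^{(J)}_{E, E}(\mathbf{u})$, using that $\omega$ behaves as a group character on $\mathcal{S}$ in the intended applications (for instance, in coset enumerators $\omega_S = \pm 1$ records the commutator of $S$ with the fixed Pauli error defining the coset). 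Summing over $s \in N'_J$ yields a single overall constant $C = \sum_{s \in N'_J} \omega_s$ times the diagonal trace.

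The main technical obstacle is the phase bookkeeping: verifying that the $\omega_S$ factors in $M_2$ combine across off-diagonal contributions into a single scalar constant, which hinges on the character-like structure of $\omega$ for the stabilizer and coset applications targeted in the paper. The corresponding statement for $\mathbf{B}^{(J)}$ then follows from the same argument after replacing the disconnected product of two traces by the single combined trace in the definition, since the stabilizer coset structure constrains $(E, \bar{E})$ in identical fashion. The ``switched'' case with the roles of $M_1$ and $M_2$ exchanged is handled by symmetry of the construction.
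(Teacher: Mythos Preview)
Your high-level plan for the $A$-type is on the right track, but it relies on an extra hypothesis the proposition does not make: you need $\omega$ to be multiplicative (a character) in order to get $A^{(J)}_{E,(s|_J)E}=\omega_s\,A^{(J)}_{E,E}$, and you flag this yourself as the ``main technical obstacle.'' The paper avoids this assumption entirely by exploiting the other operator instead. Since $M_1$ is an \emph{unweighted} sum over the coset, the $A$-type coefficient factorizes as a product of two independent single-operator traces; in the two-leg case the full trace over $(G,\bar G)$ becomes $\bigl(\sum_{G\in\mathcal G}c_1\bigr)\bigl(\sum_{\bar G\in\mathcal G}c_2\,\omega_{\bar G\otimes\bar G^*\otimes F}\bigr)$ while the diagonal trace is $\sum_{G\in\mathcal G}c_1c_2\,\omega_{G\otimes G^*\otimes F}$, and these differ by exactly the factor $|\mathcal G|$ regardless of the values of $\omega_S$. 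The only remaining check is that $|\mathcal G|$ is the same for every $F$ with nonempty $\mathcal G$, which follows because $\mathcal G$ is a coset of the fixed subgroup $\{P:P\otimes P^*\in\mathcal S\}$. An induction on the number of leg-pairs then finishes the $A$-type case. So the paper's route is both simpler and covers the proposition as stated; your route would only prove a weaker statement.

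Your treatment of the $B$-type case is a genuine gap. The $B$-coefficient is a single trace $\Tr[(E\otimes_J F)M_1(\bar E\otimes_J F)^\dagger M_2]$, which does \emph{not} factor across $E$ and $\bar E$; the commutation phases between $S_1\in P\mathcal S$ and $(\bar E\otimes_J F)^\dagger$ enter the sum in an $F$-dependent way, so neither your off-diagonal parametrization nor the uniform-$M_1$ trick transfers ``in identical fashion.'' The paper does not attempt a direct argument here at all: it instead observes that the diagonal trace is linear and commutes with the generalized Wigner (MacWilliams) transform, so the $B$-type statement follows from the already-established $A$-type statement by transforming. You should use that route rather than claim the $A$-type argument carries over.
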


\begin{hproof}
The proof is similar to Theorem 7.1 of \cite{CL2022}. Let us begin with the case where there are only two open legs in the tensor enumerator. It is clear that 
\begin{equation}\label{eqn:tracerel}
    \Tr[(G\otimes G^{*}\otimes F) M_i] \ne 0
\end{equation}
if and only if $G\otimes G^{*}\otimes F$ is a coset element.

Suppose $\mathcal{G}$ is the set of all $G$s for which the  trace (\ref{eqn:tracerel}) is nonzero, then 

\begin{align}\label{al:b15}
    \begin{split}&\sum_{G,\bar{G}\in\mathcal{E}}\Tr[(G\otimes G^{*}\otimes F) M_1]\Tr[(\bar{G}\otimes \bar{G}^{*}\otimes F)^{\dagger}  M_2] \\
    = &|\mathcal{G}|\sum_G\Tr[(G\otimes G^{*}\otimes F) M_1]\Tr[(G\otimes G^{*}\otimes F)^{\dagger} M_2]
    \end{split}
\end{align}
which is proportional to the diagonal trace $\wedge^{DT}$. We can see this by the following. For each $G\in \mathcal{G}$, we sum over $\bar{G}$, which leads to some constant $\propto\sum_S\omega_S$. Repeating for each $G$, we simply get back the same constant $|\mathcal{G}|$ times. If we only sum over the diagonal terms with $G=\bar{G}$, then we obtain the constant $\propto\sum_S\omega_S$ once. This only works because one of $M_1,M_2$ is an equal superposition of Pauli operators.

Furthermore, note that for the $F$ in (\ref{al:b15}), each $\bar{G}\in \mathcal{G}, \bar{G}=PG, P\ne I$, it is clear that $P\otimes P^{*}$ is a stabilizer of the code. Therefore, for any other $F$ such that $G\otimes G^{*}\otimes F\in P\mathcal{S}$, it must follow that $(P\otimes P^{*}\otimes I)(G\otimes G^{*}\otimes F)= \bar{G}\otimes \bar{G}^{*}\otimes F\in P\mathcal{S}$ for each $\bar{G}\in\mathcal{G}$. Therefore, the overcounting is identical for all $F$s by a factor of $|\mathcal{G}|$. Therefore the diagonal elements contain sufficient information to reproduce the scalar enumerator.

For any tensor enumerator with four open legs that needs two self-traces on two pairs $a_0$ and  $a_1$. From above arguments we know that a full trace on $a_1$ followed by diagonal trace on $a_0$ produces the correct scalar enumerator. Therefore it is sufficient to show that a diagonal trace on $a_1$ produce the correct diagonal elements for the pair $a_0$. Let $E$ denote the Pauli for open legs associated with pair $a_0$. Under a full trace on $a_1$, the diagonal elements of the remaining tensor then come from coefficients of the form 
\begin{align}
\sum_{G,\bar{G}\in\mathcal{E}}&\Tr[(G\otimes G^{*}\otimes E\otimes F) M_1]\\\nonumber
&\times\Tr[(\bar{G}\otimes \bar{G}^{*}\otimes E \otimes F)^{\dagger}  M_2]
\end{align}
where the sum comes from tracing over the legs of $a_1$. We notice that the same argument above applies by
setting $E\otimes F\rightarrow F$ since $F$ is arbitrary. Hence we conclude that the full trace on $a_1$ produce the same diagonal elements on $a_0$ as a diagonal trace up to a constant multiple. Proceed inductively with $2k$ open legs, it is clear that the diagonal components are sufficient for generating the scalar weight enumerators. 

To show that the B type enumerator is also correctly produced via diagonal trace, recall that diagonal trace is linear and commutes with the generalized Wigner transform as shown in the proof of Prop. VI.1, it can also be generated with only diagonal trace operations. 

\end{hproof}

Therefore, for practical analysis of Pauli stabilizer codes, we only need to consider the reduced tensor enumerators, that is, restricting to the diagonal elements $E=\bar{E}$ of each tensor enumerator in Definition 4.2 of \cite{CL2022}.

The same proof does not apply for tracing generalized enumerators $\tilde{\mathbf{A}},\tilde{\mathbf{B}}$ because two separate sums are required separately for $F$ and $\bar{F}$ whereas the argument is valid only when $F=\bar{F}$. Therefore we have to perform a full tensor trace even for stabilizer codes. Such is needed to analyze more general error channels like coherent noise.

\section{Tensor-only Implementation}\label{app:tensoronly}
\subsection{Multi-linear formulation}
Although the enumerator polynomials can be implemented symbolically, it is also possible to rephrase them purely as tensors with complex coefficients. 

For each tensor enumerator, one can take the coefficients in the polynomial, as a tensorial object by itself. For example, for Shor-Laflamme enumerators, coefficient $A_d$ in the scalar enumerator can be treated as a rank-1 tensor with bond dimension $\leq n+1$. Similarly, $A^j_d$ in vector enumerator has two indices where $j$ marks a bond dimension $q^4$ index (or $q^2$ in reduced enumerators) and $d$ has bond dimension $\leq n+1$. Generally, a(n) (abstract) tensor enumerators can be represented by the tensor components
$$
    A^{j_1,\dots,j_i}_{d_{\mathbf{u}}} ~\mathrm{and}~ 
    B^{j_1,\dots,j_i}_{d_{\mathbf{u}}},
$$
where $d_{\mathbf{u}}$ can be $l$-tuple of indices that tracks the powers of the monomials. For instance, for complete enumerator, $d_{\mathbf{u}}\rightarrow (d_x,d_y,d_z)$. For now, let us focus on reduced enumerators over $q=2$ where the upper and lower indices carry no additional physical meaning. To avoid clutter, let us also drop the subscript of $d_{\mathbf{u}}$. For concreteness one can take $d$ to be the usual operator weight, but it is straightforward to restore it to the most general form.

In a tensor network, instead of tracing the tensor enumerator polynomials, we now trace together these tensors. However, we need an additional operation on the two legs $d_1,d_2$ that add the powers of the monomials during polynomial multiplication. 

\begin{align}\label{al:tensortrace}
\begin{split}
  &A^{j_{l+1},\dots,j_i,r_{l+1},r_k}_d\\
  &=\sum_{d_1,d_2}^{n_1,n_2}M_{d}^{d_1d_2}\sum_{j_1,j_2,\dots, j_l}A^{j_1,j_2,\dots, j_l,\dots j_i}_{d_1}A^{j_1,j_2,\dots j_l, r_{l+1}\dots r_k}_{d_2}
  \end{split}
\end{align}
where $M_{d}^{d_1d_2}$ is a tensor such that 
\begin{equation}
M_{d}^{d_1d_2}=
    \begin{cases}
     1~\textrm{if}~d=d_1+d_2\\
     0 ~\textrm{else}.
    \end{cases}
\end{equation}
On the formalism level, this trace with $M$ tensor can be completed at any time. In practice, however, we perform such an operation every time a tensor trace like (\ref{al:tensortrace}) is completed. 

The modified trace operation $\tilde{\Tr}$ that reduces the tensor rank  can also be performed by contracting another rank-3 tensor
\begin{equation}
    T_{d'dj}= \begin{cases}
    \delta_{d'd}~~~\mathrm{if}~j=0\\
    \delta_{d'd-1}~~\mathrm{else}.
    \end{cases}
\end{equation}
\begin{figure}
    \centering
    \includegraphics[width=0.4\linewidth]{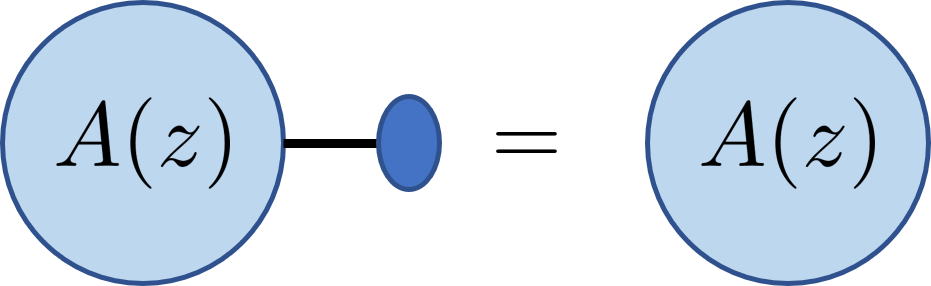}
    \caption{Modified trace operation.}
    \label{fig:vwep_bot}
\end{figure}

For example, to recover the scalar enumerator from a vector enumerator (Fig.~\ref{fig:vwep_bot}), we use $A_{d'}=T_{d'dj}A^{j}_d$, where repeated indices are summed over.

The method for tracing other tensor enumerators, such as the double and complete tensor weight enumerators, is largely the same. 

For example, the contraction of two reduced double enumerator is

\begin{align}
  \begin{split}&D^{j_{l+1},\dots,j_i,r_{l+1},r_k}_{d^x,d^z}\\
  =&M_{d^x}^{d_1^xd_2^x}M_{d^z}^{d_1^zd_2^z}D^{j_1,j_2,\dots, j_l,\dots j_i}_{d_1^x,d_1^z}D^{j_1,j_2,\dots j_l, r_{l+1}\dots r_k}_{d_2^x,d_2^z}
  \end{split}
\end{align}
where repeated indices are summed. 
The modified trace is 
\begin{equation}
    D_{d_x',d_z'}=\sum_{d_x,d_z,j} T_{d'_x,d'_z,d_x,d_z,j}D_{d_x,d_z}^j
\end{equation}
with 
\begin{equation}
    T_{d'_xd'_zd_{x}d_zj}= \begin{cases}
    \delta_{d_x'd_x}\delta_{d'_zd_z}~~~&\mathrm{if}~j=0\\
    \delta_{d'_x-1d_x}\delta_{d'_zd_z}~~&\mathrm{if}~j=1\\
    \delta_{d'_x-1d_x}\delta_{d'_z-1d_z}~~&\mathrm{if}~j=2\\
    \delta_{d'_xd_x}\delta_{d'_z-1d_z}~~&\mathrm{if}~j=3\\
    \end{cases}.
\end{equation}

If $\mathbf{u}$ carries more variables, then an additional $M$ contraction is needed for each separate variable index\footnote{In the Matlab code implementation, $j=3$ and $j=2$ are swapped in the indexing convention such that $Y$ is mapped to the last index and $Z$ is mapped to the second last.}.  

For the full tensor enumerator polynomial, one has to take extra care of potential sign changes where we have $I\leftrightarrow I, X\leftrightarrow X,Z\leftrightarrow Z$ but $-Y\leftrightarrow Y$ matchings. Suppressing the $d$ index for now, we can think of each tensor enumerator index in a representation $A^{j}\rightarrow A^{\alpha,\bar{\alpha}}$ with $\alpha = 1,\dots, q^2=4$. Furthermore, we prepare the Minkowski metric $\eta_{\alpha,\beta}=diag(1,1,-1,1)$  so that tensor contractions are only performed between upper and lower indices. Indices are raised and lowered in the usual way with $A_{\beta\bar{\beta}}=\eta_{\alpha\beta}\eta_{\bar{\alpha}\bar{\beta}}A^{\alpha\bar{\alpha}}$ and contracting the two vector enumerators is by contracting the covariant vector with the contravariant one, i.e., $A^{\alpha\bar{\alpha}}A'_{\alpha\bar{\alpha}}$. We see the raised or lowered index does not matter for reduced enumerators because the diagonal elements for $\eta_{\alpha\beta}\eta_{\bar{\alpha}\bar{\beta}}$ at $\alpha=\bar{\alpha}, \beta=\bar{\beta}$ only carry positive signs.

\subsection{MacWilliams Identity as a linear transformation}

We derive the matrix representation of MacWilliams identities in the polynomial basis $\{z^dw^{n-d} : 0 \leq d  \leq n\}$ to facilitate Matlab numerics.

By Corollary 5 of \cite{rains1},
\begin{align}\begin{split}
A^{\prime}(w, z) = A((w  + z)/q, z/q), \\
B^{\prime}(w, z) = B((w  + z)/q, z/q).
\end{split}
\end{align}
By Theorem 3 of \cite{rains1}, $A^{\prime}(w, z) = B^{\prime}(z, w)$ and  $A^{\prime}_{d} = B^{\prime}_{n-d}$, which is equivalent to the quantum MacWilliams identity (Theorem 7 of \cite{rains1}):
\begin{align}
B(w, z) = A\left(\tfrac{w + (q^2 -  1)z}{q}, \tfrac{w-z}{q}\right).
\end{align}
We chose to work with $A^{\prime}$ and $B^{\prime}$ because the relation $A^{\prime}_{d} = B^{\prime}_{n-d}$ can be easily expressed by an anti-diagonal matrix with every element equal to 1 in the polynomial basis $\{z^dw^{n-d} : 0 \leq d  \leq n\}$.

To express $B_{d}$ in terms of $A_{d}$, we only need to express $A_{d}$ in terms of $A^{\prime}_{d}$, and $B^{\prime}_{d}$ in terms of $B_{d}$. By Corollary 5 of \cite{rains1},
\begin{align}\begin{split}
&A^{\prime}(w, z) \\
&= \sum_{d = 0}^{n} A_{d}\left(\frac{z}{q}\right)^{d} \left(w + \frac{z}{q}\right)^{n-d} \\
&= \sum_{d = 0}^{n} A_{d} \left(\frac{z}{q}\right)^{d} \left( \sum_{e = 0}^{n - d} {n-d \choose e} \left(\frac{z}{q}\right)^{n-d-e} w^{e} \right) \\
&= \sum_{d = 0}^{n} \sum_{e = 0}^{n - d} A_{d} {n-d \choose e} \left(\frac{z}{q}\right)^{d}  \left(\frac{z}{q}\right)^{n-d-e} w^{e} \\
&= \sum_{d = 0}^{n} \sum_{e = 0}^{n - d} A_{d} {n-d \choose e} \left(\frac{z}{q}\right)^{n-e} w^{e}  \\
&= \sum_{e = 0}^{n} \sum_{d = 0}^{n-e} A_{d} {n-d \choose e} \left(\frac{z}{q}\right)^{n-e} w^{e} \\
&= \sum_{e = 0}^{n} \left( \sum_{d = 0}^{n-e} A_{d} {n-d \choose e} \right) \left(\frac{z}{q}\right)^{n-e} w^{e} \\
&= \sum_{e^{\prime} = 0}^{n} \left( \sum_{d = 0}^{e^{\prime}} A_{d} {n-d \choose n-e^{\prime}} \right) \left(\frac{z}{q}\right)^{e^{\prime}} w^{n-e^{\prime}} \\
&= \sum_{d = 0}^{n} \left( \frac{1}{q^d} \sum_{m = 0}^{d} A_{m} {n-m \choose n-d} \right) z^{d} w^{n-d}.
\end{split}
\end{align}
Since
\begin{align}
A^{\prime}(w, z) = \sum_{d = 0}^{n} A^{\prime}_{d} z^{d} w^{n-d},
\end{align}
we have
\begin{align}
A^{\prime}_{d} = \frac{1}{q^d} \sum_{m = 0}^{d} A_{m} {n-m \choose n-d}, 0 \leq d \leq  n.
\end{align}
In other words,
\begin{align}
A^{\prime}_{d} = \sum_{m = 0}^{d} T_{dm} A_{m},
\end{align}
where
\begin{align}
T_{dm} = \frac{1}{q^d} {n-m \choose n-d}, 0 \leq m \leq  d, 0 \leq d \leq  n.
\end{align}
Similarly, $B^{\prime}(w, z) = B(w  + z/q, z/q)$ implies that
\begin{align}
B^{\prime}_{d} = \sum_{m = 0}^{d} T_{dm} B_{m}.
\end{align}
Hence
\begin{align}
A_{d} &= \sum_{d^{\prime} = 0}^{n} (T^{-1} J T)_{dd^{\prime}} B_{d^{\prime}}, 0 \leq d, d^{\prime} \leq  n,
\end{align}
where 
\begin{align*}
J = \begin{pmatrix}
    0 & \cdots & 0 & 1 \\
    0 & \cdots & 1 & 0 \\
    \vdots & \ddots & \vdots & \vdots \\
    1 & \cdots & 0 & 0
    \end{pmatrix}_{(n+1) \times (n+1)} \\
T = \begin{pmatrix}
    {n \choose n} & 0 & \cdots & 0 \\
    \frac{1}{q} {n \choose n-1} &  \frac{1}{q} {n-1 \choose n-1} & \cdots & 0 \\
    \vdots & \ddots & \vdots & \vdots \\
      \frac{1}{q^n} {n \choose 0} &  \frac{1}{q^n} {n-1 \choose 0} & \cdots &  \frac{1}{q^n} {0 \choose 0}
    \end{pmatrix}_{(n+1) \times (n+1)}
\end{align*}
Similarly,
\begin{align}
B_{d} &= \sum_{d^{\prime} = 0}^{n} (T^{-1} J T)_{dd^{\prime}} A_{d^{\prime}}, 0 \leq d, d^{\prime} \leq  n,
\end{align}
because
\begin{align}
A = T^{-1}JTB \Longleftrightarrow B = T^{-1}JTA.
\end{align}

\subsection{Connection with Farrelly, Tuckett and Stace}\label{subapp:LTNC}
Ref. \cite{LTNC} proposed a method to compute distance in local tensor network codes, which are qubit stabilizer codes obtained from contracting other smaller stabilizer codes in a manner similar to \cite{CL2021}. In particular, we see that when applied to an $[[n,k]]$ stabilizer code, the tensor $C_{w}^{l_1,\dots,l_k}$ in \cite{LTNC} is exactly a reduced tensor enumerator in the multi-linear form  where $w$ is precisely the degree of the monomial and $l_i=0,1,2,3$ are the open indices that track the Pauli type $I,X,Y,Z$. This corresponds to computing the tensor weight enumerator $B(z)^{l_1,\dots,l_k}$ if we keep all the logical legs open. Similarly, $C_{w}^{0,\dots,0}$ is the coefficient of an $A$-type tensor enumerator. Indeed, $D_w$ which is obtainable from $C_{w}^{l_1,\dots,l_k}$ is precisely the tensor coefficients of the scalar enumerators $B_d-A_d$. Both enumerators are in the usual Shor-Laflamme form.

Although both approaches rely on the tensor network method to produce weight distributions, the detailed construction differs somewhat in how the tensors in the network is implemented --- we construct the enumerator from encoding maps while \cite{LTNC} directly enumerate the logical operators of an encoding tensor and then compute their weights by contracting with another weight tensor that tabulates $4^n$ Pauli weights.  
Ref. \cite{LTNC} also produces a tensor network (See Fig 4d) with a double bond on each contraction where each edge has bond dimension $4$. Naively this appears to lead to bond dimension $16$ objects (Fig 4b,c). While such a description is sufficient, we know from the tensor enumerator formalism that it is possible to obtain the enumerator for Pauli stabilizer codes with a reduced bond dimension $4$ (Thm~\ref{thm:refined_dist}), hence enabling more efficient tensor contractions.

It is unclear how the complexity estimates for these methods compare, as none was performed for \cite{LTNC} except for 1d codes that are prepared by log depth circuits. However, given the similarities in their structure and their overall efficiency for tree tensor networks and holographic codes, they should be polynomially equivalent in that regime. In practice, however, we note that even a constant factor difference can be quite substantial. Therefore, a more in-depth comparison in their performance can be an interesting problem for future work. In particular, it is important to understand whether these methods are optimal with respect to different networks.

Although Ref.\cite{LTNC} does not discuss weight distributions for other error models, it is possible to adapt their formalism to produce double and complete weight enumerators by modifying their weight tensor. For example, to obtain the double enumerator, one can replace $W_w^{g_1,\dots, g_m}$ with $W_{w_x,w_z}^{g_1,\dots, g_m}$ such that the tensor coefficient is unity when a Pauli string $\sigma^{g_1}\otimes \dots\otimes \sigma^{g_m}$ has X and Z weights $w_x,w_z$ respectively. A similar extension should be possible for abstract enumerators. It is currently unclear whether a similar extension is possible for generalized abstract enumerators.

Another key difference lies in the use of MacWilliams transform in our work, which is polynomial\footnote{The complexity roughly scales as $O(n^3)$ from matrix multiplication. However, this is not counting the cost needed to manipulate large integers.}  in $n$. Generally, the MacWilliams identity can help reduce computational cost. When the tensor network is efficiently contractible, and when we overlook the cost in manipulating large integers, the difference of keeping $B$ vs $A$ type tensor enumerators should be relatively insignificant. However, when the minimal cuts are large, e.g. when the tensor network represents a volume law or even some area law states, the $B$ type tensor can become far more populated than the $A$ type by as much as $O(e^k)$. For instance, in the limiting case where the cost of tensor network contraction approaches the that of the brute force method\footnote{Intuitively, computing $B$ enumerators of an $[[n,k]]$ stabilizer code involves enumerating $2^{n+k}$ elements instead of $2^{n-k}$. }, e.g. in random stabilizer codes, we see that $B$ is $2^{2k}$ more expensive to compute compared to $A$. Hence in some instances, the MacWilliams identities can help reduce computational cost that is exponential in $k$.

The decoder \cite{LTNC} uses is formally similar to the usual tensor network decoder where error probability is computed for some fixed $p$ using a tensor contraction whereas the enumerator method produces an analytical expression for the error probability as a function of $\mathbf{u}$. The former is computationally advantageous when the error probabilities are heavily inhomogeneous and carry a strong locational dependence. The latter is more powerful for obtaining a continuous range of error probabilities when the physical errors are relatively uniform across the entire system.

Overall, the formalism based on tensor weight enumerator is more general as it applies to all quantum codes with uniform local dimensions. When specialized to the case of Pauli stabilizer codes over qubits, both methods can compute scalar and tensor enumerators associated with the code using tensor network methods. In this case, our method improves upon \cite{LTNC} with a reduced bond dimension and with the use of the MacWilliams identities. The former provides a polynomial speed up while the latter can provide an $O(e^k)$ speed up in some regimes. With the extension to biased error and general noise models, we also extend the maximum likelihood decoders for such stabilizer codes to general error channels. However, the enumerator method is less efficient in tackling highly inhomogeneous errors.
 
\section{Error detection for general noise channels}\label{app:errorchannel}
\subsection{Non-detectable error}
\textit{Proof for Theorem \ref{thm:gen_err}}: Let us compute here the probability of incurring a non-correctable (logical) error. Suppose the error channel is $\mathcal{E}(\rho)$, which can be written as the Kraus form in Theorem \ref{thm:gen_err} being the tensor product of single site errors. 
Let the initial state be $\rho=|\tilde{\psi}\rangle\langle \tilde{\psi}|\in L(\mathcal{C})$, and $\dim \mathcal{C}=K$. Then the probability of a non-detectable error is 
\begin{align}
    p_{nd}(\rho) &= \Tr[(\Pi-\rho)\Pi \mathcal{E}(\rho)\Pi ] \\\nonumber
    &= \sum_{\mathbf{i}}||(I-|\tilde{\psi}\rangle\langle\tilde{\psi}|)\Pi \mathcal{K}_{\mathbf{i}}|\tilde{\psi}\rangle||^2,
\end{align}
where $\Pi$ is the projection onto the code subspace. It is simply the overlap between the error state and the part of the code subspace that is orthogonal to the original codeword. 

Now averaging over all initial codewords $|\tilde{\psi}\rangle$ with respect to the normalized uniform measure $\mu(|\tilde{\psi}\rangle)$, we have 

\begin{align}\label{al:Dintegral}
\begin{split}
    {p}_{nd} &=\sum_{\mathbf{i}} \int_{\tilde{|\psi\rangle}\in\mathcal{C}}||(I-|\tilde{\psi}\rangle\langle\tilde{\psi}|)\Pi \mathcal{K}_{\mathbf{i}} |\tilde{\psi}\rangle||^2 d\mu(|\tilde{\psi}\rangle)\\
    &=\sum_{\mathbf{i}}\int_{\tilde{|\psi\rangle}\in\mathcal{C}} \langle\tilde{\psi}|\mathcal{K}_{\mathbf{i}}^{\dagger}\Pi(I-|\tilde{\psi}\rangle\langle\tilde{\psi}|)\\
    &\quad \times(I-|\tilde{\psi}\rangle\langle\tilde{\psi}|)\Pi \mathcal{K}_{\mathbf{i}}|\tilde{\psi}\rangle d\mu(\tilde{\psi})\\
    &=\sum_{\mathbf{i}}\int_{\tilde{|\psi\rangle}\in\mathcal{C}} \langle\tilde{\psi}|\mathcal{K}_{\mathbf{i}}^{\dagger}\Pi \mathcal{K}_{\mathbf{i}}|\tilde{\psi}\rangle d\mu(|\tilde{\psi}\rangle)\\
    &\quad -\sum_{\mathbf{i}} \int_{\tilde{|\psi\rangle}\in\mathcal{C}} \langle\tilde{\psi}|\mathcal{K}_{\mathbf{i}}^{\dagger}\Pi|\tilde{\psi}\rangle\langle\tilde{\psi}|\Pi \mathcal{K}_{\mathbf{i}}|\tilde{\psi}\rangle d\mu(|\tilde{\psi}\rangle)
    \end{split}
\end{align}
Similar to \cite{err_det}, the integral in (\ref{al:Dintegral}) can be evaluated.  
The first term is
\begin{align}
    \begin{split}&\sum_{\mathbf{i}}\int_{\tilde{|\psi\rangle}\in\mathcal{C}} \langle\tilde{\psi}|\mathcal{K}_{\mathbf{i}}^{\dagger}\Pi\mathcal{K}_{\mathbf{i}} |\tilde{\psi}\rangle d\mu(|\tilde{\psi}\rangle)\\
    &\quad = \sum_{\mathbf{i}}\Tr[\mathcal{K}_{\mathbf{i}}^{\dagger}\Pi \mathcal{K}_{\mathbf{i}}\int_{\tilde{|\psi\rangle}\in\mathcal{C}}|\tilde{\psi}\rangle\langle\tilde{\psi}|dv(|\tilde{\psi}\rangle)]  \\
    &\quad = \frac 1 K\sum_{\mathbf{i}}\Tr[\mathcal{K}_{\mathbf{i}}^{\dagger}\Pi \mathcal{K}_{\mathbf{i}}\Pi],
    \end{split}
\end{align}
where we use Lemma 7 in \cite{err_det} for the last step, which evaluates the integral.

The second term is
\begin{align}
    \begin{split}
&\sum_{\mathbf{i}}\int_{\tilde{|\psi\rangle}\in\mathcal{C}} \langle\tilde{\psi}|\mathcal{K}_{\mathbf{i}}^{\dagger}\Pi|\tilde{\psi}\rangle\langle\tilde{\psi}|\Pi \mathcal{K}_{\mathbf{i}}|\tilde{\psi}\rangle d\mu(|\tilde{\psi}\rangle)\\
    &\quad=\frac{1}{K(K+1)}(\sum_{\mathbf{i}}\Tr[ \mathcal{K}_{\mathbf{i}}^{\dagger}\Pi \mathcal{K}_{\mathbf{i}}\Pi]\\
    &\qquad +\sum_{\mathbf{i}}\Tr[\mathcal{K}_{\mathbf{i}}^{\dagger}\Pi]\Tr[\mathcal{K}_{\mathbf{i}}\Pi])
    \end{split}
\end{align}
where we integrate over the same measure and use Lemma 8 in \cite{err_det}. This completes our proof for Theorem \ref{thm:gen_err}.

\subsection{Errors with non-trivial syndromes}
\textit{Proof of Theorem \ref{thm:coset} and decoder}: When $\mathcal{C}$ is a stabilizer code, we can talk about syndrome measurements and decoding in the usual sense. While $\Pi$ denotes the projection onto the code subspace, i.e., measuring trivial syndromes, we can similarly ask what the probability is for measuring some other syndromes $s$ where the state is taken to a subspace $\Pi_s = E_s \Pi E_s^{\dagger}$, where $E_s$ is an error with syndrome $s$.

Again, this is given by the overlap between the state suffering from the error and some final state in the error subspace. 

\begin{align}
\begin{split}
    \bar{p}_{s}& =\sum_{\mathbf{i}} \int_{|\tilde{\psi}\rangle\in \mathcal{C}} d\mu(|\tilde{\psi}\rangle)\Tr[\Pi_s \mathcal{K}_{\mathbf{i}}\rho \mathcal{K}_{\mathbf{i}}^{\dagger} \Pi_s]\\
    &= \sum_{\mathbf{i}}\int_{|\tilde{\psi}\rangle\in \mathcal{C}} d\mu(|\tilde{\psi}\rangle)\Tr[E_s\Pi E_s^{\dagger} \mathcal{K}_{\mathbf{i}}\rho \mathcal{K}_{\mathbf{i}}^{\dagger} E_s\Pi E_s^{\dagger}]\\
    &=  \sum_{\mathbf{i}}\Tr[\int_{|\tilde{\psi}\rangle\in \mathcal{C}} d\mu(|\tilde{\psi}\rangle) |\tilde{\psi}\rangle\langle\tilde{\psi}|\mathcal{K}_{\mathbf{i}}^{\dagger}\Pi_s \mathcal{K}_{\mathbf{i}}]\\
    &= \frac 1 K \sum_{\mathbf{i}}\Tr[\Pi \mathcal{K}_{\mathbf{i}}^{\dagger} \Pi_s \mathcal{K}_{\mathbf{i}}].
    \end{split}
\end{align}
Therefore, this quantity can be easily obtained from the $B$-type generalized complete enumerator when we replace one of $\Pi$ by $\Pi_s$. Note that this recovers the syndrome probability with Pauli errors using the coset enumerator. 

It is also useful for decoding purposes to consider the probability $p(\bar{L}|s)$ so as to correct the most likely logical error. 
The probability that $\tilde{E}_s=E_s\tilde{L}$ occurs is
\begin{align}
    \bar{p}(\tilde{L}\cap s)& =   \sum_{\mathbf{i}}\int_{|\tilde{\psi}\rangle\in \mathcal{C}} d\mu(|\tilde{\psi}\rangle)\\\nonumber
    &\times\Tr[\tilde{E}_s|\tilde{\psi}\rangle\langle\tilde{\psi}|\tilde{E}_s^{\dagger}\Pi_s \mathcal{K}_{\mathbf{i}}|\tilde{\psi}\rangle\langle\tilde{\psi}|\mathcal{K}_{\mathbf{i}}^{\dagger}\Pi_s]\\\nonumber
    &= \sum_{\mathbf{i}} \int_{|\tilde{\psi}\rangle\in \mathcal{C}} d\mu(|\tilde{\psi}\rangle)\\\nonumber
    &\times(\langle\tilde{\psi}|\tilde{E}_s^{\dagger}\Pi_s \mathcal{K}_{\mathbf{i}}|\tilde{\psi}\rangle)(\langle \tilde{\psi}|\mathcal{K}_{\mathbf{i}}^{\dagger}\Pi_s\tilde{E}_s|\tilde{\psi}\rangle)
\end{align}
Because $\Pi_s\tilde{E}_s|\tilde{\psi}\rangle=\tilde{E}_s|\tilde{\psi}\rangle$.
\begin{align}
    &\bar{p}(\tilde{L}\cap s)\\\nonumber
    = &\sum_{\mathbf{i}} \int_{|\tilde{\psi}\rangle\in \mathcal{C}} d\mu(|\tilde{\psi}\rangle) (\langle\tilde{\psi}|\tilde{E}_{s}^{\dagger}\mathcal{K}_{\mathbf{i}}|\tilde{\psi}\rangle)(\langle \tilde{\psi}|\mathcal{K}_{\mathbf{i}}^{\dagger}\tilde{E}_{s}|\tilde{\psi}\rangle)\\\nonumber
    =&\frac{1}{K(K+1)}\Big(\sum_{\mathbf{i}}\Tr[\tilde{E}_s^{\dagger}\mathcal{K}_{\mathbf{i}} \Pi \mathcal{K}_{\mathbf{i}}^{\dagger} \tilde{E}_s\Pi] \\\nonumber
    &+ \sum_{\mathbf{i}}\Tr[\tilde{E}_s^{\dagger} \mathcal{K}_{\mathbf{i}}\Pi] \Tr[\mathcal{K}_{\mathbf{i}}^{\dagger}\tilde{E}_s\Pi]\Big)\\\nonumber
    =&\frac{1}{K(K+1)}\Big(\sum_{\mathbf{i}}\Tr[\mathcal{K}_{\mathbf{i}} \Pi \mathcal{K}_{\mathbf{i}}^{\dagger} \Pi_s] \\\nonumber
    &+ \sum_{\mathbf{i}}\Tr[\mathcal{K}_{\mathbf{i}}\Pi\tilde{E}_s^{\dagger} ] \Tr[\mathcal{K}_{\mathbf{i}}^{\dagger}\tilde{E}_s\Pi]\Big).
\end{align}

Hence 
\begin{align}    \label{al:log_prob}
    \bar{p}(\tilde{L}|s) &=\bar{p}(\tilde{L}\cap s)/\bar{p}_s \\\nonumber
    &= \frac {1}{K+1} (1+ \frac{\sum_{\mathbf{i}}\Tr[\mathcal{K}_{\mathbf{i}}\Pi\tilde{E}_s^{\dagger} ]\Tr[\mathcal{K}_{\mathbf{i}}^{\dagger}\tilde{E}_s\Pi]}{\sum_{\mathbf{i}}\Tr[\mathcal{K}_{\mathbf{i}} \Pi \mathcal{K}_{\mathbf{i}}^{\dagger} \Pi_s]}).
\end{align}
Each term in (\ref{al:log_prob}) can be computed by setting $M_1, M_2$ to the appropriate values in the weight enumerator $M_1=\Pi\tilde{E}_s^{\dagger}, M_2=\tilde{E}_s\Pi$ for the $\bar{A}$-type enumerator and $M_1=\Pi, M_2=\Pi_s$ for the $\bar{B}$-type enumerator. 

For the purpose of building a decoder, we do not care about the overall normalization, hence computing the $\bar{A}$-type enumerator will be sufficient. The first term in $\bar{p}(\tilde{L}\cap s)$ is independent on the logical operation $\tilde{L}$, and thus does not modify our decision based on the maximum likelihood.

\subsection{General Logical Error Channel}\label{subapp:gen_LEC}
\textit{Non-unitary logical error:} Under this more general channel, it is also natural to consider a more general logical error where for some $\rho_s$ in the error subspace with syndrome $s$ beyond the kind of coherent logical error $\tilde{L}$. For instance, we can discuss the error probability that the logical information suffers from a logical error channel in that subspace 
\begin{equation}
    \tilde{\mathcal{N}}(\tilde{\rho})=\tilde{\rho}\rightarrow \sum_j \tilde{\eta}_j \tilde{\rho}\tilde{\eta}_j^{\dagger},
\end{equation}
after obtaining syndrome $s$ by measuring the checks. More precisely, we find that
\begin{align}
    &\bar{p}(\tilde{\mathcal{N}}(\cdot) \cap s) \\\nonumber
    =&\sum_{\mathbf{i}}\int_{|\tilde{\psi}\rangle\in\mathcal{C}} d\mu(|\tilde{\psi}\rangle)\Tr[E_s\tilde{\mathcal{N}}(\tilde{\rho})E_s^{\dagger} \Pi_s \mathcal{K}_{\mathbf{i}}|\tilde{\psi}\rangle\langle\tilde{\psi}|\mathcal{K}_{\mathbf{i}}^{\dagger}\Pi_s]\\\nonumber
   =& \sum_{\mathbf{i},j}\int_{|\tilde{\psi}\rangle\in\mathcal{C}} d\mu(|\tilde{\psi}\rangle)\Tr[E_s\tilde{\eta}_j|\tilde{\psi}\rangle\langle\tilde{\psi}| \tilde{\eta}^{\dagger}_jE_s^{\dagger}  \mathcal{K}_{\mathbf{i}}|\tilde{\psi}\rangle\langle\tilde{\psi}|\mathcal{K}_{\mathbf{i}}^{\dagger}]\\\nonumber
   =& \sum_{\mathbf{i},j}\int_{|\tilde{\psi}\rangle\in\mathcal{C}} d\mu(|\tilde{\psi}\rangle)\Tr[O_{j\mathbf{i}}^s|\tilde{\psi}\rangle\langle\tilde{\psi}| O_{j\mathbf{i}}^{s\dagger}  |\tilde{\psi}\rangle\langle\tilde{\psi}|]\\\nonumber
   =&\frac{1}{K(K+1)}\sum_{\mathbf{i},j}\Big(\Tr[O_{j,\mathbf{i}}^{s\dagger}\Pi O^s_{j,\mathbf{i}}\Pi]\\\nonumber
   &+\Tr[O_{j,\mathbf{i}}^{s\dagger}\Pi]\Tr[O^s_{j,\mathbf{i}}\Pi]\Big)
\end{align}
where we defined $O^s_{j,\mathbf{i}}=\mathcal{K}_{\mathbf{i}}^{\dagger}E_s\tilde{\eta}_j$.
Note that just like for calculating error probability of syndrome $s$ under depolarizing noise with $B$ type enumerators, the first term can be expressed as a B type and one can perform the sum over $j$ in defining 

\begin{equation}
\Pi^s_{\eta}=\sum_j E_s\tilde{\eta}_j\Pi \tilde{\eta}_j^{\dagger} E_s^{\dagger}\end{equation}

and then substitute and compute the first term as 
\begin{equation}
    \sum_{\mathbf{i}}\Tr[\mathcal{K}_{\mathbf{i}}\Pi \mathcal{K}_{\mathbf{i}}^{\dagger}\Pi^s_{\eta}],
\end{equation}
which is basically identical to our computation of the non-detectable error probability except we set $M_2=\Pi_{\eta}^s$ and the remaining procedures for decomposing $\mathcal{K}_{\mathbf{i}}$ carries over identically.

For the second term, however, we have to repeat the enumerator computations for each $j$ by summing over $\mathbf{i}$. If we set $\tilde{E}_s^j=E_s\tilde{\eta}_j$, then it has the identical form as the coset enumerator we analyzed earlier except for the j dependence,

\begin{equation}
\Tr[\mathcal{K}_{\mathbf{i}}\Pi\tilde{E}_s^{j\dagger}]\Tr[\mathcal{K}_{\mathbf{i}}^{\dagger}\tilde{E}_s^{j}\Pi].
\end{equation} For generic errors, $j=1,\dots,4^k$, so it is more relevant for $k$ small.  

For a fixed error channel, the enumerator fully captures the likelihood of all error channels by decomposing $\tilde{\eta}_j$ into Paulis and varying their coefficients. It could be interesting to analyze the extrema of error probabilities with respect to these variables to find the most likely error channel. We can imagine building a decoder that seeks to undo the effect of the most likely error channel, though it is unclear when such recovery procedures exist in general.  
To correct such errors, we apply first a coset element of $E_s$. Then depending on the availability of the recovery map given the logical error, we (partially) reverse the effect of the logical error channel based on the relevant information of the code and syndrome measurement outcomes. 

\section{Quantum Tanner Graph from Quantum Lego}\label{app:qtanner}
{\color{black}
Given any $[[n,k]]$ stabilizer code whose codewords can be obtained from measuring the check operators and post-selecting on the trivial syndrome outcome, one can express the encoding or state preparation process as a tensor network. 

Consider a measurement-based state preparation process where we entangle all physical qubits with a reference using Bell pairs $|00\rangle+|11\rangle$. Then to apply the checks by measuring them. The measurement process is straightforward --- the physical qubits on which the check has support is entangled with a ancillary qubit using the usual circuit. Then one measures and post-selects on the trivial syndrome. Although the actual preparation of such a state in a quantum computer requires either adaptive measurements, decoding, and/or post-selection, thus rendering the actual process much more complicated, there is no such obstacle in the classical tensor network description where post-selection simply corresponds to contraction of a particular type of tensor. 

\begin{figure}
    \centering
    \includegraphics[width=\linewidth]{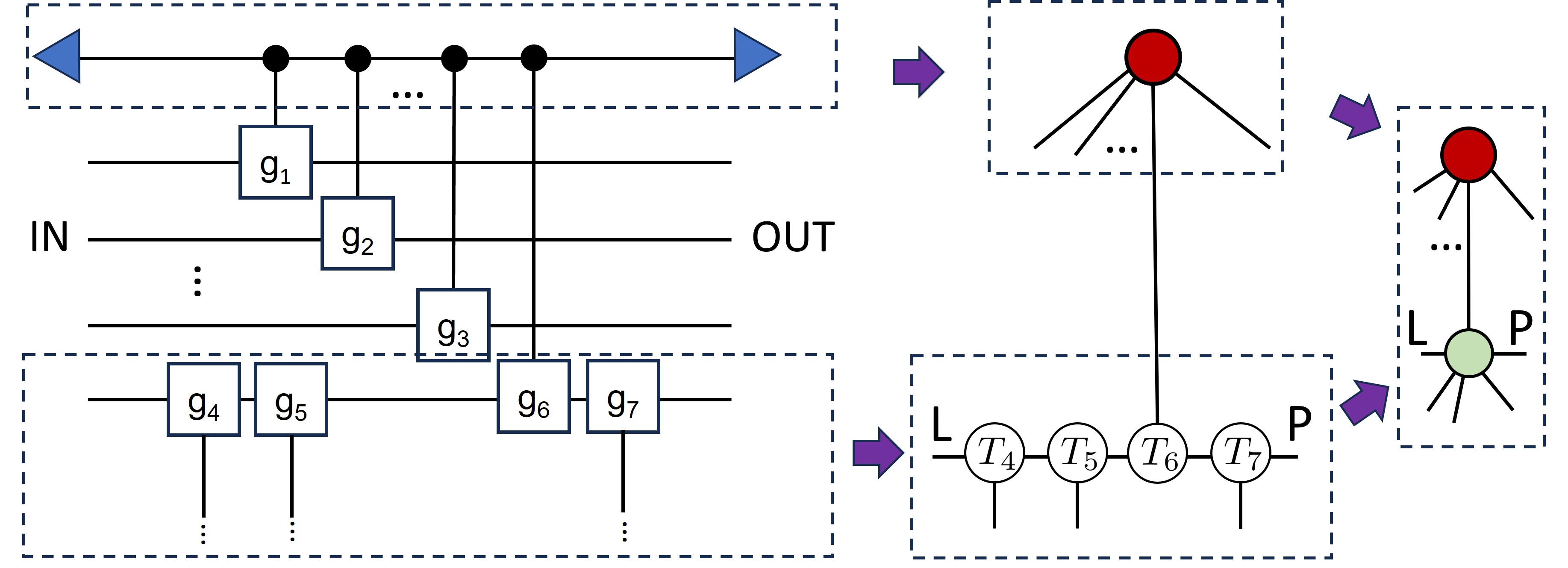}
    \caption{An ancillary is prepared and projected onto $|+\rangle$ (blue triangle). This condenses the ancillary into a check node tensor that is simply the encoding tensor of a repetition code (Z spider). The data qubit condenses into another tensor node (green) by combining the degree 3 tensors $T_i$. Here IN or L labels the input/logical degrees of freedom while OUT or P labels the output/physical degrees of freedom in the final atomic code it produces. The bottom row indicates that the qubit is checked by other checks e.g. $g_i, i=4\sim 7$.}
    \label{fig:mbsp}
\end{figure}

Suppose each check acts with a unitary $g_i$ on the physical qubit (bottom wire of Fig.~\ref{fig:mbsp}), then the action of this gate on the wire can be converted into a tensor (Fig.~\ref{fig:gate_tensor}a). The resulting tensor for common gates used in the preparation process for XP stabilizer codes are also given in Fig.~\ref{fig:gate_tensor}b. Generally, this conversion can be performed for any two qubit controlled gate by choosing the appropriate tensor $T$ in purple.

\begin{figure}
    \centering\includegraphics[width=0.9\linewidth]{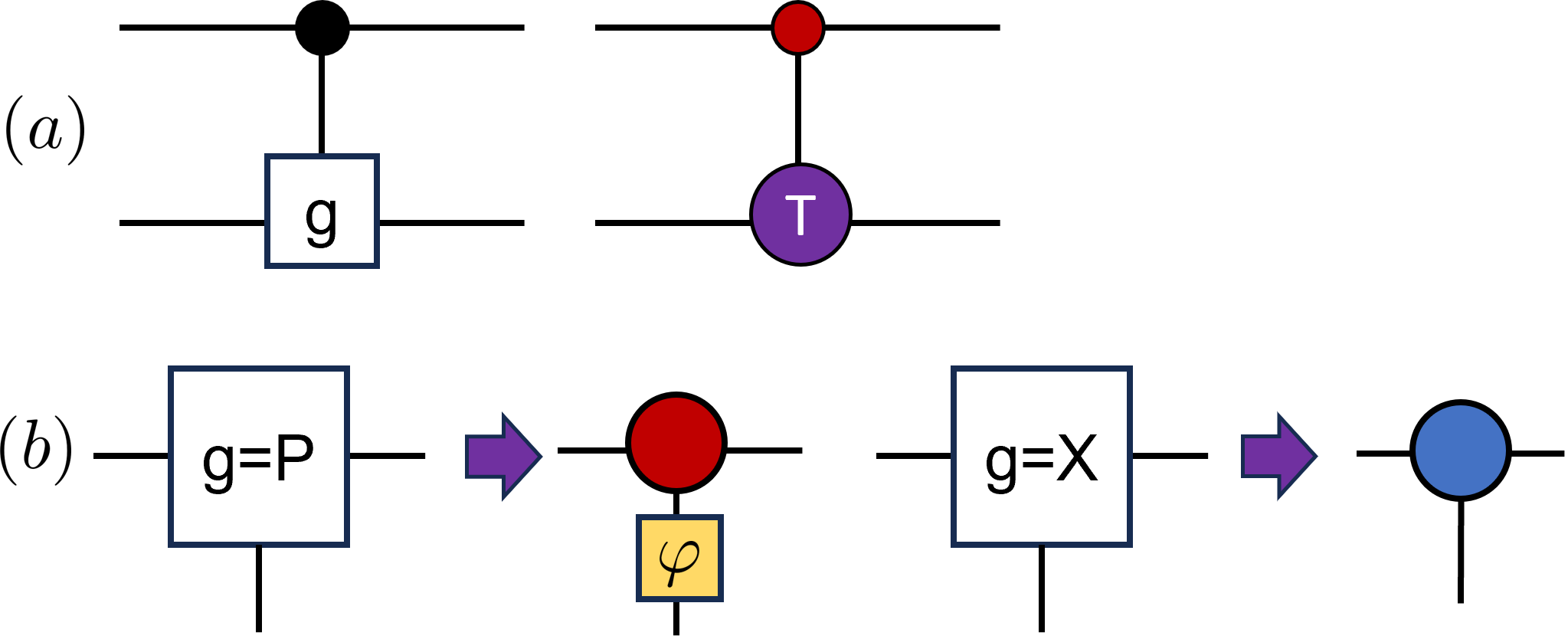}
    \caption{(a) Depending on the nature of the controlled-$g$ gate, the target action can be condensed into a tensor.(b) The action of a controlled-phase gate and that of a controlled X gate can be simplified into the corresponding tensors with $\varphi$ defined below. Red and blue tensors are Z and X spiders respectively.}
    \label{fig:gate_tensor}
\end{figure}

Here $\varphi$ is a tensor with elements
\begin{equation}
    \varphi_{ij} \propto \begin{pmatrix}
        1 & 1\\
        1 & e^{i\varphi}
    \end{pmatrix} 
\end{equation}
where an overall normalization is added as needed. For $\varphi=\pi$ it is the Hadamard gate/tensor.

Note that a code prepared this way has a non-trivial kernel in the encoding map. This is the same for the tensor network we built for the surface code or color code in the main text. 

For CSS codes, without loss of generality, one can perform first the Z checks then the X checks. By suitably subsituting $g$, one can simplify the data nodes (green) into the form of Fig.~\ref{fig:qtanner}c.
}

\bibliography{ref}
\bibliographystyle{unsrt}

\end{document}